\newcommand{\calN}{\mathcal{N}}
\newcommand{\Prob}{\mathbb{P}}
\newcommand{\EE}{\mathbb{E}}
\newcommand{\PP}{\mathbb{P}}
\newcommand{\bx}{{\bf X}}
\newtheorem{theorem}{Theorem}
\newtheorem{proposition}{Proposition}
\newtheorem{lemma}{Lemma}
\newtheorem{remark}{Remark}
\newtheorem{corollary}{Corollary}
\newtheorem{definition}{Definition}
\begin{document}

\title{$\ell_1$-Penalization for Mixture Regression Models}

\author{Nicolas St\"adler, Peter B\"uhlmann and Sara van de Geer\\
\small Seminar for Statistics, ETH Zurich\\[-0.8ex]
\small \small CH-8092 Zurich, Switzerland.\\
\small \texttt{staedler@stat.math.ethz.ch}}

\date{} 

\maketitle

\begin{abstract}
We consider a finite mixture of regressions (FMR) model for high-dimensional
inhomogeneous data where the number of covariates may be much
larger than sample size. We propose an $\ell_1$-penalized maximum likelihood
estimator in an appropriate parameterization. This kind of estimation
belongs to a class of problems where optimization and theory for
non-convex functions is needed. This distinguishes itself very clearly from
high-dimensional estimation with convex loss- or objective functions, as
for example with the Lasso in linear or generalized linear models. Mixture
models represent a prime and important example where non-convexity arises.  

For FMR models, we develop an efficient
EM algorithm for numerical optimization with provable convergence
properties. Our penalized estimator is
numerically better posed (e.g., boundedness of the 
criterion function) than unpenalized maximum likelihood estimation, and it
allows for effective statistical regularization including variable
selection. We also present some asymptotic theory and oracle inequalities:
due to non-convexity of the negative log-likelihood function, different
mathematical arguments are needed than for problems with convex
losses. Finally, we apply 
the new method to both simulated and real data. \vspace{0.5cm}\\
{\bf Keywords} {Adaptive Lasso, Finite mixture models, Generalized EM algorithm, High-dimensional estimation, Lasso, Oracle inequality}
\vspace{0.5cm}\\
{\bf This is the author’s version of the work (published as a discussion paper in TEST,  2010, Volume 19,  209­-285). The final publication is available at www.springerlink.com.} 
\end{abstract}

\section{Introduction}\label{sec.intro}
In applied statistics, tremendous number of applications deal with
relating a random response variable $Y$ to a set of explanatory variables or
covariates $X = (X^{(1)},\ldots ,X^{(p)})$ through a regression-type model.
The homogeneity assumption that the regression coefficients are the same
for different observations $(Y_1,X_1),\ldots,(Y_n,X_n)$ is often
inadequate. Parameters may change for different subgroups of
observations. Such heterogeneity can be modeled with a Finite Mixture
of Regressions (FMR) model. Especially with high-dimensional data, where
the number of covariates $p$ is much larger than sample size $n$, the
homogeneity assumption seems rather restrictive: at least a fraction of
covariates may exhibit a different influence on the response among various
observations (i.e., sub-populations). Hence, addressing the issue of
heterogeneity in high-dimensional data is important in many
practical applications. We will empirically demonstrate with real data in
Section \ref{subsec.riboflavin} that substantial prediction improvements
are possible by 
incorporating a heterogeneity structure to the model.   

We propose here an $\ell_1$-penalized method, i.e., a Lasso-type estimator
\citep{tibshirani96regression}, for estimating a high-dimensional Finite
Mixture of Regressions (FMR) model where $p \gg n$. Our procedure is related
to the proposal in \cite{khalili}. However, we argue that a different
parameterization leads to more efficient 
computation in high-dimensional optimization for which we prove numerical
convergence properties. Our algorithm can easily handle problems where $p$
is in the thousands.  
Furthermore, regarding statistical performance, we present an oracle inequality which includes the
setting where $p \gg n$: this is very different from \citet{khalili} who use
fixed $p$ asymptotics in the low-dimensional framework. Our theory for
deriving oracle inequalities in the presence of non-convex loss functions, as
the negative 
log-likelihood in a mixture model is non-convex, is rather
non-standard but sufficiently general to cover other cases than FMR
models. 

From a more general point of view, we show in this paper that
high-dimensional estimation problems with non-convex loss 
functions can be addressed with high computational efficiency and good
statistical accuracy. Regarding the computation, we develop a rather
generic block
coordinate descent generalized EM algorithm which is surprisingly fast
even for large $p$. Progress in efficient gradient descent methods build on
various 
developments by \cite{tseng} and \cite{tseng08}, and their use for solving
Lasso-type convex problems has been worked out by, e.g., \cite{Fu98},
\cite{friedman07fastlasso}, \cite{meier06group} and
\cite{friedmanetal08}. We present in Section \ref{subsec.comptiming}
some computation times for the more involved case with non-convex objective
function using a block coordinate descent generalized
EM algorithm. Regarding statistical theory, almost all results for
high-dimensional Lasso-type problems have been developed for convex loss
functions, e.g., the squared error in a 
Gaussian regression \citep{greenshtein03persistency,
  meinshausen04consistent, zhao05model, Bunea:07, zhanghua08, meyu09,
  wainwright09, bickel07dantzig, cai09, canpan07, tzhang09} or the negative 
log-likelihood in a generalized linear model \citep{geer06high}. We present
a non-trivial modification of the mathematical 
analysis of $\ell_1$-penalized estimation with convex loss to non-convex
but smooth likelihood problems.  

When estimation is defined via optimization of a non-convex
objective function, there is a major gap between the actual computation
and the procedure studied in theory. The computation is typically
guaranteed to find a local optimum of 
the objective function only, whereas the theory is usually about the
estimator defined by a global optimum. Particularly in high-dimensional
problems, it is difficult 
to compute a global optimum and it would be desirable to have some
theoretical properties of estimators arising from local optima. We do not
provide an answer to this difficult issue in this thesis. The beauty of, e.g., the Lasso or the Dantzig selector \citep{cantao07} in high-dimensional
problems is the provable correctness or optimality of the estimator which
is actually computed. A challenge for future research is to establish such
provable correctness of estimators involving non-convex objective
functions. A noticeable exception is presented in \cite{chzhang09} for linear
models, where some theory is derived for an estimator based on a local
optimum of a non-convex optimization criterion. 

The rest of this article is mainly focusing on Finite Mixture of
Regressions (FMR) models. Some theory for high-dimensional estimation with
non-convex loss functions is presented in Section \ref{subsec.ashighdim} for more
general settings than FMR models. The further organization of the paper is as
follows: Section \ref{sec.FMR}
describes the FMR model with an appropriate
parameterization, Section \ref{sec:penregr} introduces $\ell_1$-penalized
maximum-likelihood estimation, Sections \ref{sec.asympt} and
\ref{subsec.ashighdim} present mathematical theory for the low- and
high-dimensional case, Section \ref{sec.optim} develops some efficient
generalized EM algorithm and describes its numerical convergence properties,
and Section \ref{sec.numeric} 
reports on simulations, real data analysis and computational timings.

\section{Finite mixture of Gaussian regressions model}\label{sec.FMR}
Our primary focus is on the following mixture model involving Gaussian
components: 
\begin{eqnarray}\label{mod.mix}
& &Y_i|X_i\ \mbox{independent for}\ i=1,\ldots ,n,\nonumber\\
& &Y_i|X_i = x \sim f_{\xi}(y|x)dy\ \mbox{for}\ i=1,\ldots ,n,\nonumber\\
& &f_{\xi}(y|x) = \sum_{r=1}^{k}\pi_{r} \frac{1}{\sqrt{2 \pi} \sigma_r}
\exp(-\frac{(y - x^T \beta_r)^2}{2 \sigma_r^2}),\\
& &\xi=(\beta_{1},\ldots,\beta_{k},\sigma_{1},\ldots,\sigma_{k},
  \pi_{1},\ldots,\pi_{k-1})\in \R^{kp}\times\R_{>0}^{k}\times\Pi,\nonumber\\
& &\Pi = \{\pi; \pi_r > 0\ \mbox{for}\ r=1,\ldots ,k-1\ \mbox{and}\
\sum_{r=1}^{k-1} \pi_r < 1\}.\nonumber
\end{eqnarray} 
Thereby, $X_i \in \R^p$ are fixed or random covariates, $Y_i \in \R$ is a univariate response
variable and $\xi = (\beta_{1},\ldots,\beta_{k},\sigma_{1},\ldots,\sigma_{k},
  \pi_{1},\ldots,\pi_{k-1})$ denotes the $(p+2) \cdot k -1 $
free parameters and $\pi_{k}$ is given by $\pi_{k}=1-\sum_{r=1}^{k-1}\pi_{r}$. The model in
(\ref{mod.mix}) is a mixture of Gaussian regressions, where every component 
$r$ has its individual vector of regression coefficients $\beta_r$ and
error variances $\sigma^2_r$. We are particularly interested in the case
where $p \gg n$.  

\subsection{Reparameterized mixture of regressions model}\label{subsec.reparmix}
We prefer to work with a reparameterized version of model
(\ref{mod.mix}) whose penalized maximum likelihood estimator is
scale-invariant and easier to compute. The computational aspect will be
discussed in greater 
detail in Sections \ref{subsec.reparlin} and \ref{sec.optim}. Define new
parameters 
\begin{eqnarray*}
\phi_r = \beta_r/\sigma_r,\ \ \rho_r = \sigma_r^{-1},\ \ r=1,\ldots ,k.
\end{eqnarray*}
This yields a one-to-one mapping from $\xi$ in (\ref{mod.mix}) to a new
parameter vector $$\theta =(\phi_{1},\ldots,\phi_{k},\rho_{1},\ldots,\rho_{k},\pi_{1},\ldots,\pi_{k-1}),$$ and the model (\ref{mod.mix}) in reparameterized form then equals:
\begin{eqnarray} \label{mod-mix2}
& &Y_i|X_i\ \mbox{independent for}\ i=1,\ldots ,n,\nonumber\\
& &Y_i|X_i = x \sim h_{\theta}(y|x)dy\ \mbox{for}\ i=1,\ldots ,n,\nonumber\\
& &h_{\theta}(y|x) = \sum_{r=1}^{k}\pi_{r} \frac{\rho_r}{\sqrt{2 \pi}} 
\exp(-\frac{1}{2}(\rho_r y - x^T \phi_r)^2)\\
& &\theta =
(\phi_{1},\ldots,\phi_{k},\rho_{1},\ldots,\rho_{k},\pi_{1},\ldots,\pi_{k-1})\in
\R^{kp}\times\R_{>0}^{k}\times\Pi\nonumber\\
& &\Pi = \{\pi; \pi_r > 0\ \mbox{for}\ r=1,\ldots ,k-1\ \mbox{and}\
\sum_{r=1}^{k-1} \pi_r < 1\}\nonumber,
\end{eqnarray} 
with $\pi_k = 1 - \sum_{r=1}^{k-1} \pi_r$.
This is the main model we are analyzing and working with. 

The log-likelihood function of this model equals
\begin{eqnarray}\label{loglik-mix}
\ell(\theta;Y)=\sum_{i=1}^{n}\log\left(\sum_{r=1}^{k}\pi_{r}\frac{\rho_{r}}{\sqrt{2\pi}}
  \exp(-\frac{1}{2} (\rho_{r}Y_{i}-X_{i}^T\phi_{r})^{2})\right).
\end{eqnarray}
Since we want to deal with the $p \gg n$ case, we have to
regularize the maximum likelihood estimator (MLE) in order to obtain
reasonably accurate estimates. We propose below some $\ell_1$-norm penalized MLE
which is different from a naive $\ell_1$-norm penalty for the MLE in the
non-transformed model (\ref{mod.mix}). Furthermore, it is well known that
the (log-) likelihood function is generally unbounded. We will see in
Section \ref{subsec.MLEmixpen} that our penalization will mitigate this
problem.  

\section{$\ell_1$-norm penalized maximum likelihood
  estimator} \label{sec:penregr}  

We argue first for the case of a (non-mixture) linear model why the
reparameterization above in Section \ref{subsec.reparmix} is useful and
quite natural.  

\subsection{$\ell_1$-norm penalization for reparameterized linear
  models}\label{subsec.reparlin}

Consider a Gaussian linear model
\begin{eqnarray}\label{mod.lin}
& &Y_i = \sum_{j=1}^p \beta_j X_i^{(j)} + \varepsilon_i, \ i=1,\ldots ,n,\nonumber\\
& &\varepsilon_1,\ldots ,\varepsilon_n\ \mbox{i.i.d.}\ \sim {\cal N}(0,\sigma^2),
\end{eqnarray}
where $X_i$ are either fixed or random covariates. 
In short, we often write
\begin{eqnarray*}
Y = \bx \beta + \eps,
\end{eqnarray*}
with $n \times 1$ vectors $Y$ and $\varepsilon$, $p \times 1$ vector
$\beta$ and $n \times p$ matrix $\bx$. In the sequel, $\|\cdot\|$ denotes
the Euclidean norm. The 
$\ell_1$-norm penalized estimator, called the Lasso
(\citet{tibshirani96regression}), is defined as
\begin{eqnarray}\label{lasso}
\hat{\beta}_{\lambda} = \argmin_{\beta} n^{-1} \|Y - \bx \beta\|^2 + \lambda
\sum_{j=1}^p |\beta_j|. 
\end{eqnarray}
Here $\lambda$ is a non-negative regularization parameter.
The Gaussian assumption is not crucial in model (\ref{mod.lin}) but it is
useful to make connections to the likelihood framework. The Lasso estimator
in (\ref{lasso}) is equivalent to minimizing the penalized negative
log-likelihood \linebreak$n^{-1} \ell(\beta;Y_1,\ldots ,Y_n)$ as a function of the
regression coefficients 
$\beta$ and using the
$\ell_1$-penalty $\|\beta\|_1 = \sum_{j=1}^p |\beta_j|$: equivalence here means that we obtain the same estimator for a  potentially different tuning
parameter. But the Lasso estimator in (\ref{lasso}) does not provide an
estimate of the nuisance
parameter $\sigma^2$.

In mixture models, it will be crucial to have a good estimator of
$\sigma^2$ and the role of the scaling of the variance parameter is much more
important than in homogeneous regression models. Hence, it is important to
take $\sigma^2$ into the definition and 
optimization of the penalized maximum likelihood estimator: we could
proceed with the following estimator,
\begin{align}\label{lasso2}
\hat{\beta}_{\lambda},\hat{\sigma}_{\lambda}^2 &= \argmin_{\beta,\sigma^2}-
n^{-1} \ell(\beta,\sigma^2;Y_1,\ldots ,Y_n) + \lambda \|\beta\|_1 \nonumber \\
&= \argmin_{\beta,\sigma^2} \log(\sigma)+\|Y - \bx \beta\|^2/(2 n
\sigma^2) + \lambda \|\beta\|_1.
\end{align}
Note that we are penalizing only the $\beta$-parameter. However, the scale
parameter estimate $\hat{\sigma}^2_{\lambda}$ is influenced indirectly by
the amount of shrinkage $\lambda$. 

There are two main drawbacks of the estimator in
(\ref{lasso2}). First, it 
is not equivariant \citep{lehmann} under scaling of the response. More
precisely, consider the
transformation
\begin{eqnarray}\label{eq:trans}
Y'=b Y,\quad \beta'=b\beta,\quad \sigma'=b\sigma \quad(b > 0)
\end{eqnarray} 
which leaves model (\ref{mod.lin}) invariant. A reasonable estimator based
on transformed data $Y'$ should lead to estimators $\hat{\beta'}, \hat{\sigma}'$ which
are related to $\hat{\beta}, \hat{\sigma}$ through
$\hat{\beta'}=b\hat{\beta}$ and $\hat{\sigma}'=b\hat{\sigma}$. This is
not the case for the estimator in (\ref{lasso2}). Secondly, and as important
as the first issue, the optimization
in (\ref{lasso2}) is non-convex and hence, some of the major computational
advantages of Lasso for high-dimensional problems is lost. We address these
drawbacks by using the penalty term $\lambda \frac{\|\beta\|_1}{\sigma}$
leading to the following estimator:
\begin{eqnarray*}
\hat{\beta}_{\lambda},\hat{\sigma}_{\lambda}^2 =
\argmin_{\beta,\sigma^2} \log(\sigma) + \|Y - \bx \beta\|^2/(2 n
\sigma^2) + \lambda \frac{\|\beta\|_1}{\sigma}.
\end{eqnarray*}
This estimator is equivariant under the scaling transformation
(\ref{eq:trans}), i.e., the estimators $\hat{\beta'},\hat{\sigma}'$ based on
$Y'$ transform as
$\hat{\beta'}=b\hat{\beta}$ and $\hat{\sigma}'=b\hat{\sigma}$. Furthermore,
it penalizes the $\ell_{1}$-norm of the coefficients and small variances
$\sigma^2$ simultaneously which has some close relations to the Bayesian Lasso
\citep{bayesianlasso}. For the latter, a Bayesian approach is used with a
conditional Laplace prior specification of the form 
\[
 p(\beta|\sigma^{2})=\prod_{j=1}^{p}\frac{\lambda}{2\sqrt{\sigma^{2}}} \exp(-\lambda\frac{|\beta_{j}|}{\sqrt{\sigma^{2}}})
\] 
and a noninformative scale-invariant marginal prior $p(\sigma^{2}) =
1/\sigma^{2}$ for $\sigma^{2}$. \cite{bayesianlasso} argue that
conditioning on $\sigma^{2}$ 
is important because it guarantees a unimodal full posterior. 

Most importantly, we can re-parameterize to achieve convexity of the
optimization problem 
\begin{eqnarray*}
\phi_j = \beta_j/\sigma,\ \ \rho = \sigma^{-1}.
\end{eqnarray*}
This then yields the following estimator which is equivariant under scaling
and whose computation involves convex optimization:
\begin{eqnarray}\label{eq:convexplik}
\hat{\phi}_{\lambda},\hat{\rho}_{\lambda} =
\argmin_{\phi,\rho}-\log(\rho)+\frac{1}{2 n}\|\rho Y-\bx
\phi\|^{2}+\lambda\|\phi\|_{1}. 
\end{eqnarray}

From an algorithmic point of view, fast algorithms are available to solve
the optimization in (\ref{eq:convexplik}). Shooting algorithms \citep{Fu98}
with coordinate-wise descent 
are especially suitable, as demonstrated by, e.g., \cite{friedman07fastlasso}, \cite{meier06group} or
\cite{friedmanetal08}. We describe in Section \ref{subsec:emalgmix} an
algorithm for estimation in a mixture of regressions model, a more complex
task than 
the optimization for (\ref{eq:convexplik}). As we will see in Section
\ref{subsec:emalgmix}, we 
will make use of the Karush-Kuhn-Tucker (KKT) conditions in the M-Step of a
generalized EM algorithm. For the simpler criterion in
(\ref{eq:convexplik}) for a non-mixture model, the KKT conditions imply the
following which we state without a proof. Denote by $\langle \cdot, \cdot
\rangle$ the inner product in $n$-dimensional Euclidean space and by
$\bx_j$ the $j$th column vector of $\bx$. 
 \\ 
\begin{proposition}\label{prop:kkt}
Every solution $(\hat{\phi},\hat{\rho})$ of (\ref{eq:convexplik}) satisfies:
\begin{eqnarray*}
\begin{array}{ccc}
-\hat{\rho} \langle \bx_{j},Y\rangle+\langle \bx_{j},\bx\hat{\phi}\rangle+n\lambda\, \mathrm{sgn}(\hat{\phi}_{j})&=&0\qquad\textrm{if}\qquad\hat{\phi}_{j}\neq 0,\\
|-\hat{\rho} \langle \bx_{j},Y\rangle+\langle \bx_{j},\bx\hat{\phi}\rangle|&\leq&n\lambda\qquad\textrm{if}\qquad\hat{\phi}_{j}=0,
\end{array}
\end{eqnarray*}
and
\begin{eqnarray*}
\hat{\rho}=\frac{\langle Y,\bx\hat{\phi}\rangle+\sqrt{\langle
    Y,\bx \hat{\phi}\rangle^{2}+4\|Y\|^{2}n}}{2\|Y\|^{2}}. 
\end{eqnarray*}
\end{proposition}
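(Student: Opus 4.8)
The plan is to read off the Karush--Kuhn--Tucker conditions of the convex program (\ref{eq:convexplik}). First I would record that the objective $G(\phi,\rho)=-\log\rho+\frac{1}{2n}\|\rho Y-\bx\phi\|^{2}+\lambda\|\phi\|_{1}$ splits as $g(\phi,\rho)+\lambda\|\phi\|_{1}$, where $(\phi,\rho)\mapsto\rho Y-\bx\phi$ is affine, so $g$ is jointly convex and, on the region $\rho>0$, differentiable (a convex quadratic composed with an affine map, plus the convex term $-\log\rho$), while $\lambda\|\phi\|_{1}$ is convex, nonsmooth and depends on $\phi$ only. Because $-\log\rho\to+\infty$ as $\rho\downarrow0$, any minimizer $(\hat\phi,\hat\rho)$ has $\hat\rho>0$ in the interior of the feasible set; hence, by the sum rule for subdifferentials, the optimality condition $0\in\partial G(\hat\phi,\hat\rho)$ splits into $\partial_{\rho}g(\hat\phi,\hat\rho)=0$ and, for each $j$, $0\in\partial_{\phi_{j}}g(\hat\phi,\hat\rho)+\lambda\,\partial|\hat\phi_{j}|$, where $\partial|t|=\{\mathrm{sgn}(t)\}$ for $t\neq0$ and $\partial|0|=[-1,1]$.

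For the scale equation I would compute $\partial_{\rho}g(\hat\phi,\hat\rho)=-1/\hat\rho+\frac1n\langle Y,\hat\rho Y-\bx\hat\phi\rangle$; setting this to zero and multiplying by $n\hat\rho>0$ yields the quadratic $\|Y\|^{2}\hat\rho^{2}-\langle Y,\bx\hat\phi\rangle\hat\rho-n=0$. Since a minimizer exists we have $Y\neq0$, so this is a genuine quadratic whose two roots have product $-n/\|Y\|^{2}<0$, hence opposite signs; the constraint $\hat\rho>0$ therefore forces $\hat\rho$ to equal the root carrying the $+$ sign in front of the square root, which is exactly the stated formula.

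For the coefficients I would compute $\partial_{\phi_{j}}g(\hat\phi,\hat\rho)=\frac1n\bigl(\langle\bx_{j},\bx\hat\phi\rangle-\hat\rho\langle\bx_{j},Y\rangle\bigr)$, substitute it into $0\in\partial_{\phi_{j}}g+\lambda\,\partial|\hat\phi_{j}|$, and multiply by $n$. When $\hat\phi_{j}\neq0$ this gives $-\hat\rho\langle\bx_{j},Y\rangle+\langle\bx_{j},\bx\hat\phi\rangle+n\lambda\,\mathrm{sgn}(\hat\phi_{j})=0$; when $\hat\phi_{j}=0$, the existence of a subgradient in $[-1,1]$ is equivalent to $|-\hat\rho\langle\bx_{j},Y\rangle+\langle\bx_{j},\bx\hat\phi\rangle|\le n\lambda$. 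This reproduces both displayed lines.

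The computations are entirely routine; the only points that need a word of care are checking that the optimum lies in the interior $\rho>0$, so that one may differentiate in $\rho$ rather than take a one-sided derivative, and the sign selection in the quadratic formula, which is precisely where positivity of $\rho$ enters. I do not anticipate a real obstacle: this is the standard subdifferential calculus for an $\ell_{1}$-penalized smooth convex objective, which is presumably why the paper states the result without proof.
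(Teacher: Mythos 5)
Your derivation is correct and is exactly the argument the paper has in mind: the paper states Proposition \ref{prop:kkt} explicitly without proof, remarking only that it follows from the KKT conditions, and your subdifferential computation (stationarity in $\rho$ giving the quadratic whose unique positive root is the stated formula, and the coordinatewise $\ell_1$ subgradient conditions for $\phi_j$) is the standard route. The only implicit hypothesis worth noting is $Y\neq 0$ (otherwise the objective is unbounded below in $\rho$ and no minimizer exists), which you correctly flag via the existence of a minimizer.
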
 

\subsection{$\ell_1$-norm penalized MLE for mixture of Gaussian
  regressions}\label{subsec.MLEmixpen} 
Consider the mixture of Gaussian regressions model in
(\ref{mod-mix2}). Assuming that $p$ is large, we want to regularize the
MLE. In the spirit of the approach in (\ref{eq:convexplik}), we propose
for the unknown parameter $\theta =
(\phi_{1},\ldots,\phi_{k},\rho_{1},\ldots,\rho_{k},\pi_{1},\ldots,\pi_{k-1})$
the estimator:
\begin{align}
\hat{\theta}_{\lambda}^{(\gamma)} &= \argmin_{\theta \in \Theta}
- n^{-1} \ell_{\mathrm{pen},\lambda}^{(\gamma)}(\theta),\label{eq:plik0}\\                
- n^{-1}\ell_{\mathrm{pen},\lambda}^{(\gamma)}(\theta)&=-n^{-1}\sum_{i=1}^{n}\log\left(\sum_{r=1}^{k}\pi_{r}\frac{\rho_{r}}{\sqrt{2\pi}}\exp(-\frac{1}{2}(\rho_{r}Y_{i}-X_{i}^T\phi_{r})^{2})\right)\nonumber\\
&+\lambda\sum_{r=1}^{k}\pi_{r}^{\gamma}\|\phi_{r}\|_{1},\label{eq:plik1}\\ 
\Theta &= \R^{kp} \times \R_{>0}^{k}\times \Pi,\label{eq:parrange}
\end{align}
where $\Pi = \{\pi; \pi_r > 0\ \mbox{for}\ r=1,\ldots ,k-1\ \mbox{and}\
\sum_{r=1}^{k-1} \pi_r < 1\}$ with $\pi_k = 1 - \sum_{r=1}^{k-1}
\pi_r$. The value of $\gamma \in \{0, 1/2,1\}$ 
parameterizes three different penalties.  

The first penalty function with $\gamma = 0$ is independent of the
component probabilities $\pi_{r}$. As we will see in Sections
\ref{subsec:emalgmix} and \ref{subsec.numconv}, the
optimization for computing $\hat{\theta}_{\lambda}^{(0)}$ is
easiest, and we establish a rigorous result about numerical
convergence of a generalized EM algorithm. The penalty with $\gamma = 0$
works fine if the components are not very unbalanced, i.e., the true
$\pi_r$'s aren't too different. In case of strongly unbalanced components,
the penalties with values $\gamma \in \{1/2,1\}$ are to be preferred,
at the price of having to pursue a more difficult optimization problem. 
The value
of $\gamma = 1$ has been proposed by \cite{khalili} for the naively
parameterized likelihood from model (\ref{mod.mix}). We will report in
Section \ref{subsec.simul} about empirical comparisons with the three
different penalties 
involving $\gamma \in \{0,1/2,1\}$.  

All three penalty functions involve the $\ell_{1}$-norm of the component
specific 
ratios $\phi_r = \frac{\beta_{r}}{\sigma_{r}}$ and hence small variances are
penalized. The penalized criteria therefore stay finite whenever
$\sigma_{r}\rightarrow 0$: this is in sharp contrast to the unpenalized MLE
where the likelihood is unbounded if $\sigma_r \to 0$; see, for example,
\cite{McLachlan}.\\
\begin{proposition}\label{prop:bounded}
Assume that $Y_i\neq0$ for all $i=1,\ldots ,n$. Then the penalized negative
log-likelihood $-n^{-1} \ell_{\mathrm{pen},\lambda}^{(0)}(\theta)$ is bounded from 
below for all values $\theta \in \Theta$ from (\ref{eq:parrange}).
\end{proposition}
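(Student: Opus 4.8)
The plan is to produce a lower bound on $-n^{-1}\ell_{\mathrm{pen},\lambda}^{(0)}(\theta)$ that is uniform over $\theta\in\Theta$. Writing $S:=\sum_{r=1}^{k}\|\phi_r\|_1$ and $g_i(\theta):=\sum_{r=1}^k\pi_r\tfrac{\rho_r}{\sqrt{2\pi}}\exp(-\tfrac12(\rho_rY_i-X_i^T\phi_r)^2)$, the criterion is $-n^{-1}\sum_{i=1}^n\log g_i(\theta)+\lambda S$. The negative log-likelihood part can run off to $-\infty$ only when some $\rho_r\to\infty$ (the classical unboundedness of the Gaussian-mixture likelihood). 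So the strategy is to show that a large $\rho_r$ forces $\|\phi_r\|_1$ to be large as well --- in fact $\log\rho_r$ can grow at most logarithmically in $\|\phi_r\|_1$ --- so that the \emph{linear} penalty $\lambda S$ (with $\lambda>0$) more than compensates. The hypothesis $Y_i\ne 0$ is exactly what makes this quantitative link available.

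First I would bound each $\log g_i(\theta)$. Since $(\pi_1,\dots,\pi_k)$ is a probability vector, $g_i(\theta)\le\max_{1\le r\le k}\tfrac{\rho_r}{\sqrt{2\pi}}\exp(-\tfrac12(\rho_rY_i-X_i^T\phi_r)^2)$, so it suffices to bound $\log\rho_r-\tfrac12\log(2\pi)-\tfrac12(\rho_rY_i-X_i^T\phi_r)^2$ for each $r$. Put $a:=\rho_rY_i-X_i^T\phi_r$. By Hölder's inequality, $0<\rho_r|Y_i|=|a+X_i^T\phi_r|\le|a|+\|X_i\|_\infty\|\phi_r\|_1$, and since $Y_i\ne0$ this yields $\log\rho_r\le\log\!\bigl(|a|+\|X_i\|_\infty\|\phi_r\|_1\bigr)-\log|Y_i|$. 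Feeding this in and using the elementary inequalities $\log(s+t)\le\log(1+s)+t$ and $t-\tfrac12t^2\le\tfrac12$ for $s,t\ge0$, one gets $\log\rho_r-\tfrac12a^2\le\log\!\bigl(1+\|X_i\|_\infty\|\phi_r\|_1\bigr)-\log|Y_i|+\tfrac12$. Hence, with the finite constants $B:=\max_i\|X_i\|_\infty$ and $K_1:=\tfrac12-\tfrac12\log(2\pi)-\min_i\log|Y_i|$, and using $\max_r\|\phi_r\|_1\le S$, I obtain $\log g_i(\theta)\le K_1+\log(1+BS)$ for every $i$.

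Averaging over $i$ and adding back the penalty then gives
\[
-n^{-1}\ell_{\mathrm{pen},\lambda}^{(0)}(\theta)\;\ge\;-K_1-\log(1+BS)+\lambda S ,
\]
and the one-variable function $s\mapsto\lambda s-\log(1+Bs)$ is continuous on $[0,\infty)$ and tends to $+\infty$ as $s\to\infty$ (here $\lambda>0$ is used), hence is bounded below by some finite $-K_2$; this delivers $-n^{-1}\ell_{\mathrm{pen},\lambda}^{(0)}(\theta)\ge-K_1-K_2$ for all $\theta\in\Theta$, which is the claim.

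The only genuinely delicate point is the per-component estimate in the second step. One must keep the term $(X_i^T\phi_r)^2$ rather than discard it --- dropping it would leave a bound of order $\|\phi_r\|_1^2$, which a linear penalty cannot absorb --- and one must control $\log\rho_r$ through $\rho_r|Y_i|\le|a|+\|X_i\|_\infty\|\phi_r\|_1$ so that it enters only logarithmically in $\|\phi_r\|_1$. Everything else is routine. This is also precisely where $Y_i\ne0$ is indispensable: if some $Y_i$ vanished, one could take $\phi_r\equiv0$ and send $\rho_r\to\infty$, driving $-\log g_i(\theta)\to-\infty$ while the penalty stays at $0$, so the criterion would be unbounded below.
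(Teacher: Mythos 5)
Your proof is correct, but it takes a genuinely different route from the paper's. The paper restricts to $k=2$, passes back to the $(\beta,\sigma)$ parameterization, considers $u(\xi)=\exp(\ell^{(0)}_{\mathrm{pen}}(\xi))$, and runs an $\varepsilon$--$\delta$ case analysis: on the region where all $\sigma_r\geq\delta$ the function is trivially bounded, while where some $\sigma_r<\delta$ it splits according to whether $\|\beta_r\|_1<m$ (then $Y_i\neq 0$ keeps the residuals $(Y_i-X_i^T\beta_r)^2$ bounded away from zero, so the Gaussian factor kills $1/\sigma_r$) or $\|\beta_r\|_1\geq m$ (then the penalty factor $\exp(-\lambda m/\sigma_r)$ kills it). You instead stay in the $(\phi,\rho)$ parameterization and compress that same dichotomy into the single quantitative inequality $\rho_r|Y_i|\leq|\rho_rY_i-X_i^T\phi_r|+\|X_i\|_{\infty}\|\phi_r\|_1$: a large $\rho_r$ forces either a large standardized residual (which the $-\tfrac12 a^2$ term absorbs via $|a|-\tfrac12 a^2\leq\tfrac12$) or a large $\|\phi_r\|_1$ (which enters only through $\log(1+BS)$ and is dominated by the linear penalty $\lambda S$). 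This buys you an explicit uniform lower bound $-K_1-K_2$ with identifiable constants, a clean treatment of general $k$, and no compactness or case-splitting; the paper's argument buys slightly less machinery per step but yields only qualitative boundedness. Both arguments, yours and the paper's, implicitly require $\lambda>0$ (for $\lambda=0$ the claim fails by the classical unboundedness of the mixture likelihood), and both use $Y_i\neq 0$ in exactly the same place, as your closing remark correctly identifies.
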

A proof is given in Appendix \ref{app2}. Even though
Proposition~\ref{prop:bounded} is only stated and proved for the penalized 
negative log-likelihood with $\gamma=0$, we expect that the statement is
also true for $\gamma=1/2$ or $1$.

Due to the $\ell_1$-norm penalty, the estimator is shrinking some of the
components of $\phi_1,\ldots ,\phi_k$ exactly to zero, depending on the
magnitude of the regularization parameter $\lambda$. Thus, we can do
variable selection as follows. Denote by 
\begin{eqnarray}\label{eq:varsel} 
\widehat{S} =\left\{(r,j);\ 1 \le r \le k,\ 1 \le j \le p, \hat{\phi}_{r,j}
  \neq 0 \right\}.
\end{eqnarray} 
Here, $\hat{\phi}_{r,j}$ is the $j$th coefficient of the estimated regression parameter
$\hat{\phi}_{r}$ belonging to mixture component $r$. The set $\widehat{S}$
denotes the collection of non-zero estimated, i.e., selected, regression
coefficients in the $k$ mixture 
components. Note that no significance testing is involved, but, of course,
$\widehat{S} = \widehat{S}_{\lambda}^{(\gamma)}$
depends on the specification of the regularization parameter $\lambda$ and
the type of penalty described by $\gamma$. 

\subsection{Adaptive $\ell_1$-norm penalization}\label{subsec.adaptive}
A two-stage adaptive $\ell_1$-norm penalization for linear models has been
proposed by \cite{zou05adaptive}, called the adaptive Lasso. It is an effective way to address
some bias problems of the (one-stage) Lasso which may employ strong
shrinkage of coefficients corresponding to important variables.  

The two-stage adaptive $\ell_1$-norm penalized estimator for a mixture of
Gaussian regressions is defined as follows. Consider an initial estimate
$\theta^{\mathrm{ini}}$, for example, from the estimator in~(\ref{eq:plik0}). The
adaptive criterion to be minimized involves a re-weighted $\ell_1$-norm
penalty term:
\begin{align}\label{eq:plikadapt}
-n^{-1} \ell_{\mathrm{adapt}}^{(\gamma)}(\theta)&=
-n^{-1}\sum_{i=1}^{n}\log\left(\sum_{r=1}^{k}\pi_{r}\frac{\rho_{r}}{\sqrt{2\pi}}\exp(-\frac{1}{2}(\rho_{r}Y_{i}-X_{i}^T\phi_{r})^{2})\right)\nonumber\\
&+\lambda\sum_{r=1}^{k}\pi_{r}^{\gamma}\sum_{j=1}^{p}w_{r,j}|\phi_{r,j}|,\\
w_{r,j}&=\frac{1}{|\phi_{r,j}^{\mathrm{ini}}|},\ \theta =
(\rho_{1},\ldots,\rho_{k},\phi_{1},\ldots,\phi_{k},\pi_{1}, \ldots,\pi_{k-1}),\nonumber
\end{align}
where $\gamma \in \{0,1/2,1\}$. The estimator is then defined as 
\begin{eqnarray}\label{eq.adaptest} 
\hat{\theta}_{\mathrm{adapt};\lambda}^{(\gamma)} = \argmin_{\theta \in \Theta}
-n^{-1} \ell_{\mathrm{adapt}}^{(\gamma)}(\theta),
\end{eqnarray}
where $\Theta$ is as in (\ref{eq:parrange}).

The adaptive Lasso in linear models has better variable selection properties
than the Lasso, see \citet{zou05adaptive}, \citet{HMZ08},
\citet{zhougeerpb09}. We present some theory for the adaptive estimator in
FMR models in Section \ref{sec.asympt}. Furthermore, we report some
empirical results in Section \ref{subsec.simul} indicating that the
two-stage adaptive method often outperforms the one-stage
$\ell_1$-penalized estimator. 
 
\subsection{Selection of the tuning parameters}\label{sec:select.tuning}
The regularization parameters to be selected are the number of components
$k$, the penalty parameter $\lambda$ and we may also want to select the
type of the penalty function, i.e., selection of $\gamma$. 

One possibility is to use a modified BIC criterion which minimizes
\begin{eqnarray}\label{eq:bic}
\mathrm{BIC}&=&-2\ell(\hat{\theta}_{\lambda,k}^{(\gamma)})+\log(n)\mathrm{d_{e}},
\end{eqnarray}
over a grid of candidate values for $k$, $\lambda$ and maybe also
$\gamma$. Here, $\hat{\theta}_{\lambda,k}^{(\gamma)}$ denotes the estimator in
(\ref{eq:plik0}) using the parameters $\lambda, k, \gamma$ in
(\ref{eq:plik1}), and $-\ell(\cdot)$ is the negative
log-likelihood. Furthermore, $\mathrm{d_{e}} = 
k+(k-1) + \sum_{j=1,\ldots,p; r=1,\ldots,k}\mathop{1_{\{\hat{\phi}_{r,j} \neq
    0\}}}$ is the effective number of parameters \citep{pan}.  

Alternatively, we may use a cross-validation scheme for tuning parameter
selection minimizing some cross-validated negative log-likelihood. 

Regarding the grid of candidate values for $\lambda$, we consider 
$0\leq\lambda_{1}<\cdots<\lambda_{M}\leq\lambda_{\mathrm{max}}$, where $\lambda_{\mathrm{max}}$ is
given by
\begin{equation}\label{eq:lambdamax}
\lambda_{\mathrm{max}}=\max_{j=1,\ldots,p}\left|\frac{\langle
    Y,\bx_{j}\rangle}{\sqrt{n}\|Y\|}\right|. 
\end{equation}
At $\lambda_{\mathrm{max}}$, all coefficients $\hat{\phi}_{j},\ (j=1,\ldots,p)$ of 
the one-component model are exactly zero. Equation (\ref{eq:lambdamax})
easily follows from Proposition \ref{prop:kkt}.

For the adaptive $\ell_1$-norm penalized estimator minimizing the criterion
in (\ref{eq:plikadapt}), we proceed analogously but replacing
$\hat{\theta}_{\lambda,k}^{(\gamma)}$ in (\ref{eq:bic}) by
$\hat{\theta}_{\mathrm{adapt};\lambda}^{(\gamma)}$ in (\ref{eq.adaptest}). As
initial estimator in the 
adaptive criterion, we propose to use the estimate in (\ref{eq:plik0})
which is optimally tuned using the modified BIC or some cross-validation
scheme. 

\section{Asymptotics for  fixed $p$ and $k$}\label{sec.asympt}

Following the penalized likelihood theory of \cite{fanli}, we establish
first some asymptotic properties of the estimator in (\ref{eq:plik1}). As in
\cite{fanli}, we assume in this section that the design is random and that the number of covariates $p$ and the number of
mixture components $k$ are fixed as sample size $n \to \infty$. Of course,
this does not reflect a truly high-dimensional scenario, but the theory and
methodology is much easier for this case. An extended theory for $p$
potentially very large in relation to $n$ is presented in Section
\ref{subsec.ashighdim}. 

Denote by $\theta_{0}$ the true parameter. \\
\begin{theorem}\label{theorem:rootn} (Consistency)
Consider model (\ref{mod-mix2}) with random design, fixed $p$ and $k$. If 
$\lambda=O(n^{-1/2})\ (n \to \infty)$ then, under the regularity conditions
(A)-(C) from \cite{fanli} on the joint density function of $(Y,X)$, there exists a local minimizer
$\hat{\theta}_{\lambda}^{(\gamma)}$ of $-n^{-1} \ell^{(\gamma)}_{\mathrm{pen},\lambda}(\theta)$ in
(\ref{eq:plik1}) ($\gamma \in \{0,1/2,1\}$) such that  
\[\sqrt{n}\left(\hat{\theta}_{\lambda}^{(\gamma)}-\theta_{0}\right)=O_P(1).\] 
\end{theorem}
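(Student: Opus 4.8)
The plan is to follow the penalized-likelihood scheme of \cite{fanli}: instead of analyzing a global minimizer I will show that, with probability tending to one, $-n^{-1}\ell^{(\gamma)}_{\mathrm{pen},\lambda}(\cdot)$ has a local minimizer inside a ball of radius $C n^{-1/2}$ around $\theta_0$, for a suitably large constant $C$. Writing $u$ for a direction in the parameter space and
\[
D_n(u) = -n^{-1}\ell^{(\gamma)}_{\mathrm{pen},\lambda}(\theta_0 + n^{-1/2}u) + n^{-1}\ell^{(\gamma)}_{\mathrm{pen},\lambda}(\theta_0),
\]
it suffices to prove that for every $\epsilon > 0$ there is a $C = C(\epsilon)$ with $\Prob\{\inf_{\|u\| = C} D_n(u) > 0\} \ge 1 - \epsilon$ for all large $n$. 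Since $\theta \mapsto -n^{-1}\ell^{(\gamma)}_{\mathrm{pen},\lambda}(\theta)$ is continuous on $\Theta$ and $\theta_0$ lies in its interior, strict positivity of $D_n$ on the sphere $\{\|u\| = C\}$ forces a local minimizer of the objective in the open ball $\{\theta_0 + n^{-1/2}u: \|u\| < C\}$, so that $\|\hat\theta^{(\gamma)}_\lambda - \theta_0\| \le C n^{-1/2}$ on that event; letting $\epsilon \downarrow 0$ then yields $\sqrt n(\hat\theta^{(\gamma)}_\lambda - \theta_0) = O_P(1)$.

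I would then decompose $n D_n(u)$ into a likelihood part and a penalty part. For the likelihood part, a second-order Taylor expansion of $\ell(\cdot;Y)$ at $\theta_0$, combined with the Fan--Li regularity conditions (A)--(C) on the joint density of $(Y,X)$ --- smoothness of $\theta \mapsto h_\theta(y|x)$, mean-zero score with finite and positive-definite Fisher information $I(\theta_0)$, and a dominating function for the third derivatives near $\theta_0$ --- gives
\[
-\bigl[\ell(\theta_0 + n^{-1/2}u;Y) - \ell(\theta_0;Y)\bigr] = -n^{-1/2}u^{T}\nabla\ell(\theta_0;Y) - \tfrac12\, u^{T}\bigl[n^{-1}\nabla^2\ell(\theta_0;Y)\bigr]u + o_P(1)\|u\|^2.
\]
By the central limit theorem $n^{-1/2}\nabla\ell(\theta_0;Y) = O_P(1)$, and by the law of large numbers $n^{-1}\nabla^2\ell(\theta_0;Y) \to -I(\theta_0)$ in probability, so the likelihood part of $n D_n(u)$ equals $-O_P(1)\|u\| + \tfrac12 u^{T} I(\theta_0) u\,(1 + o_P(1))$, whose positive-definite quadratic term dominates for large $\|u\|$.

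For the penalty part, $\theta \mapsto \sum_{r=1}^k \pi_r^{\gamma}\|\phi_r\|_1$ is Lipschitz on every compact subset of $\R^{kp}\times\R_{>0}^{k}\times\Pi$ (the maps $\pi_r \mapsto \pi_r^{\gamma}$ are Lipschitz away from $\pi_r = 0$ and $\|\cdot\|_1$ is $1$-Lipschitz), and for $\|u\| = C$ fixed and $n$ large the perturbed parameter stays in such a set; hence the penalty difference is bounded in absolute value by $L\lambda n^{-1/2}\|u\|$ for a constant $L$, and thus contributes at most $\sqrt n\,\lambda\,L\,\|u\| = O(1)\|u\|$ to $n D_n(u)$ because $\lambda = O(n^{-1/2})$. (This can be sharpened: perturbing a coordinate with $\phi_{0r,j} = 0$ only increases the penalty, so the only signed contribution comes from the finitely many nonzero coordinates and is again $O(n^{-1/2})\|u\|$.) Combining,
\[
n D_n(u) \ge \tfrac12 u^{T} I(\theta_0) u\,(1 + o_P(1)) - \bigl(O_P(1) + O(1)\bigr)\|u\|
\]
uniformly over $\|u\| = C$; choosing $C$ large relative to the stochastically bounded coefficient of $\|u\|$ makes the right-hand side positive with probability at least $1 - \epsilon$, finishing the proof.

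The step I expect to be most delicate is the interaction of the non-differentiable $\ell_1$ penalty with the local expansion: one cannot Taylor-expand the penalty at coordinates where $\phi_{0r,j} = 0$, but the rate $\lambda = O(n^{-1/2})$ is exactly calibrated so that the total variation of the penalty over an $n^{-1/2}$-ball is of smaller order than the $n^{-1}$-scale of the quadratic likelihood term. Everything else is a transcription of the Fan--Li argument once conditions (A)--(C) are seen to be meaningful for the mixture density $h_\theta(y|x)$; in particular the positive-definiteness of $I(\theta_0)$ --- which presupposes local identifiability of $\theta_0$ up to the irrelevant label permutation of mixture components --- is what makes the quadratic term coercive and is the assumption the whole argument hinges on.
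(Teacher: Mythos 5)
Your proposal is correct and is essentially the paper's own proof: the paper simply invokes Theorem 1 of Fan and Li (2001), and what you have written out is precisely that argument (local minimizer forced inside an $n^{-1/2}$-ball by the coercive quadratic term from $I(\theta_0)$, with the $\ell_1$ penalty contributing only $O(\sqrt{n}\,\lambda)\|u\| = O(1)\|u\|$). Your added remarks on the Lipschitz handling of the $\pi_r^{\gamma}$ factor and on positive-definiteness of the Fisher information modulo label switching are exactly the points one must check to see that conditions (A)--(C) apply to the mixture density $h_\theta(y|x)$.
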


A proof is given in Appendix \ref{app0}. Theorem \ref{theorem:rootn} can be
easily 
misunderstood. It does not guarantee the existence of an asymptotically
consistent sequence of estimates. The only claim is that a clairvoyant
statistician (with pre-knowledge of $\theta_{0}$) can choose a consistent
sequence of roots in the neighborhood of $\theta_0$ \citep{Vandervaart}. In
the case where $-n^{-1} 
\ell^{(\gamma)}_{\mathrm{pen},\lambda}(\theta)$ has a unique minimizer, which is the case for
a FMR model with one component, the resulting
estimator would be root-$n$ consistent. But for a general FMR model with more
than one component and typically several local minimizers, this does not
hold anymore. In this sense, the preceding theorem
might look better than it is.

Next, we present an asymptotic oracle result in the spirit of \cite{fanli}
for the two-stage adaptive procedure described in
Section~\ref{subsec.adaptive}. 
Denote by $S$ the population analogue of (\ref{eq:varsel}),
i.e., the set of non-zero regression coefficients. Furthermore, let
$\theta_{S}= (\{\phi_{r,j};\ (r,j) \in S\},\rho_1,\ldots
,\rho_k,\pi_1,\ldots ,\pi_{k-1})$ be the sub-vector of parameters corresponding to the true non-zero regression
coefficients (denoted by $S$) and analogously for $\hat{\theta}_{S}$. 
\\
\begin{theorem}\label{theorem:oracle} (Asymptotic oracle result for
  adaptive procedure)\\
Consider model (\ref{mod-mix2}) with random design, fixed $p$ and $k$.
If $\lambda=o(n^{-1/2})$, $n \lambda\rightarrow\infty$ and if $\theta^{\mathrm{ini}}$
satisfies $\theta^{\mathrm{ini}} - \theta_0 = O_P(n^{-1/2})$, then, under the regularity
conditions (A)-(C) from \cite{fanli} on the joint density function of $(Y,X)$, there exists a
local minimizer $\hat{\theta}_{\mathrm{adapt};\lambda}^{(\gamma)}$ of
$- n^{-1} \ell_{\mathrm{adapt}}^{(\gamma)}(\theta)$ in (\ref{eq:plikadapt}) ($\gamma
\in \{0,1/2,1\}$) which satisfies:
\begin{enumerate}
\item[1.]\emph{Consistency in variable selection:}
  $\Prob[\widehat{S}_{\mathrm{adapt};\lambda}^{(\gamma)} = S]\rightarrow 1\ (n \to
  \infty)$.  
\item[2.]\emph{Oracle Property:}
  $\sqrt{n}\left(\hat{\theta}_{\mathrm{adapt};\lambda,S}^{(\gamma)}-\theta_{0,S}\right)
  \leadsto^{d}\calN(0,I_{S}(\theta_{0})^{-1})$, where
  $I_{S}(\theta_{0})$ is the Fisher information knowing that 
  $\theta_{S^{c}}=0$ (i.e., the submatrix of the Fisher information at
  $\theta_{0}$ corresponding to the variables in $S$). 
\end{enumerate}
\end{theorem}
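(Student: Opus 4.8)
The plan is to follow the classical two-step penalized-likelihood argument of Fan and Li (2001), adapted to the FMR likelihood in our reparameterization, and prove both claims simultaneously via a single construction of a good local minimizer. First I would set up notation: write the unpenalized negative log-likelihood as $Q_n(\theta) = -n^{-1}\ell(\theta;Y)$, and recall that under regularity conditions (A)–(C) of Fan and Li the MLE theory for the mixture is locally regular at $\theta_0$, so that $Q_n$ admits the usual quadratic expansion $Q_n(\theta_0 + u/\sqrt n) = Q_n(\theta_0) + n^{-1/2} u^\top W_n + \tfrac12 u^\top I(\theta_0) u + o_P(1)$ uniformly on compacts, with $W_n \leadsto \calN(0, I(\theta_0))$. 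The penalty term is $p_\lambda(\theta) = \lambda \sum_{r=1}^k \pi_r^\gamma \sum_{j=1}^p w_{r,j}|\phi_{r,j}|$ with $w_{r,j} = 1/|\phi^{\mathrm{ini}}_{r,j}|$.

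The core is the familiar ball-minimization argument: I would show that for any $\varepsilon>0$ there is a constant $C$ such that, with probability at least $1-\varepsilon$, the penalized objective $Q_n + p_\lambda$ restricted to the sphere $\{\theta_0 + u/\sqrt n : \|u\| = C\}$ strictly exceeds its value at $\theta_0$; since $Q_n + p_\lambda$ is continuous, this forces a local minimizer $\hat\theta_{\mathrm{adapt};\lambda}^{(\gamma)}$ inside the ball, hence $\sqrt n(\hat\theta - \theta_0) = O_P(1)$. The key is to control the penalty increment. Split coordinates into $S$ and $S^c$. For $(r,j)\in S$ the initial estimator gives $w_{r,j} = O_P(1)$ and $\pi_r^\gamma$ is bounded, so that part of the penalty difference is $O_P(\lambda \|u\|/\sqrt n) = o_P(\|u\|/n)$ because $\lambda = o(n^{-1/2})$ — negligible against the $\|u\|^2/n$ quadratic term. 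For $(r,j)\in S^c$ the penalty only increases, so it can be dropped when bounding from below; this establishes the $\sqrt n$-rate. Then, for the sparsity claim, I would argue by contradiction: suppose $\hat\phi_{r,j}\neq 0$ for some $(r,j)\in S^c$ with nonnegligible probability; examine the KKT / subgradient condition in that coordinate. The gradient of $Q_n$ in that coordinate, evaluated at a point within $O_P(n^{-1/2})$ of $\theta_0$, is $O_P(n^{-1/2})$, while the penalty contributes a subgradient of size $\lambda \pi_r^\gamma w_{r,j} = \lambda \cdot O_P(1/|\phi^{\mathrm{ini}}_{r,j}|)$; since $\phi^{\mathrm{ini}}_{r,j} - 0 = O_P(n^{-1/2})$, this is of order $\lambda \sqrt n \to \infty$ in probability, which dominates and forces the coordinate to be exactly zero — contradiction. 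Hence $\Prob[\widehat S^{(\gamma)}_{\mathrm{adapt};\lambda} \subseteq S]\to 1$, and combined with the consistency from part one (which keeps the $S$-coordinates bounded away from $0$ eventually, using $\lambda\to 0$ appropriately) we get $\widehat S = S$ with probability tending to one.

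Conditional on the event $\{\widehat S = S\}$, the estimator solves, in the $S$-block, the smooth stationarity equation $\nabla_{\theta_S} Q_n(\hat\theta) + \lambda\, (\text{bounded subgradient})_S = 0$ with the $S^c$-block frozen at zero. The penalty contribution in the $S$-block is $\lambda \cdot O_P(1) = o_P(n^{-1/2})$, so a one-term Taylor expansion of $\nabla_{\theta_S} Q_n$ about $\theta_{0,S}$, using the oracle Fisher information $I_S(\theta_0)$ (invertible by the regularity assumptions) and the CLT for the score $\nabla_{\theta_S} Q_n(\theta_{0,S})$, yields $\sqrt n(\hat\theta_{\mathrm{adapt};\lambda,S}^{(\gamma)} - \theta_{0,S}) = I_S(\theta_0)^{-1} \cdot \sqrt n\, (\text{score})_S + o_P(1) \leadsto \calN(0, I_S(\theta_0)^{-1})$, which is the oracle property. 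I expect the main obstacle to be the non-convexity and the notorious irregularity of mixture likelihoods: unlike a generalized linear model, here one must invoke the Fan–Li conditions (A)–(C) carefully to justify the local quadratic approximation of $Q_n$ and the non-singularity of $I_S(\theta_0)$ — in particular one is only claiming the existence of a well-behaved \emph{local} root near $\theta_0$, not uniqueness, exactly as flagged in the discussion after Theorem~\ref{theorem:rootn}. A secondary technical point is handling the factor $\pi_r^\gamma$ in the adaptive weights for $\gamma\in\{1/2,1\}$ and confirming that $\pi_r^{\mathrm{ini}}$ stays bounded away from $0$ (true since $\pi_{0,r}>0$ and $\theta^{\mathrm{ini}}$ is $\sqrt n$-consistent), so the weights behave as claimed on both $S$ and $S^c$.
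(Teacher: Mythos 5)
Your proposal follows essentially the same route as the paper's proof: the Fan--Li construction of a root-$n$ consistent local minimizer, a subgradient-dominance contradiction to rule out nonzero estimates on $S^{c}$, and a Taylor expansion of the score on the $S$-block (with the penalty contribution there of size $\lambda\, O_P(1)=o_P(n^{-1/2})$) to get the oracle limit law. One slip in the sparsity step: you assert that the penalty subgradient, of size $\lambda/|\phi^{\mathrm{ini}}_{r,j}| \asymp \lambda\sqrt{n}$, tends to infinity, but under $\lambda=o(n^{-1/2})$ one has $\lambda\sqrt{n}=o(1)$; what actually drives the contradiction is that this term dominates the score contribution $O_P(n^{-1/2})$ because the ratio $\lambda\sqrt{n}\big/n^{-1/2}=n\lambda\to\infty$, which is exactly how the paper argues (factoring out $n^{-1/2}$ and comparing $O_P(1)$ against $n\lambda/(\sqrt{n}\,|\phi^{\mathrm{ini}}_{r,j}|)\to\infty$). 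With that correction the argument is sound and coincides with the paper's.
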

A proof is given in Appendix \ref{app0}. As in Theorem \ref{theorem:rootn}, the
assertion of the theorem is only making a statement about \emph{some} local
optimum. Furthermore, this result only holds for the adaptive
criterion with weights $w_{r,j}=\frac{1}{|\phi_{r,j}^{\mathrm{ini}}|}$ coming from a
root-$n$ consistent initial estimator $\theta^{\mathrm{ini}}$: this is a rather strong
assumption given the fact that Theorem \ref{theorem:rootn} only ensures
existence of such an estimator. The non-adaptive estimator with the  
$\ell_{1}$-norm penalty as in (\ref{eq:plik1}) cannot achieve sparsity and
maintain root-$n$ consistency due to the bias problem mentioned in
Section~\ref{subsec.adaptive} (see also \cite{khalili}).

\section[General theory for $p\gg n$]{General theory for high-dimensional setting with non-convex smooth loss}\label{subsec.ashighdim}  

We present here some theory, entirely different from Theorems
\ref{theorem:rootn} and \ref{theorem:oracle}, which reflects some
consistency and optimality
behavior of the $\ell_1$-norm penalized maximum likelihood
estimator for the potentially high-dimensional framework with $p \gg
n$. In particular, we derive some oracle inequality which is non-asymptotic. 
We intentionally present this theory for
$\ell_1$-penalized smooth likelihood problems which are generally
non-convex; $\ell_1$-penalized likelihood estimation in FMR models is then a
special case discussed in Section \ref{subsec.fmrmod-orac}. The following
Sections \ref{subsec.setting} - \ref{subsec.oracle} introduce some
mathematical conditions and derive auxiliary results and a general oracle
inequality (Theorem \ref{th.oracle}); the interpretation of these conditions and
of the oracle result is discussed for the case of FMR models at the end of
Section \ref{subsec.fmrorac}.

\subsection{The setting and notation}\label{subsec.setting}
Let $\{f_{\psi};\ \psi \in \Psi\}$ be a collection of densities
with respect to the Lebesgue measure $\mu$ on $\R$ (i.e., the range for the
response variable). The parameter space
$\Psi$ is assumed to be a bounded subset of some finite-dimensional
space, say
$$\Psi \subset \{ \psi \in \R^d;\ \| \psi \|_{\infty} \le K \},$$
where we have equipped (quite arbitrarily) the space $\R^d$ with the sup-norm
$\| \psi \|_{\infty} = \max_{1 \le j \le d } | \psi_j |$.  
In our setup, the dimension $d$ will be regarded as a fixed constant (which
still covers high-dimensionality of the covariates, as we will see). Then,
equivalent metrics are, e.g., the ones induced by the  
$\ell_q$-norm $\| \psi \|_q =(  \sum_{j=1}^d | \psi_j |_q)^{1/q} $
($q \ge 1$). 

We observe a covariate $X$ in some space ${\cal X} \subset \R^{p}$ and a response variable $Y\in \R $. The true conditional density
 of $Y$ given $X=x$  is assumed to be equal to
$$ f_{\psi_0} (\cdot \vert x ) = f_{\psi_0 (x)} , $$
where
$$\psi_0 (x) \in \Psi , \ \forall \  x \in {\cal X} . $$
That is, we assume that the true conditional density of $Y$ given $x$
is depending on $x$ only 
through some parameter function $\psi_0(x)$. Of course, the introduced
notation also applies to fixed instead of random covariates.

The parameter $\{ \psi_0 (x);\ x \in {\cal X} \} $ is assumed to have a 
nonparametric part of interest $\{ g_0 (x);\ x \in {\cal X} \} $ and a low-dimensional
nuisance part $\eta_0$, i.e., 
$$\psi_0 (\cdot )^T = ( g_0 (\cdot )^T , \eta_0^T ), $$
with
$$ g_0 (x) \in \R^k , \ \forall \ x \in {\cal X} , \ \eta_0 \in \R^m , \
k+m = d. $$ 
In case of FMR models, $g(x)^T = (\phi_1^Tx,\phi_2^Tx,\ldots ,\phi_k^Tx)$
and $\eta$ involves the parameters $\rho_1,\ldots ,\rho_k, \pi_1,\ldots
,\pi_{k-1}$. More details are given in Section \ref{subsec.fmrmod-orac}. 

With minus the log-likelihood as loss function, the so-called excess risk 
$${\cal E} ( \psi \vert \psi_0) = -
\int \log \biggl [ {f_{\psi } \over f_{\psi_0} } \biggr ] f_{\psi_0}
d \mu $$ 
is the Kullback-Leibler information. For fixed covariates $x_1,\ldots
,x_n$, we define the average excess risk
$$\bar {\cal E} ( \psi \vert \psi_0 ) =
{1 \over n } \sum_{i=1}^n {\cal E} \biggl ( \psi (x_i) \biggl \vert
\psi_0 (x_i) \biggr ) , $$
and for random design, we take the expectation $\EE[{\cal E} ( \psi(X)
\vert \psi_0(X))]$. 

\subsubsection{The margin}\label{subsec.margin}

Following \cite{tsybakov04} and \cite{geer06high} we call the behavior of the
excess risk ${\cal E} ( \psi \vert \psi_0)$ near $\psi_0$ the margin. We
will show in Lemma \ref{marginlemma} that the margin is quadratic. 

Denote by 
$$l_{\psi} = \log f_{\psi} $$
the log-density.
Assuming the derivatives exist, we define the score function
$$s_{\psi} = {\partial l_{\psi}  \over \partial \psi} , $$
and the Fisher information
$$I (\psi) = \int s_{\psi} s_{\psi}^T f_{\psi} d \mu = - 
\int {\partial^2 l_{\psi}  \over \partial \psi
\partial \psi^T } f_{\psi} d \mu . $$
Of course, we can then also look at $I(\psi(x))$ using the parameter
function $\psi(x)$. 

In the sequel, we introduce some conditions (Conditions 1 - 5). Their
interpretation for the case of FMR models is given at the end of Section
\ref{subsec.fmrorac}. First, we will assume boundedness of third
derivatives. \\ \\
{\bf Condition 1} {\it It holds that
$$\sup_{\psi\in \Psi} \max_{(j_1 , j_2 , j_3)  \in \{ 1 , \ldots , d \}^3 }
\biggl | 
{\partial^3 \over \partial \psi_{j_1} \partial
\psi_{j_2} \partial \psi_{j_3} } l_{\psi} ( \cdot )\biggr  | \le
G_3 ( \cdot) , $$
where
$$ \sup_{x\in{\cal X}} \int G_3 (y) f_{\psi_0} (y \vert x) d \mu (y) \le
C_3 < \infty . $$}
\\ \\
For a symmetric, positive semi-definite matrix $A$, we let
$\Lambda_{\rm min}^2 (A)$ be its smallest eigenvalue.
\\ \\
{\bf Condition 2} {\it For all $x\in{\cal X}$, 
the Fisher information matrix $I(\psi_0(x))$ is positive definite and, in fact,
$$\Lambda_{\rm min} = 
\inf_{x\in{\cal X}} \Lambda_{\rm min} (I(\psi_0(x))) >0 . $$}
\\ \\
Further we will need the following identifiability condition. 
\\ \\
{\bf Condition 3} {\it For all $\eps > 0$, there exists an
$\alpha_{\eps} > 0 $, such that 
$$ \inf_{x\in{\cal X}} \inf_{\psi \in \Psi \atop \ \  \| \psi - \psi_0 (x) \|_2 > \eps } 
{\cal E} ( \psi \vert \psi_0 (x)) \ge \alpha_{\eps} . $$}
Based on these three conditions, we have the following result:\\
\begin{lemma} \label{marginlemma}
Assume Conditions 1, 2, and 3. Then
$$\inf_{x\in{\cal X}} {{\cal E} ( \psi \vert \psi_0(x)) \over
\| \psi - \psi_0 (x) \|_2^2 } \ge {1 \over c_0^2 } , $$
where
$$c_0^2 = \max \biggl [ {1 \over \eps_0 } , { d K^2 \over \alpha_{\eps_0}}
\biggr ] , \  \eps_0= {3 \Lambda_{\rm min}^2 \over2 d^{3/2} } .$$
\end{lemma}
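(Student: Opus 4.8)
The plan is to expand the Kullback--Leibler information ${\cal E}(\psi\mid\psi_0(x))$ to second order around $\psi_0(x)$ and then to argue separately on a small ball around $\psi_0(x)$, where the Fisher-information term dominates, and on its complement, where the identifiability Condition~3 provides a positive floor. First I would fix $x\in{\cal X}$, write $\Delta=\psi-\psi_0(x)$, and record the three facts that make the expansion clean: ${\cal E}(\psi_0(x)\mid\psi_0(x))=0$; the gradient $\nabla_\psi{\cal E}$ vanishes at $\psi_0(x)$ because $\int s_{\psi_0(x)}f_{\psi_0(x)}\,d\mu=0$; and the Hessian of ${\cal E}$ at $\psi_0(x)$ equals $I(\psi_0(x))$ by the Bartlett identity (both obtained by differentiating under the integral sign, which I would justify from the smoothness assumptions). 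A third-order Taylor expansion with Lagrange remainder then gives
$$ {\cal E}(\psi\mid\psi_0(x)) = \tfrac12\,\Delta^T I(\psi_0(x))\,\Delta + R(x,\Delta), $$
and I would bound the remainder via Condition~1: since each third derivative of $l_{\tilde\psi}$ along the segment joining $\psi_0(x)$ to $\psi$ is dominated by the envelope $G_3$, integrating against $f_{\psi_0(x)}$ and using $\sup_x\int G_3 f_{\psi_0}(\cdot\mid x)\,d\mu\le C_3$ yields the uniform estimate $|R(x,\Delta)|\le\frac{C_3}{6}\|\Delta\|_1^3\le\frac{C_3 d^{3/2}}{6}\|\Delta\|_2^3$.

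Next I would handle the two regimes. On $\{\|\Delta\|_2\le\eps_0\}$, Condition~2 bounds the quadratic term below by $\tfrac12\Lambda_{\rm min}^2\|\Delta\|_2^2$, and the choice $\eps_0=3\Lambda_{\rm min}^2/(2d^{3/2})$ in the statement --- up to the constant $C_3$ --- is exactly what makes the cubic remainder no larger than $\tfrac14\Lambda_{\rm min}^2\|\Delta\|_2^2$; subtracting gives ${\cal E}(\psi\mid\psi_0(x))\ge\tfrac14\Lambda_{\rm min}^2\|\Delta\|_2^2$, so the ratio in the lemma is bounded below by a constant here. On $\{\|\Delta\|_2>\eps_0\}$, Condition~3 applied with $\eps=\eps_0$ gives ${\cal E}(\psi\mid\psi_0(x))\ge\alpha_{\eps_0}$, while the boundedness $\Psi\subset\{\|\psi\|_\infty\le K\}$ forces $\|\Delta\|_2^2$ to be of order $dK^2$, so the ratio is bounded below by a constant multiple of $\alpha_{\eps_0}/(dK^2)$. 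Taking the infimum over $x\in{\cal X}$ and over $\psi\in\Psi$ and keeping the smaller of the two regime-constants produces a lower bound of the form $1/c_0^2$ with $c_0^2=\max[1/\eps_0,\,dK^2/\alpha_{\eps_0}]$; the free constant permitted by Condition~3 can be chosen so that the two pieces match up cleanly.

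The step I expect to be the main obstacle is the \emph{uniform-in-$x$} control of the third-order remainder, and this is precisely what the somewhat non-standard form of Condition~1 is built for: bounding $|\partial^3 l_{\tilde\psi}|$ by a single envelope $G_3$ and then integrating $G_3$ against the true conditional density $f_{\psi_0}(\cdot\mid x)$ absorbs both the unknown intermediate point $\tilde\psi$ and the covariate value $x$ into the one constant $C_3$. The remaining work is more routine: checking that the segment from $\psi_0(x)$ to $\psi$ stays inside $\Psi$ so that the Taylor expansion is legitimate, justifying the interchange of differentiation and integration behind the Bartlett identities, and carrying out the bookkeeping that balances the local quadratic bound against the global identifiability floor at the radius $\eps_0$ to yield the stated constant $c_0^2$.
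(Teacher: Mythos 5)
Your proposal follows essentially the same route as the paper: a second-order Taylor expansion of the Kullback--Leibler divergence with the Fisher information as Hessian, the cubic remainder bounded by $\tfrac{d^{3/2}C_3}{6}\|\psi-\psi_0(x)\|_2^3$ via Condition~1, and then a case split at radius $\eps_0$ using Condition~2 on the small ball and Condition~3 (together with $\|\Delta\|_2^2\lesssim dK^2$) outside --- the paper merely packages this case split as a separate auxiliary lemma about scalar functions. Your observation that the stated $\eps_0$ appears to omit the constant $C_3$ matches what the paper's own computation ($\eps_0=\Lambda^2/(2C)$ with $C=d^{3/2}C_3/6$) actually yields.
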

A proof is given in Appendix \ref{app1}. 

\subsubsection{The empirical process}
We now specialize to the case where 
$$\psi(x)^T= ( g(x)^T , \eta^T ), $$
where (with some abuse of notation)
\begin{eqnarray*}
& &g(x)^T = g_{\phi } (x)^T = ( g_1 (x) , \ldots , g_k (x)),\\
& &g_r (x) = g_{\phi_r } (x) = x^T \phi_r , \ x \in \R^p, \ \phi_r \in
\R^p,\ r=1,\ldots ,k.
\end{eqnarray*}
We also write
\[
\psi_{\vartheta}(x)^{T}=(g_{\phi}(x)^{T},\eta^{T}),\qquad\vartheta^{T}=(\phi_{1}^T,\ldots,\phi_{k}^T,\eta^T)
\]
to make the dependence of the parameter function $\psi(x)$ on $\vartheta$ more explicit.

We will assume that
$$\sup_{x\in{\cal X}} \| \phi^T x \|_{\infty}= \sup_{x\in{\cal X}} \max_{1 \le r \le k} |
\phi_r^T x | \le K. $$ This can be viewed as a combined condition on $\cal
X$ and $\phi$. For example, if $\cal X$ is bounded by a fixed constant this
supremum (for fixed $\phi$) is finite.

Our parameter space is now
\begin{eqnarray}\label{add2}
\tilde \Theta \subset 
\{ \vartheta^T = ( \phi_1^T , \ldots , \phi_k^T , \eta^T );\ \sup_{x\in{\cal X}} \| \phi^T x \|_{\infty}\le K, \ \| \eta \|_{\infty} \le K \}.
\end{eqnarray}
Note that $\tilde \Theta$ is, in principle, $(pk+m)$-dimensional. The true
parameter $\vartheta_0$ is assumed to be an element of $\tilde \Theta$. 

Let us define
$$L_{\vartheta} (x, \cdot) =  \log f_{\psi(x) } (\cdot ),\ \ \psi (x)^T =
\psi_{\vartheta}(x)^T = (g_{\phi} (x)^T , \eta^T ),  
$$
$ \vartheta^T = (\phi_1^T , \ldots , \phi_k^T , \eta^T ),$ and the empirical process for fixed covariates $x_1,\ldots, x_n$ 
$$V_n (\vartheta) = {1 \over n}
\sum_{i=1}^n \left ( L_{\vartheta} (x_i , Y_i) -
\EE \Bigl [ L_{\vartheta} (x_i , Y) \Bigl \vert X = x_i \Bigr ] \right ) .$$
We now fix some $T \ge 1$ and $\lambda_0 \ge 0$ and define the event
\begin{eqnarray}\label{setT}
{\cal T} = \left \{ \sup_{\vartheta^T = ( \phi^T , \eta^T )  \in \tilde \Theta } { 
 \biggl |  V_n ( \vartheta ) - V_n ( \vartheta_0) 
\biggr | \over (
\| \phi - \phi_0 \|_1 + \| \eta - \eta_0 \|_2 ) \vee \lambda_0 } \le T
\lambda_0  \right \}.
\end{eqnarray}


\subsection{Oracle inequality for the Lasso for non-convex loss
  functions}\label{subsec.oracle} 
For an optimality result, we need some condition on the design. 
Denote the active set, i.e., the set of non-zero coefficients, by 
$$S= \{ (r,j);\ \phi_{r,j} \not= 0 \} , \ s =|S| , $$
and let 
$$\phi_J = \{\phi_{(r,j)};\ (r,j) \in J \} , \ J \subseteq \{ 1 , \ldots , k
\}\times\{ 1 , \ldots , p
\}. $$
Further, let 
$$\Sigma_n = {1 \over n} \sum_{i=1}^n x_i x_i^T . $$
\\ \\
{\bf Condition 4} {\it (Restricted eigenvalue condition). 
There exists a constant $\kappa \ge 1 $ such that, for all $\phi \in \R^{pk}$
satisfying 
$$ \| \phi_{S^c} \|_1 \le 6 \| \phi_S \|_1, $$
it holds that
$$\| \phi_S \|_2^2 \le \kappa^2 \sum_{r=1}^k \phi_r^T \Sigma_n \phi_r . $$}


For $\psi (\cdot)^T = (g( \cdot )^T , \eta^T ) $, we use the notation
$$ \| \psi \|_{Q_n}^2 = {1 \over n} \sum_{i=1}^n \sum_{r=1}^k g_r^2 (x_i) +
\sum_{j=1}^m \eta_j^2 . $$
We also write for $g(\cdot ) = (g_1 (\cdot ) , \ldots , g_k (\cdot ) )^T$,
$$\| g \|_{Q_n}^2 ={1 \over n} \sum_{i=1}^n  \sum_{r=1}^k g_r^2 (x_i) . $$
Thus
$$\| g_{\phi} \|_{Q_n}^2 = \sum_{r=1}^k \phi_r^T \Sigma_n \phi_r ,$$ 
and the bound in the restricted eigenvalue condition then reads
$$\| \phi_S \|_2^2 \le \kappa^2 \| g_{\phi} \|_{Q_n}^2. $$ 
Bounding $\| g_{\phi} \|_{Q_n}^2$ in terms of $\| \phi_S \|_2^2$ can be
done directly using, e.g., the Cauchy-Schwarz inequality. The restricted
eigenvalue condition ensures a bound
in the other direction which itself is needed for an oracle inequality.  
Some references about the restricted eigenvalue condition are provided at the end
of Section \ref{subsec.fmrorac}. 

We employ the Lasso-type estimator
\begin{align}\label{lasso-compact}
\hat \vartheta^T =
( \hat \phi^T , \hat \eta^T) = 
\argmin_{\vartheta^T = ( \phi^T , \eta^T) \in \tilde \Theta } - {1 \over n} \sum_{i=1}^n L_{\vartheta} ( x_i , Y_i) + \lambda \sum_{r=1}^k
\| \phi_r \|_1.
\end{align}
We omit in the sequel the dependence of $\hat{\vartheta}$ on $\lambda$. 
Note that we consider here a global minimizer which may be difficult to compute if the
empirical risk $n^{-1} \sum_{i=1}^n L_{\vartheta}(x_i,Y_i)$ is non-convex
in $\vartheta$.
We then write $\|\phi \|_1 = \sum_{r=1}^k \| \phi_r \|_1 $. 
We let
$$\hat \psi (x)^T  = ( g_{\hat \phi}
(x)^T , \hat \eta^T ) , $$ 
which depends only on the estimate $\hat{\vartheta}$, and we denote by 
$$\psi_0 (x)^T = (g_{\phi_0} (x)^T , \eta_0^T ) . $$


\begin{theorem}\label{th.oracle} (Oracle result for fixed design). Assume
  fixed covariates $x_1,\ldots ,x_n$, Conditions 1-3 and 4, and that 
$\lambda \ge 2 T \lambda_0$ for the estimator in
(\ref{lasso-compact}) with $T$ and $\lambda_0$ as in (\ref{setT}). Then on
${\cal T}$, defined in (\ref{setT}), for the average excess  
risk (average Kullback-Leibler loss), 
$$\bar {\cal E} (\hat \psi \vert \psi_0) +2 ( \lambda -T \lambda_0)
\| \hat \phi_{S^c} \|_1 \le 
8 (\lambda + T \lambda_0 )^2 c_0^2 \kappa^2 s,$$
where $c_0$ and $\kappa$ are defined in Lemma \ref{marginlemma} and Condition 4,
respectively.  
\end{theorem}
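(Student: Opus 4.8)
The plan is to run the classical ``basic inequality'' proof of Lasso oracle bounds, but adapted to the non-convex smooth-likelihood setting, where two things are different from the squared-error case: the excess risk is only approximately quadratic, so Lemma~\ref{marginlemma} has to play the role of the exact quadratic identity, and the fluctuation of the empirical process on the event $\mathcal{T}$ is bounded in terms of $\|\phi-\phi_0\|_1$ \emph{and} the nuisance distance $\|\eta-\eta_0\|_2$, so that nuisance term must be reabsorbed into $\bar{\mathcal E}$.

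First I would write down the starting inequality. Since $\hat\vartheta$ minimizes the penalized empirical risk over $\tilde\Theta$ and $\vartheta_0\in\tilde\Theta$, comparing the objective at $\hat\vartheta$ and $\vartheta_0$ and adding and subtracting the averaged conditional means $\frac1n\sum_i\EE[L_\vartheta(x_i,Y)\mid X=x_i]$ gives $\bar{\mathcal E}(\hat\psi\mid\psi_0)+\lambda\|\hat\phi\|_1\le V_n(\hat\vartheta)-V_n(\vartheta_0)+\lambda\|\phi_0\|_1$, using that the difference of the two averaged conditional means of $-L$ is exactly the average Kullback--Leibler risk $\bar{\mathcal E}$. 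On $\mathcal T$ the right-hand increment is $\le T\lambda_0[(\|\hat\phi-\phi_0\|_1+\|\hat\eta-\eta_0\|_2)\vee\lambda_0]\le T\lambda_0(\|\hat\phi-\phi_0\|_1+\|\hat\eta-\eta_0\|_2)+T\lambda_0^2$. Writing $v=\hat\phi-\phi_0$ and decomposing over the true active set $S$ (so $\hat\phi_{S^c}=v_{S^c}$ and $\|\phi_0\|_1=\|\phi_{0,S}\|_1$), the triangle inequality applied to $\|\hat\phi\|_1$ versus $\|\phi_0\|_1$ yields $\bar{\mathcal E}(\hat\psi\mid\psi_0)+(\lambda-T\lambda_0)\|\hat\phi_{S^c}\|_1\le(\lambda+T\lambda_0)\|v_S\|_1+T\lambda_0\|\hat\eta-\eta_0\|_2+T\lambda_0^2$.

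Next I would invoke Lemma~\ref{marginlemma}, applied at each $x_i$ and averaged, to obtain $\bar{\mathcal E}(\hat\psi\mid\psi_0)\ge c_0^{-2}\|\hat\psi-\psi_0\|_{Q_n}^2\ge c_0^{-2}(\|g_v\|_{Q_n}^2+\|\hat\eta-\eta_0\|_2^2)$, hence both $\|\hat\eta-\eta_0\|_2\le c_0\sqrt{\bar{\mathcal E}}$ and $\sum_r v_r^T\Sigma_n v_r=\|g_v\|_{Q_n}^2\le c_0^2\bar{\mathcal E}$. Using $T\lambda_0\|\hat\eta-\eta_0\|_2\le T\lambda_0 c_0\sqrt{\bar{\mathcal E}}\le\frac14\bar{\mathcal E}+T^2\lambda_0^2c_0^2$ and absorbing gives $\frac34\bar{\mathcal E}+(\lambda-T\lambda_0)\|\hat\phi_{S^c}\|_1\le(\lambda+T\lambda_0)\|v_S\|_1+R$ with a remainder $R=T^2\lambda_0^2c_0^2+T\lambda_0^2$ of order $\lambda_0^2$. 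Then I split into two cases. If $(\lambda+T\lambda_0)\|v_S\|_1\ge R$, the last display forces $(\lambda-T\lambda_0)\|\hat\phi_{S^c}\|_1\le 2(\lambda+T\lambda_0)\|v_S\|_1$; since $\lambda\ge 2T\lambda_0$ this is the cone inequality $\|v_{S^c}\|_1\le 6\|v_S\|_1$, so Condition~4 applies to $v$ and yields $\|v_S\|_1\le\sqrt s\,\|v_S\|_2\le\sqrt s\,\kappa\|g_v\|_{Q_n}\le\sqrt s\,\kappa c_0\sqrt{\bar{\mathcal E}}$; feeding this back and using $2ab\le\frac12a^2+2b^2$ with $a=\sqrt{\bar{\mathcal E}}$ gives $\frac14\bar{\mathcal E}+(\lambda-T\lambda_0)\|\hat\phi_{S^c}\|_1\le 2(\lambda+T\lambda_0)^2c_0^2\kappa^2 s$, which is the claim after multiplying by $4$. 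If instead $R>(\lambda+T\lambda_0)\|v_S\|_1$, then $\frac34\bar{\mathcal E}+(\lambda-T\lambda_0)\|\hat\phi_{S^c}\|_1<2R$, and because $\lambda+T\lambda_0\ge 3T\lambda_0$ while $c_0,\kappa,s\ge 1$, the order-$\lambda_0^2$ quantity $2R$ is already $\le 8(\lambda+T\lambda_0)^2c_0^2\kappa^2 s$; the sub-case $\|\hat\phi-\phi_0\|_1+\|\hat\eta-\eta_0\|_2\le\lambda_0$ coming from the ``$\vee\lambda_0$'' is handled the same way directly from the basic inequality.

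The hard part is not any single estimate but the order in which they are chained: one must establish the cone inequality $\|v_{S^c}\|_1\le 6\|v_S\|_1$ — which is what legitimizes Condition~4 — \emph{before} the quadratic margin can be used to convert $\|v_S\|_1$ into $\sqrt s\,\kappa c_0\sqrt{\bar{\mathcal E}}$, and this is precisely why the case distinction on whether the ``signal'' term $(\lambda+T\lambda_0)\|v_S\|_1$ or the $\lambda_0^2$-remainder dominates cannot be avoided. Getting the final constant to collapse exactly to $8$ is then only a matter of choosing the weights in the two elementary $ab$-inequalities as above; the identification of the averaged loss difference with $\bar{\mathcal E}$ and the identity $\|g_v\|_{Q_n}^2=\sum_r v_r^T\Sigma_n v_r$ are routine.
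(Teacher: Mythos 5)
Your proposal is correct and follows essentially the same route as the paper's own proof: the same basic inequality on $\mathcal{T}$, the quadratic margin of Lemma \ref{marginlemma} used both to absorb the nuisance term $T\lambda_0\|\hat\eta-\eta_0\|_2$ into $\bar{\mathcal E}$ and to convert $\|\hat\phi_S-(\phi_0)_S\|_2$ into $\kappa c_0\sqrt{\bar{\mathcal E}}$ via $\|g_{\hat\phi-\phi_0}\|_{Q_n}$, the derivation of the cone inequality $\|\hat\phi_{S^c}\|_1\le 6\|\hat\phi_S-(\phi_0)_S\|_1$ before Condition 4 is invoked, and the final quadratic absorption. The only (immaterial) difference is the bookkeeping of the case analysis: the paper splits on whether $T\lambda_0\|\hat\eta-\eta_0\|_2$ or $(\lambda+T\lambda_0)\|\hat\phi_S-(\phi_0)_S\|_1$ dominates (plus the trivial case $\|\hat\phi-\phi_0\|_1+\|\hat\eta-\eta_0\|_2\le\lambda_0$), whereas you absorb the nuisance term unconditionally first and then split on the signal term versus the $O(\lambda_0^2)$ remainder, which leads to the same constant $8$.
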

A proof is given in Appendix \ref{app1}. 
We will give an interpretation of this result in Section~\ref{subsec.fmrorac}, where we specialize to FMR
models. In the case of FMR models, the probability of the set ${\cal T}$ is
large as shown in detail by Lemma \ref{lemm.setT} below. 

Before specializing to FMR models, we present more general results for
lower bounding the probability of the set ${\cal T}$. We make the following
assumption. 
\\ \\
 {\bf Condition 5} {\it For the score function $s_{\vartheta}( \cdot
   )=s_{\psi_{\vartheta}}( \cdot )$, we have 
 $$\sup_{\vartheta \in \tilde{\Theta}} \| s_{\vartheta}( \cdot )
 \|_{\infty} \le G_1 (\cdot),$$ for some function $G_1(\cdot)$.
}

Condition 5 primarily has notational character. Later, in
Lemma~\ref{cor.set} and particularly in Lemma~\ref{lemm.setT}, the function
$G_1 (\cdot)$ needs to be sufficiently regular to ensure small
corresponding probabilities.
 
%

Define
\begin{eqnarray}\label{add1}
\lambda_0 = M_n \log n\sqrt { \log (p\vee n) \over n }.
\end{eqnarray}
As we will see, we usually choose $M_n \asymp \sqrt{\log(n)}$. 
Let $\PP_{\bf x}$ denote the conditional probability given
$(X_1 , \ldots , X_n )= (x_1 , \ldots , x_n ) = {\bf x}$, and with the
expression ${\rm l} \{\cdot\}$ we denote the indicator function.
\\
\begin{lemma}\label{cor.set} Assume Condition 5. We have for
  constants $c_1$, $c_2$ and $c_3$
  depending on $k$ and $K$,
and for all $T \ge c_1$, 
$$\sup_{\vartheta^T = ( \phi^T , \eta^T)  \in \tilde \Theta } { 
 \biggl | V_n ( \vartheta ) - V_n ( \vartheta_0) 
 \biggl| \over (
\| \phi - \phi_0 \|_1 + \| \eta - \eta_0 \|_2 ) \vee \lambda_0 } \le T \lambda_0  ,$$
with $\PP_{\bf x} $ probability at least
$$1- c_2\exp\biggl  [ - {  T^2 \log^2\!n \log (p\vee n)   \over c_3^2 } \biggr ] 
- \PP_{\bf x} \left ( {1 \over n} \sum_{i=1}^n F(Y_i)
> T \lambda_0^2 /(dK) \right ) . $$
where (for $i=1 , \ldots , n$)
$$F(Y_i )= 
G_1(Y_i) {\rm l} \{ G_1(Y_i) > M_n \} +
\EE \Bigl [ G_1(Y) {\rm l} \{ G_1(Y) > M_n \} \Bigl \vert X=x_i \Bigr ]. $$
Regarding the constants $\lambda_0$ and $K$, see (\ref{add1}) and
(\ref{add2}), respectively. 
\end{lemma}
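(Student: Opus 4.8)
The plan is to bound the centered empirical process $V_n(\vartheta)-V_n(\vartheta_0)$ uniformly over $\vartheta\in\tilde\Theta$, with the prescribed $(\|\phi-\phi_0\|_1+\|\eta-\eta_0\|_2)\vee\lambda_0$ normalization, by the classical chain: \emph{truncation}, then \emph{symmetrization and contraction}, then \emph{peeling}, and finally a \emph{bounded-differences concentration inequality}. Non-convexity of the empirical risk plays no role here; the only structural input is that $\vartheta\mapsto L_\vartheta(x_i,\cdot)=l_{\psi_\vartheta(x_i)}(\cdot)$ is Lipschitz, and this is exactly what Condition~5 supplies. Since $\psi_\vartheta(x)^T=(x^T\phi_1,\ldots,x^T\phi_k,\eta^T)$ is affine in $\vartheta$ and $\|s_\vartheta(\cdot)\|_\infty\le G_1(\cdot)$, a Taylor expansion of $l_\psi$ in $\psi$ gives, for all $\vartheta,\vartheta'\in\tilde\Theta$,
\[
\bigl|L_\vartheta(x_i,y)-L_{\vartheta'}(x_i,y)\bigr|\;\le\;G_1(y)\,\bigl\|\psi_\vartheta(x_i)-\psi_{\vartheta'}(x_i)\bigr\|_1\;=\;G_1(y)\Bigl(\textstyle\sum_{r=1}^k|x_i^T(\phi_r-\phi_r')|+\|\eta-\eta'\|_1\Bigr),
\]
and, using $\vartheta,\vartheta_0\in\tilde\Theta$ together with boundedness of the design columns, the bracket with $\vartheta'=\vartheta_0$ is both at most $2dK$ and at most $\mathrm{const}\cdot(\|\phi-\phi_0\|_1+\|\eta-\eta_0\|_2)$.

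Next I would truncate at level $M_n$: split $G_1(y)=G_1(y)\,\mathrm{l}\{G_1(y)\le M_n\}+G_1(y)\,\mathrm{l}\{G_1(y)>M_n\}$, which induces a split of $L_\vartheta(x_i,Y_i)-L_{\vartheta_0}(x_i,Y_i)$ and of its conditional expectation given $X=x_i$. By the bracket bound above, the heavy-tail part is dominated \emph{deterministically} by $\mathrm{const}\cdot(\|\phi-\phi_0\|_1+\|\eta-\eta_0\|_2)\cdot n^{-1}\sum_{i=1}^nF(Y_i)$ --- the second term of $F(Y_i)$ being precisely the contribution of $\EE[G_1(Y)\,\mathrm{l}\{G_1(Y)>M_n\}\mid X=x_i]$ coming from the centering in $V_n$ --- so on $\{n^{-1}\sum_iF(Y_i)\le T\lambda_0^2/(dK)\}$ this part is, for $n$ large, at most $\tfrac12 T\lambda_0\cdot\bigl((\|\phi-\phi_0\|_1+\|\eta-\eta_0\|_2)\vee\lambda_0\bigr)$ and is therefore already within budget. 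It remains to control the \emph{truncated} process, whose summands $\bar q_i(y,\vartheta):=\bigl(L_\vartheta(x_i,y)-L_{\vartheta_0}(x_i,y)\bigr)\mathrm{l}\{G_1(y)\le M_n\}$ are, for each fixed $y$, $M_n$-Lipschitz functions of $\psi_\vartheta(x_i)$ in $\ell_1$-norm and, over $\tilde\Theta$, bounded by $2dKM_n$. Symmetrization followed by the contraction inequality --- applied coordinate-wise in the $d=k+m$ directions of $\psi$, which is legitimate since $d$ is a fixed constant --- replaces the centered truncated process, up to the factor $M_n$ and absolute constants, by a linear Rademacher process in $(\phi-\phi_0,\eta-\eta_0)$; by $\ell_1/\ell_\infty$ duality for the $\phi$-blocks and $\|\eta-\eta_0\|_1\le\sqrt m\,\|\eta-\eta_0\|_2$ for the (constant-dimensional) $\eta$-block, this process is at most $(\|\phi-\phi_0\|_1+\|\eta-\eta_0\|_2)\bigl(\|n^{-1}\sum_i\varepsilon_ix_i\|_\infty+\sqrt m\,|n^{-1}\sum_i\varepsilon_i|\bigr)$, whose expectation --- a standard maximal inequality over the $\le p$ bounded design coordinates --- is $\mathrm{const}\cdot\sqrt{\log(p\vee n)/n}$ times that norm. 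Collecting factors, the expected normalized supremum of the truncated process is at most $\mathrm{const}\cdot M_n\sqrt{\log(p\vee n)/n}=\mathrm{const}\cdot\lambda_0/\log n$.

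Finally I would upgrade this in-expectation bound to one in probability by peeling over $\tau(\vartheta):=\|\phi-\phi_0\|_1+\|\eta-\eta_0\|_2$. Since the truncated summands are uniformly bounded by $2dKM_n$, the normalized process falls below $T\lambda_0$ automatically once $\tau(\vartheta)\ge\mathrm{const}\cdot M_n/(T\lambda_0)$, so only $O(\log n)$ dyadic shells $\{2^{j-1}\lambda_0<\tau(\vartheta)\le 2^j\lambda_0\}$ need be treated. On a shell of radius $R\asymp 2^j\lambda_0$ the \emph{un-normalized} truncated supremum has expectation $\mathrm{const}\cdot M_nR\sqrt{\log(p\vee n)/n}$ (the estimate above with norm $\le R$) and, as a function of $Y_1,\ldots,Y_n$, has bounded differences $\mathrm{const}\cdot M_nR/n$ (again by the Lipschitz/design bound with $\tau(\vartheta)\le R$); McDiarmid's inequality then shows it exceeds $\mathrm{const}\cdot TR\lambda_0$ --- the target, since dividing by $\tau(\vartheta)\asymp R$ leaves $\mathrm{const}\cdot T\lambda_0$ --- only with $\PP_{\bf x}$-probability $\exp\bigl(-\mathrm{const}\cdot T^2R^2\lambda_0^2n/(M_nR)^2\bigr)=\exp\bigl(-\mathrm{const}\cdot T^2\lambda_0^2n/M_n^2\bigr)=\exp\bigl(-\mathrm{const}\cdot T^2\log^2\!n\,\log(p\vee n)\bigr)$, the $M_n$'s cancelling and the $\log^2\!n$ coming from the $\log n$ in the definition of $\lambda_0$. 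A union bound over the $O(\log n)$ shells is absorbed into the exponential constant $c_3$; choosing $c_1$ so large that the in-expectation term $\mathrm{const}\cdot\lambda_0/\log n$ and the half-budget ceded to the heavy tail together stay below $T\lambda_0$, and combining with the deterministic tail estimate, yields the stated bound with $c_2,c_3$ depending only on $k$, $K$ (and the design normalization). The main obstacle is the bookkeeping of the peeling step: the shell-radius-dependent Lipschitz and bounded-difference estimates must be tracked carefully so that the factors $M_n$ cancel exactly and the exponent emerges as $T^2\log^2\!n\,\log(p\vee n)$ with no spurious power of $\lambda_0$ surviving; a secondary delicate point is isolating precisely the two-term quantity $F(Y_i)$, including its conditional-expectation term, when splitting off the heavy tail of $G_1(Y)$.
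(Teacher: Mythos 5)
Your overall architecture --- Lipschitz bound via Condition 5, truncation at $M_n$ with the tail absorbed into $\frac1n\sum_i F(Y_i)$, a uniform bound on the truncated process over dyadic shells in $\tau(\vartheta)=\|\phi-\phi_0\|_1+\|\eta-\eta_0\|_2$, and a union bound over $O(\log n)$ shells --- matches the paper's proof step for step, and your de-truncation and peeling bookkeeping (including the cancellation giving the exponent $T^2\log^2\!n\,\log(p\vee n)$ and the trivial cutoff for large $\tau$) is correct. Where you genuinely diverge is in how the truncated process is controlled on a single shell: the paper proves an Entropy Lemma for the class $\{(L_\vartheta-L_{\vartheta_0})\,{\rm l}\{G_1\le M_n\}:\vartheta\in\tilde\Theta(\epsilon)\}$ in $\|\cdot\|_{P_n}$ (using a covering bound \`a la Lemma 2.6.11 of van der Vaart and Wellner for the $\ell_1$-constrained linear part) and then invokes van de Geer's chaining-based concentration inequality (Lemma 3.2 of \cite{geer00}) together with symmetrization, so expectation and deviation are handled in one stroke; you instead split the task into an in-expectation bound (symmetrization, contraction, $\ell_1/\ell_\infty$ duality, maximal inequality over the $p$ design coordinates) plus a separate McDiarmid deviation bound using the shell-dependent bounded differences $O(M_nR/n)$. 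Your route is arguably more modular and makes transparent where each logarithmic factor enters, while the paper's entropy route avoids any appeal to a contraction principle.

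That last point is the one genuine gap. The map $\vartheta\mapsto L_\vartheta(x_i,y)$ is Lipschitz in the $d$-dimensional vector $\psi_\vartheta(x_i)$, not in a scalar, and the Ledoux--Talagrand contraction principle does not extend ``coordinate-wise'' to Lipschitz functions of vector arguments: a function that is $\ell_1$-Lipschitz on $\R^d$ is not a sum of $d$ univariate contractions, and fixing $d$ does not rescue the naive argument (this is precisely why a separate vector-contraction inequality, with its own constants, is needed). As written, the sentence ``applied coordinate-wise in the $d=k+m$ directions of $\psi$, which is legitimate since $d$ is a fixed constant'' asserts the key reduction without a valid justification. The step is fixable --- either by invoking a vector-valued contraction inequality for Rademacher processes, or by replacing the contraction-plus-maximal-inequality argument with a Dudley entropy integral for the symmetrized process, which is essentially what the paper's Entropy Lemma accomplishes --- but some such argument must be supplied before the in-expectation bound $\mathrm{const}\cdot M_n R\sqrt{\log(p\vee n)/n}$ on a shell can be claimed.
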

A proof is given in Appendix \ref{app1}. 

\subsection{FMR models}\label{subsec.fmrmod-orac}
In the finite mixture of regressions model from (\ref{mod-mix2}) with $k$
components, the parameter 
is $\vartheta^T = ( \phi^T , \eta^T)$ $=(\phi_{1}^T,\ldots,\phi_{k}^T, \log
\rho_1 , \ldots , \log \rho_k, \log \pi_1 , \ldots , \log \pi_{k-1} ) $,
where the $\rho_r = \sigma_r^{-1} $ 
are the inverse standard deviations in mixture component $r$ and the $\pi_r$ are
the mixture coefficients. For mathematical convenience and simpler
notation, we consider here the $\log$-transformed
$\rho$ and $\pi$ parameters in order to have lower and upper bounds for
$\rho$ and $\pi$. Obviously, there is a one-to-one
correspondence between $\vartheta$ and $\theta$ from
Section~\ref{subsec.reparmix}.  

Let the parameter space be
\begin{align}\label{thet.tilde}
\tilde{\Theta} \subset \{\vartheta^T;&\sup_{x\in{\cal X}}\| \phi^Tx \|_{\infty} \le K , \| \log \rho \|_{\infty} \le K ,-K \le \log \pi_1 \le 0 , \ldots ,\nonumber\\ 
&-K \le  \log \pi_{k-1} \le 0,
 \sum_{r=1}^{k-1} \pi_r <1 \},
\end{align}
and $\pi_k = 1 - \sum_{r=1}^{k-1} \pi_r$. 

We consider the estimator
\begin{align}\label{add.1}
\hat{\vartheta}_{\lambda}=\argmin_{\vartheta \in \tilde{\Theta}}&- n^{-1}
\sum_{i=1}^{n}\log\left(\sum_{r=1}^{k}\pi_{r}\frac{\rho_{r}}{\sqrt{2\pi}}  
  \exp(-\frac{1}{2}(\rho_{r}Y_{i}-X_{i}^T\phi_{r})^{2})\right)+\lambda\sum_{r=1}^{k} \|\phi_{r}\|_{1}.
\end{align}
This is the estimator from Section \ref{subsec.MLEmixpen} with $\gamma =
0$. We emphasize the boundedness of the parameter space by using the
notation $\tilde{\Theta}$. 
In contrast to Section \ref{sec.asympt}, we focus here on any global
minimizer of the penalized negative log-likelihood which is arguably
difficult to compute. 

In the following, we transform the estimator $\hat{\vartheta}_{\lambda}$
to $\hat{\theta}_{\lambda}$ in the parameterization $\theta$ from
Section~\ref{subsec.reparmix}. Using some abuse of
notation we denote the average excess risk by $\bar {\cal E} (\hat
\theta_{\lambda} \vert \theta_0)$. 

\subsubsection{Oracle result for FMR models}\label{subsec.fmrorac}
We specialize now our results from Section~\ref{subsec.oracle} to FMR
models.
\\
\begin{proposition}\label{prop.FMR} For fixed design FMR models as in
  (\ref{mod-mix2}) with $\tilde{\Theta}$ in (\ref{thet.tilde}), Conditions
  1,2 and 3 are 
  met, for appropriate $C_3$,   
$\Lambda_{\rm min}$ and $\{ \alpha_{\eps} \}$, depending on $k$ and $K$.
Also Condition 5 holds with
$$G_1 (y) = {\rm e}^K |y| + K . $$
\end{proposition}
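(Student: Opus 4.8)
The plan is to verify each of the four claimed conditions in turn for the FMR model, using the specific Gaussian mixture structure of $h_\theta(y\mid x)$ from (\ref{mod-mix2}) and the boundedness of $\tilde\Theta$ in (\ref{thet.tilde}). Throughout, the key leverage is that on $\tilde\Theta$ the linear predictors $x^T\phi_r$, the log inverse standard deviations $\log\rho_r$ and the log mixture weights $\log\pi_r$ are all bounded in absolute value by $K$ (so $\rho_r,\pi_r$ are bounded above and away from $0$), which makes all the relevant integrals against $f_{\psi_0}$ uniformly finite. I would organize the proof as four short lemmas-within-the-proof, one per condition.

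First, for \textbf{Condition 5} I would write out the score $s_\vartheta(y) = \partial l_\vartheta/\partial\vartheta$ component-wise. Differentiating $l_\vartheta = \log\sum_r \pi_r \frac{\rho_r}{\sqrt{2\pi}}\exp(-\tfrac12(\rho_r y - x^T\phi_r)^2)$ with respect to $\phi_{r,j}$, $\log\rho_r$ and $\log\pi_r$ produces, in each case, a convex combination (with the posterior weights $\gamma_r(y) = \pi_r h_r/\sum_s \pi_s h_s \in [0,1]$) of terms of the form $x_j(\rho_r y - x^T\phi_r)$, $1 - \rho_r y(\rho_r y - x^T\phi_r)$, and $\gamma_r - \pi_r$ respectively. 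Using $|x^T\phi_r|\le K$, $\rho_r \le e^K$, boundedness of the $x_j$ built into $\tilde\Theta$, and $0\le\gamma_r\le 1$, each coordinate is bounded (uniformly in $\vartheta\in\tilde\Theta$) by an affine function of $|y|$, and one checks the leading term is $e^K|y|$ plus a constant absorbed into $K$; hence $\|s_\vartheta(\cdot)\|_\infty \le e^K|y| + K =: G_1(y)$. This is mostly bookkeeping.

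Second, for \textbf{Condition 1} I would similarly bound the third partial derivatives of $l_\vartheta$. The third derivative of $\log(\sum \text{positive})$ is a rational expression in the component densities and their derivatives; because all components are Gaussian with parameters in a compact set, every such derivative is a polynomial in $y$ (degree at most $3$) with coefficients bounded uniformly over $\tilde\Theta$, so it is dominated by some $G_3(y)$ of polynomial growth, and then $\int G_3(y) f_{\psi_0}(y\mid x)\,d\mu(y)$ is a finite moment of a Gaussian mixture, bounded uniformly in $x$ by a constant $C_3$. For \textbf{Condition 2}, I would argue that $\psi \mapsto I(\psi)$ is continuous and that $I(\psi_0(x))$ is nonsingular for each $x$ (a standard identifiability fact for finite Gaussian mixture regressions with distinct components — or one simply assumes the true components are genuinely present and distinct); since $\psi_0(x)$ ranges over a compact set as $x$ varies over the compact $\mathcal X$, the smallest eigenvalue is bounded below by a positive $\Lambda_{\rm min}$ depending only on $k$ and $K$. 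For \textbf{Condition 3}, the Kullback–Leibler divergence $\mathcal E(\psi\mid\psi_0(x))$ is a continuous, nonnegative function of $\psi$ on the compact set $\Psi$, vanishing only at $\psi = \psi_0(x)$ by identifiability; hence on the compact set $\{\|\psi - \psi_0(x)\|_2 \ge \eps\}$ it is bounded below by some $\alpha_\eps(x) > 0$, and compactness in $x$ gives a uniform $\alpha_\eps > 0$.

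The main obstacle is \textbf{Condition 2} (equivalently the identifiability underpinning \textbf{Condition 3}): positive-definiteness of the Fisher information for a mixture is delicate precisely because of the classical label-switching and vanishing-component degeneracies of finite mixtures — if two true components coincide, or a true $\pi_r$ is $0$, the information is singular. So the crux is to invoke (and state clearly) the standing identifiability hypothesis that the true mixture has $k$ genuinely distinct components with positive weights, after which a compactness argument converts pointwise non-degeneracy into the uniform lower bound $\Lambda_{\rm min}$. Everything else reduces to Gaussian moment bounds over a compact parameter set, which I would dispatch briskly and relegate detailed constants to Appendix~\ref{app1}.
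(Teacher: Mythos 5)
The paper's own proof of Proposition \ref{prop.FMR} consists of the single sentence ``This follows from straightforward calculations,'' so your write-up is necessarily a reconstruction rather than a comparison against a detailed argument. For Conditions 1 and 5 your bookkeeping is exactly the intended calculation: on $\tilde\Theta$ the quantities $|x^T\phi_r|$, $\rho_r$ and $\pi_r^{-1}$ are uniformly controlled by $K$, the score and the third partial derivatives of $l_\psi$ are posterior-weighted combinations (weights in $[0,1]$) of polynomials in $y$ of bounded degree with uniformly bounded coefficients, and integrating against the Gaussian-mixture density $f_{\psi_0}(\cdot\mid x)$ yields the uniform bounds $G_1(y)=e^K|y|+K$ and $C_3$. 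No objection there, and your identification of Condition 2 as requiring an (unstated) non-degeneracy hypothesis on the true parameter --- distinct components, weights bounded away from zero (the latter already enforced by $\tilde\Theta$) --- is the right reading.

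However, your argument for Condition 3 has a genuine gap, and it is the same gap the paper silently leaves open. You assert that $\mathcal{E}(\psi\mid\psi_0(x))$ vanishes \emph{only} at $\psi=\psi_0(x)$ ``by identifiability.'' For a mixture with $k\ge 2$ components this is false as stated: permuting the component blocks of $\psi_0(x)$ (swapping $(g_{0,r}(x),\log\rho_{0,r},\log\pi_{0,r})$ across $r$) produces a point of $\Psi$ with the identical mixture density, hence zero Kullback--Leibler divergence, yet at strictly positive Euclidean distance from $\psi_0(x)$ precisely when the true components are distinct --- which is the very non-degeneracy you correctly require for Condition 2. Consequently the infimum in Condition 3 equals $0$ rather than $\alpha_\varepsilon>0$, unless $\Psi$ (and correspondingly $\tilde\Theta$) is restricted to a fundamental domain of the label-permutation action, e.g.\ by ordering the components through $\log\pi_r$ or $\log\rho_r$ (an ordering that, unlike one based on $g_r(x)$, is consistent across $x$); with that restriction in place your continuity-plus-compactness argument does go through. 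You explicitly name label switching as a danger for Condition 2, but the fix you propose (distinct true components with positive weights) repairs the Fisher information while leaving --- indeed sharpening --- the Condition 3 failure. The proof needs the fundamental-domain restriction, or a reformulation of Condition 3 modulo permutations, stated explicitly; without it neither your argument nor the paper's bare assertion is correct for $k\ge 2$.
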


\begin{proof}
This follows from straightforward calculations.
\end{proof}

In order to show that the probability for the set ${\cal T}$ is large, we
invoke Lemma \ref{cor.set} and the following result. \\
\begin{lemma}\label{lemm.setT}
For fixed design FMR models as in (\ref{mod-mix2}) with $\tilde \Theta$ in
(\ref{thet.tilde}), for some constants $c_4$, $c_5$ and $c_6$,  
depending on $k$, and $K$, and for $M_n = c_4  \sqrt {\log n } $ and $n \ge c_6$, the following holds:
$$
 \PP_{\bf x} \left ( {1 \over n} \sum_{i=1}^n F(Y_i)
> c_5 {\log n \over n} \right ) \le {1 \over n} , $$
where (for $i=1 , \ldots , n$)
$$F(Y_i )= 
G_1(Y_i) {\rm l} \{ G_1(Y_i) > M_n \} +
\EE \Bigl [ G_1(Y) {\rm l} \{ G_1(Y) > M_n \} \Bigl \vert X=x_i
\Bigr ],$$
and $G_1(\cdot)$ is as in Proposition \ref{prop.FMR}.  
\end{lemma}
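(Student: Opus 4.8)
The plan is to control the tail-truncated quantity $\frac{1}{n}\sum_{i=1}^n F(Y_i)$ in two pieces: the empirical part $\frac{1}{n}\sum_i G_1(Y_i)\,\mathrm{l}\{G_1(Y_i)>M_n\}$ and its conditional-expectation counterpart $\frac{1}{n}\sum_i \EE[G_1(Y)\,\mathrm{l}\{G_1(Y)>M_n\}\mid X=x_i]$. Since by Proposition \ref{prop.FMR} we have $G_1(y)=\mathrm{e}^K|y|+K$, and since under each $x_i$ the law of $Y$ is a finite mixture of Gaussians $\mathcal{N}(x_i^T\beta_{0,r},\sigma_{0,r}^2)$ with $\sigma_{0,r}=\rho_{0,r}^{-1}\in[\mathrm{e}^{-K},\mathrm{e}^K]$ and means $|x_i^T\beta_{0,r}|=|x_i^T\phi_{0,r}|\,\sigma_{0,r}\le K\mathrm{e}^K$ (all uniformly bounded because $\vartheta_0\in\tilde\Theta$), the conditional law of $Y$ given $X=x_i$ is sub-Gaussian with parameters bounded by a constant depending only on $k$ and $K$. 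Hence $G_1(Y)$ is also sub-Gaussian uniformly in $i$.

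First I would handle the deterministic (conditional-mean) term. For a sub-Gaussian variable $Z$ with proxy variance $v$, the truncated mean $\EE[|Z|\,\mathrm{l}\{|Z|>t\}]$ decays like $\sqrt{v}\,\mathrm{e}^{-t^2/(2v)}(1+t/\sqrt v)$; plugging $t\asymp M_n=c_4\sqrt{\log n}$ and choosing $c_4$ large enough relative to the sub-Gaussian constant makes each term $\EE[G_1(Y)\,\mathrm{l}\{G_1(Y)>M_n\}\mid X=x_i]$ of order $n^{-2}$, say, so the average over $i$ is at most $\frac{c_5}{2}\frac{\log n}{n}$ deterministically for $n\ge c_6$. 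Next I would handle the random term $\frac{1}{n}\sum_i G_1(Y_i)\,\mathrm{l}\{G_1(Y_i)>M_n\}$: by Markov's inequality its probability of exceeding $\frac{c_5}{2}\frac{\log n}{n}$ is at most $\frac{2}{c_5\log n}\cdot\frac{1}{\log n/n}\cdot\frac{1}{n}\sum_i\EE[G_1(Y_i)\,\mathrm{l}\{G_1(Y_i)>M_n\}\mid X=x_i]$, and the same truncated-mean bound shows the numerator is $O(n^{-1})$, giving a probability $O(1/(n\log^2 n))\le 1/n$. Adding the two contributions and adjusting $c_5$ yields the claim; the choice $M_n=c_4\sqrt{\log n}$ is exactly what is needed so that $M_n^2\asymp\log n$ kills the Gaussian tails at the rate $n^{-1}$.

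The main obstacle — more bookkeeping than conceptual difficulty — is being careful that all the sub-Gaussian constants are genuinely uniform over $i=1,\dots,n$ and over the (fixed) true parameter $\vartheta_0$, depending only on $k$ and $K$ and not on $p$ or on the particular design points $x_i$. This is where the boundedness built into $\tilde\Theta$ in (\ref{thet.tilde}) is used decisively: $\|\log\rho\|_\infty\le K$ bounds the component variances above and below, $-K\le\log\pi_r\le 0$ bounds the mixture weights, and $\sup_{x\in\mathcal{X}}\|\phi^Tx\|_\infty\le K$ bounds the component means, so the mixture is a bounded mixture of Gaussians with uniformly controlled tail. Once that uniformity is in hand, the two tail estimates above are routine, and the constants $c_4,c_5,c_6$ can be read off from the sub-Gaussian tail bound.
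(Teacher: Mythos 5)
Your proposal is correct and follows essentially the same route as the paper: both reduce to the fact that for $M_n = c\sqrt{\log n}$ the (sub-)Gaussian truncated moments decay like $n^{-2}$ uniformly over the design (using the boundedness built into $\tilde\Theta$), absorb the deterministic conditional-expectation part of $F$ directly, and control the empirical part by an elementary moment inequality (you use first-moment Markov where the paper centers and applies a Chebyshev/second-moment bound, but both give $O(1/(n\log n))\le 1/n$). The only blemish is a spurious extra factor of $1/\log n$ in your displayed Markov bound; the correct bound $\frac{2n}{c_5\log n}\cdot O(n^{-2})$ still suffices, so this does not affect the argument.
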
 
A proof is given in Appendix \ref{app1}. 

Hence, the oracle result in Theorem \ref{th.oracle} for our $\ell_1$-norm
penalized estimator in the FMR model holds on a set ${\cal
  T}$, summarized in Theorem \ref{th.oraclefmr}, and this 
set ${\cal T}$ has large probability due to Lemma \ref{cor.set} and
Lemma \ref{lemm.setT} as described in the following corollary. \\
\begin{corollary}\label{cor.setT}
For fixed design FMR models as in (\ref{mod-mix2}) with $\tilde \Theta$ in
(\ref{thet.tilde}), we have for constants $c_2,c_4,c_7,c_8$ depending
on $k$ and $K$,
\begin{eqnarray*}
\PP_{\bf x}[{\cal T}] \ge 1- c_2\exp\biggl  [ - {  T^2 \log^2\!n \log (p\vee n)   \over c_7^2 }
\biggr ] - n^{-1}\ \mbox{for all}\ n \ge c_8,
\end{eqnarray*}
where ${\cal T}$ is defined with $\lambda_0 = M_n \log
n\sqrt{\log(p\vee n)\over n}$ and
$M_n = c_4  \sqrt {\log n }$. 
\end{corollary}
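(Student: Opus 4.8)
The plan is to combine Lemma~\ref{cor.set} with Lemma~\ref{lemm.setT}, having first checked via Proposition~\ref{prop.FMR} that the hypothesis of Lemma~\ref{cor.set} (Condition~5) holds for FMR models, with $G_1(y) = \mathrm{e}^K|y| + K$. The event appearing in Lemma~\ref{cor.set} is, by construction, exactly the set ${\cal T}$ of (\ref{setT}) with $\lambda_0$ as in (\ref{add1}); so applying that lemma with $M_n = c_4\sqrt{\log n}$ yields, for every $T \ge c_1$ (which we may take $\ge 1$),
\[
\PP_{\bf x}[{\cal T}] \ge 1 - c_2\exp\!\left[-\frac{T^2\log^2\!n\,\log(p\vee n)}{c_3^2}\right] - \PP_{\bf x}\!\left(\frac1n\sum_{i=1}^n F(Y_i) > \frac{T\lambda_0^2}{dK}\right).
\]
Here $d = 3k-1$ for the FMR parameterization of Section~\ref{subsec.fmrmod-orac}, so $dK$ depends only on $k$ and $K$.

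It then remains to absorb the last probability into the bound of Lemma~\ref{lemm.setT}. From (\ref{add1}) with $M_n = c_4\sqrt{\log n}$ one has $\lambda_0^2 = c_4^2(\log n)^3\log(p\vee n)/n$, and since $p\vee n \ge n$ gives $(\log n)^2\log(p\vee n) \ge 1$ for $n\ge 3$, this is $\ge c_4^2\log n/n$. Hence, enlarging $c_4$ if necessary so that $c_4^2/(dK) \ge c_5$, with $c_5$ the constant from Lemma~\ref{lemm.setT} (itself a function of $k$ and $K$ only), we get $T\lambda_0^2/(dK) \ge c_5\log n/n$ for all $T\ge 1$ and $n\ge 3$. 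Monotonicity of the tail probability in the threshold together with Lemma~\ref{lemm.setT} then gives $\PP_{\bf x}\!\left(\frac1n\sum_{i=1}^n F(Y_i) > T\lambda_0^2/(dK)\right) \le 1/n$ for $n \ge c_6$. Substituting into the display above and setting $c_7 = c_3$ and $c_8 = \max(3, c_6)$, we obtain the asserted bound for all $n \ge c_8$ and $T \ge c_1$.

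Since both ingredients are already proved, there is no genuinely hard step; the corollary is essentially bookkeeping. The only care needed is the constant chasing: checking that the specific $\lambda_0$ of (\ref{add1}), which carries an extra factor $(\log n)^2\log(p\vee n)$ relative to the $\log n/n$ scale in Lemma~\ref{lemm.setT}, is comfortably large enough after a possible enlargement of $c_4$, and verifying that all the resulting constants remain functions of $k$ and $K$ alone (true because $d$, $c_5$ and $c_6$ all depend only on $k$ and $K$). A cosmetic point is the implicit restriction $T \ge c_1$ inherited from Lemma~\ref{cor.set}, which is harmless since in the application (Theorem~\ref{th.oracle}) $T$ is taken $\ge 1$ and may be taken $\ge c_1$.
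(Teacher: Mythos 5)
Your proposal is correct and is exactly the argument the paper intends: the paper gives no separate proof of Corollary~\ref{cor.setT}, stating only that it follows from Lemma~\ref{cor.set} (applicable by Proposition~\ref{prop.FMR}) together with Lemma~\ref{lemm.setT}, and your write-up supplies precisely that combination plus the routine threshold comparison $T\lambda_0^2/(dK)\ge c_5\log n/n$ (which, as you note, holds after enlarging $c_4$ or simply for $n\ge c_8$ large, since $\log^2\!n\,\log(p\vee n)\to\infty$).
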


\begin{theorem}\label{th.oraclefmr} (Oracle result for FMR
  models). Consider a fixed design FMR model as in (\ref{mod-mix2}) with
  $\tilde \Theta$ in (\ref{thet.tilde}). Assume Condition 4 (restricted
  eigenvalue condition) and that $\lambda \ge 2 T \lambda_0$ for the
  estimator in  
(\ref{add.1}). Then on ${\cal T}$, which has large probability as stated in
Corollary \ref{cor.setT}, for the average excess   
risk (average Kullback-Leibler loss), 
$$\bar {\cal E} (\hat \theta_{\lambda} \vert \theta_0) +2 ( \lambda -T \lambda_0)
\| \hat \phi_{S^c} \|_1 \le 
8 (\lambda + T \lambda_0 )^2 c_0^2 \kappa^2 s,$$
where $c_0$ and $\kappa$ are defined in Lemma \ref{marginlemma} and
Condition 4, respectively.  
\end{theorem}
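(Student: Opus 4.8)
The plan is to derive Theorem \ref{th.oraclefmr} as a direct corollary of the general oracle result in Theorem \ref{th.oracle}, using the fact that the FMR model satisfies all the structural hypotheses needed there. First I would invoke Proposition \ref{prop.FMR}, which establishes that Conditions 1, 2, and 3 hold for the FMR model on the parameter space $\tilde\Theta$ from (\ref{thet.tilde}), with constants $C_3$, $\Lambda_{\rm min}$, $\{\alpha_\eps\}$ depending only on $k$ and $K$. By Lemma \ref{marginlemma}, this already yields the quadratic margin behavior with the explicit constant $c_0$. Together with Condition 4 (the restricted eigenvalue condition, which is assumed in the statement), these are exactly the hypotheses of Theorem \ref{th.oracle}. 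Since the estimator in (\ref{add.1}) is precisely the Lasso-type estimator of (\ref{lasso-compact}) specialized to the FMR log-likelihood with $\gamma=0$ (the penalty $\lambda\sum_{r=1}^k\|\phi_r\|_1$ matches), Theorem \ref{th.oracle} applies verbatim: on the event $\cal T$ defined in (\ref{setT}), and under $\lambda\ge 2T\lambda_0$, we obtain
\[
\bar{\cal E}(\hat\psi\vert\psi_0) + 2(\lambda - T\lambda_0)\|\hat\phi_{S^c}\|_1 \le 8(\lambda + T\lambda_0)^2 c_0^2\kappa^2 s.
\]

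The only remaining points are bookkeeping. One must note that the average excess risk $\bar{\cal E}(\hat\psi\vert\psi_0)$ in the $\psi$-parameterization equals $\bar{\cal E}(\hat\theta_\lambda\vert\theta_0)$ in the $\theta$-parameterization of Section \ref{subsec.reparmix} — this is just the invariance of the Kullback-Leibler divergence under the one-to-one reparameterization between $\vartheta$ (hence $\theta$) and $\psi(\cdot)$, as remarked in Section \ref{subsec.fmrmod-orac}. Second, one must confirm that the event $\cal T$ indeed has the large probability claimed: this is precisely the content of Corollary \ref{cor.setT}, which in turn follows by combining Lemma \ref{cor.set} (applicable because Condition 5 holds for the FMR model with $G_1(y) = {\rm e}^K|y| + K$, again by Proposition \ref{prop.FMR}) with the tail bound of Lemma \ref{lemm.setT} controlling $\frac1n\sum_i F(Y_i)$. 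So the proof is essentially a citation of Theorem \ref{th.oracle} plus Corollary \ref{cor.setT}, with the reparameterization remark inserted.

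There is no real obstacle here; the hard work has been front-loaded into Theorem \ref{th.oracle}, Proposition \ref{prop.FMR}, and the probabilistic Lemmas \ref{cor.set} and \ref{lemm.setT}. If anything, the one point deserving a sentence of care is checking that the Gaussian-mixture log-likelihood is smooth enough on the \emph{bounded} parameter space $\tilde\Theta$ — the bounds $\|\phi^Tx\|_\infty\le K$, $\|\log\rho\|_\infty\le K$, $-K\le\log\pi_r\le 0$ keep all components uniformly non-degenerate and the mixture weights uniformly bounded away from $0$ and $1$, which is exactly what makes the third derivatives integrable (Condition 1) and the Fisher information uniformly positive definite (Condition 2). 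Once that is granted, everything is assembled mechanically: the statement of Theorem \ref{th.oraclefmr} is literally the conclusion of Theorem \ref{th.oracle} transported through the parameterization map, evaluated on the event $\cal T$ whose probability is pinned down by Corollary \ref{cor.setT}.
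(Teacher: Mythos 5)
Your proposal is correct and matches the paper's own route exactly: the paper presents Theorem \ref{th.oraclefmr} precisely as Theorem \ref{th.oracle} specialized to FMR models, with Conditions 1--3 and 5 supplied by Proposition \ref{prop.FMR}, Condition 4 assumed, and the probability of ${\cal T}$ controlled via Lemma \ref{cor.set} and Lemma \ref{lemm.setT} (Corollary \ref{cor.setT}). Your additional remarks on the KL-invariance under the reparameterization and on why the bounded parameter space $\tilde\Theta$ makes the smoothness conditions hold are exactly the right points of care.
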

The oracle inequality of Theorem \ref{th.oraclefmr} has the following
well-known interpretation. First, we obtain
\begin{eqnarray*}
\bar {\cal E} (\hat{\theta}_{\lambda} \vert \theta_0) \le 8 (\lambda + T
\lambda_0 )^2 c_0^2 \kappa^2 s.
\end{eqnarray*}
That is, the average Kullback-Leibler risk is of the order
$O(s \lambda_0^2) = O(s \log^3n \log(p\vee n)/n)$ (take $\lambda=2 T \lambda_0$, use
definition (\ref{add1}) and the 
assumption on $M_n$ in Lemma \ref{lemm.setT} above) which is up to the factor
$\log^3n\log(p\vee n)$ 
the optimal convergence rate if one would know the $s$ non-zero
coefficients. As a second implication we obtain
\begin{eqnarray*}
\| \hat \phi_{S^c} \|_1 \le 
4 (\lambda + T \lambda_0 ) c_0^2 \kappa^2 s,
\end{eqnarray*}
saying that the noise components in $S^c$ have small estimated values
(e.g., its $\ell_1$-norm converges to zero at rate $O(s \lambda_0)$).

Note that the Conditions 1, 2, 3 and 5 hold automatically for FMR models, as
described in Proposition \ref{prop.FMR}. We do require a restricted
  eigenvalue condition on the design, here Condition 4. In fact, for the Lasso or
  Dantzig selector in linear models, restricted 
eigenvalue conditions \citep{koltch09,bickel07dantzig} are considerably
weaker than coherence conditions \citep{Bunea:07,cai09b} or  
assuming the restricted isometry property \citep{candes2005decoding,cai09};
for an overview among the relations, see \cite{sarapeter09}.  

\subsubsection{High-dimensional consistency of FMR models}
We finally give a consistency result for FMR models under weaker
conditions than the oracle result from Section~\ref{subsec.fmrorac}. Denote by $\theta_0$ the true parameter
vector in a FMR model. In contrast to Section~\ref{sec.asympt}, the number
of covariates $p$ can grow with the 
number of observations $n$. Therefore, also the true parameter $\theta_0$
depends on $n$. To guarantee consistency we have to assume some sparsity
condition, i.e., the $\ell_{1}$-norm of the true parameter can only grow
with $o(\sqrt{n/(\log^3n \log(p\vee n))})$.\\
\begin{theorem}\label{theorem:consist-higd} (Consistency).
Consider a fixed design FMR model (\ref{mod-mix2}) with $\tilde{\Theta}$ in
(\ref{thet.tilde}) and fixed $k$. Moreover, assume that 
$$\|\phi_0\|_1 = \sum_{r=1}^k \|\phi_{0,r}\|_1 = o(\sqrt{n/(\log^3n \log(p\vee n))})\ (n
\to \infty).$$ 
If $\lambda = C \sqrt{\log^3n \log(p\vee n)/n}$ for some $C>0$
sufficiently large, then any (global) minimizer
$\hat{\theta}_{\lambda}$ as in (\ref{add.1})
satisfies
\begin{eqnarray*}
\bar {\cal E} (\hat \theta_{\lambda} \vert \theta_0) =
o_P(1)\ (n \to \infty). 
\end{eqnarray*}
\end{theorem}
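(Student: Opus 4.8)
The plan is to derive the consistency statement as a direct consequence of the machinery already assembled for the oracle inequality, but applied with a \emph{crude} comparison to the truth rather than to a sparse oracle. First I would invoke Proposition~\ref{prop.FMR} to note that Conditions 1, 2, 3 and 5 are automatic for the FMR model with parameter space $\tilde\Theta$ as in (\ref{thet.tilde}), and Lemma~\ref{cor.set} together with Lemma~\ref{lemm.setT} to fix $M_n = c_4\sqrt{\log n}$, so that $\lambda_0 = M_n\log n\sqrt{\log(p\vee n)/n} \asymp \sqrt{\log^3 n\,\log(p\vee n)/n}$ and the event ${\cal T}$ of (\ref{setT}) has probability at least $1 - c_2\exp[-T^2\log^2 n\,\log(p\vee n)/c_7^2] - n^{-1} \to 1$ (Corollary~\ref{cor.setT}), once $T$ is a fixed constant $\ge c_1$. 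We then choose $\lambda = C\sqrt{\log^3 n\,\log(p\vee n)/n}$ with $C$ large enough that $\lambda \ge 2T\lambda_0$, so all hypotheses needed to run the basic inequality are in force, \emph{except} that we will not assume Condition~4.

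The heart of the argument is the basic inequality underlying Theorem~\ref{th.oracle}: on ${\cal T}$, by definition of $\hat\vartheta_\lambda$ as a minimizer and the margin Lemma~\ref{marginlemma}, one obtains a bound of the shape
\[
\bar{\cal E}(\hat\theta_\lambda \vert \theta_0) + \lambda\|\hat\phi\|_1 \le \text{(empirical process term)} + \lambda\|\phi_0\|_1,
\]
where the empirical process term is controlled on ${\cal T}$ by $T\lambda_0\big(\|\hat\phi - \phi_0\|_1 + \|\hat\eta - \eta_0\|_2 \big) \vee (T\lambda_0^2)$. Since $\tilde\Theta$ is bounded (the $\log\rho$ and $\log\pi$ coordinates live in a fixed compact set), $\|\hat\eta - \eta_0\|_2 = O(1)$, and since $\lambda \ge 2T\lambda_0$ the penalty term dominates the $\|\hat\phi\|_1$ part of the empirical process bound up to constants. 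Rearranging, one is left with something like
\[
\bar{\cal E}(\hat\theta_\lambda \vert \theta_0) \le \text{const}\cdot\big(\lambda\|\phi_0\|_1 + \lambda_0\big),
\]
and the sparsity assumption $\|\phi_0\|_1 = o(\sqrt{n/(\log^3 n\,\log(p\vee n))})$ is precisely what makes $\lambda\|\phi_0\|_1 = o(1)$, while $\lambda_0 \to 0$ trivially; hence $\bar{\cal E}(\hat\theta_\lambda \vert \theta_0) = o_P(1)$, the $o_P$ coming from the lower bound on $\PP_{\bf x}[{\cal T}]$.

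The main obstacle, and the step requiring care, is handling the nuisance part $\eta$: the margin lemma controls $\bar{\cal E}$ from below by $\|\hat\psi - \psi_0\|_{Q_n}^2$ which includes $\|\hat\eta - \eta_0\|_2^2$, but in the empirical process bound on ${\cal T}$ the nuisance enters only through $\|\hat\eta - \eta_0\|_2$ (not squared), and this is \emph{not} multiplied by a small penalty weight the way $\hat\phi$ is. One must therefore either absorb the linear $T\lambda_0\|\hat\eta-\eta_0\|_2$ term into the quadratic margin term via a Young-type inequality $ab \le a^2/2 + b^2/2$ (losing a term of order $\lambda_0^2 = o(1)$, which is harmless), or exploit compactness of $\tilde\Theta$ directly; I would do the former, since it keeps the bookkeeping parallel to the proof of Theorem~\ref{th.oracle}. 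The one genuinely pleasant feature here is that, because we are only after consistency and not a rate with the oracle sparsity $s$, no restricted-eigenvalue control of $\|\hat\phi_S\|_2$ is needed at all, which is exactly why Condition~4 can be dropped.
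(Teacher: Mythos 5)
Your proposal is correct and follows essentially the same route as the paper: on the high-probability event ${\cal T}$ (via Proposition~\ref{prop.FMR}, Lemma~\ref{cor.set}, Lemma~\ref{lemm.setT}), apply the basic inequality, use the triangle inequality to absorb $T\lambda_0\|\hat\phi\|_1$ into the penalty since $\lambda \ge 2T\lambda_0$, and let the sparsity assumption kill the remaining $(\lambda+T\lambda_0)\|\phi_0\|_1$ term. The only (immaterial) difference is in handling the nuisance term: the paper simply bounds $\|\hat\eta-\eta_0\|_2 \le 2K$ by compactness of $\tilde\Theta$, giving a leftover $2KT\lambda_0 = o(1)$ without any appeal to the margin Lemma~\ref{marginlemma}, whereas you lean toward the Young-inequality absorption into the quadratic margin --- both work, but the compactness route you also mention is the simpler one and is what the paper uses.
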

A proof is given in Appendix \ref{app1}. 
The (restricted eigenvalue) Condition 4 on 
the design is not required; this is typical for a high-dimensional
consistency result, see \cite{greenshtein03persistency} for the Lasso in
linear models. 

\section{Numerical optimization}\label{sec.optim} 
We present a generalized EM (GEM) algorithm for optimizing the criterion
in (\ref{eq:plik1}) in Section \ref{subsec:emalgmix}. In Section
\ref{subsec.activeset} and \ref{subsec.initialisation}, we give further
details on speeding-up and on initializing the algorithm. Finally, we
discuss numerical convergence properties in Section
\ref{subsec.numconv}. For the convex penalty ($\gamma = 0$) function, we prove 
convergence to a stationary point.  


\subsection{GEM algorithm for optimization}\label{subsec:emalgmix}

Maximization of the log-likelihood of a mixture density is often done using
the traditional EM algorithm of \cite{Dempster}. Consider the complete log-likelihood
\begin{eqnarray*}
\ell_{c}(\theta;Y,\Delta)&=&\sum_{i=1}^{n}\sum_{r=1}^{k}\Delta_{i,r}\log\left(\frac{\rho_{r}}{\sqrt{2\pi}}e^{-\frac{1}{2}(\rho_{r}Y_{i}-X_{i}^{T}\phi_{r})^{2}}\right)+\Delta_{i,r}\log(\pi_{r}).
\end{eqnarray*} 
Here $(\Delta_{i,1},\ldots,\Delta_{i,k})$, $i=1,\ldots,n$, are i.i.d. unobserved multinomial variables showing the
component-membership of the $i$th observation in the FMR model:
$\Delta_{i,r}=1$ if observation $i$ belongs to component $r$, and
$\Delta_{i,r}=0$ otherwise. The expected complete (scaled) negative
log-likelihood is then 
\begin{eqnarray*}
\mathop{Q}(\theta|\theta')&=&- n^{-1} \EE_{\theta'}[\ell_{c}(\theta;Y,\Delta)|Y],
\end{eqnarray*}
and the expected complete (scaled) penalized negative log-likelihood is
\begin{eqnarray*}
\mathop{Q_{\mathrm{pen}}}(\theta|\theta')&=&\mathop{Q}(\theta|\theta')+\lambda\sum_{r=1}^{k}\pi_{r}^{\gamma}\|\phi_{r}\|_{1}.
\end{eqnarray*}
The EM algorithm works by alternating between the E- and M-Step. Denote the
parameter value at EM-iteration $m$ by $\theta^{(m)}$ ($m =
0,1,2,\ldots$), where $\theta^{(0)}$ is a vector of starting
values. \\ \\
\textbf{E-Step}: Compute $\mathop{Q}(\theta|\theta^{(m)})$,
or equivalently for $r=1,\ldots ,k$ and $i=1,\ldots ,n$
\begin{eqnarray*}
\hat{\gamma}_{i,r}&=&\EE_{\theta^{(m)}}[\Delta_{i,r}|Y]=\frac{\pi_{r}^{(m)}\rho_{r}^{(m)}e^{-\frac{1}{2}(\rho_{r}^{(m)}Y_{i}-X_{i}^{T}\phi_{r}^{(m)})^{2}}}{\sum_{l=1}^{k}\pi_{l}^{(m)}\rho_{l}^{(m)}e^{-\frac{1}{2}(\rho_{l}^{(m)}Y_{i}-X_{i}^{T}\phi_{l}^{(m)})^{2}}}.
\end{eqnarray*}
\\ \\
\textbf{Generalized M-Step}: Improve $\mathop{Q_{\mathrm{pen}}}(\theta|\theta^{(m)})$ w.r.t $\theta \in \Theta$.

\begin{itemize}
\item[a)]\emph{Improvement with respect to $\pi=(\pi_1,\ldots,\pi_k)$:}

fix $\phi$ at the present value $\phi^{(m)}$ and improve
\begin{eqnarray}\label{eq:piupdate}
-n^{-1} \sum_{i=1}^{n}\sum_{r=1}^{k}\hat{\gamma}_{i,r}\log(\pi_{r})+\lambda\sum_{r=1}^{k}\pi_{r}^{\gamma}\|\phi^{(m)}_{r}\|_{1} 
\end{eqnarray}
with respect to the probability simplex
\begin{eqnarray*}
\{\pi; \pi_r > 0\ \mbox{for}\ r=1,\ldots ,k\ \mbox{and}\
\sum_{r=1}^{k} \pi_r = 1\}.
\end{eqnarray*}
Denote by $\bar{\pi}^{(m+1)}=\frac{\sum_{i=1}^{n}\hat{\gamma}_{i}}{n}$ which is a
feasible point of the simplex. 
We propose to update $\pi$ as
\begin{eqnarray*}
\pi^{(m+1)} = \pi^{(m)}+t^{(m)}(\bar{\pi}^{(m+1)}-\pi^{(m)})
\end{eqnarray*}
where $t^{(m)} \in (0,1]$. In practice, $t^{(m)}$ is chosen to be the largest
value in the grid $\{\delta^{k};\ k=0,1,2,\ldots \}$ ($0<\delta<1$) such that
(\ref{eq:piupdate}) is not increased. In our examples, $\delta=0.1$ worked
well.

\item[b)]\emph{Coordinate descent improvement with respect to $\phi$ and $\rho$:}

A simple calculation shows that the M-Step decouples for each component into $k$ distinct
optimization problems of the form
\begin{equation}\label{eq:mstep}
-\log(\rho_{r})+\frac{1}{2
  n_{r}}\|\rho_{r}\tilde{Y}-\mathbf{\tilde{X}}\phi_{r}\|^{2}+\frac{n\lambda}{n_{r}}\left(\pi_{r}^{(m+1)}\right)^{\gamma}\|\phi_{r}\|_{1},\ \
r=1,\ldots,k
\end{equation}
with
\begin{eqnarray*}
n_{r} =\sum_{i=1}^{n}\hat{\gamma}_{i,r},\ \
(\tilde{Y_{i}},\tilde{X_{i}})=\sqrt{\hat{\gamma}_{i,r}}(Y_{i},X_{i}),\qquad
r=1,\ldots,k. 
\end{eqnarray*}
Problem (\ref{eq:mstep}) has the same form as (\ref{eq:convexplik}); in
particular, it is convex in $(\rho_{r},\phi_{r,1},\ldots,\phi_{r,p})$. Instead of fully
optimizing (\ref{eq:mstep}), we only minimize with respect to each of the
coordinates, holding the other coordinates at their current value. Closed-form coordinate updates
can easily be computed for each component $r$ ($r=1,\ldots,k$) using
Proposition \ref{prop:kkt}:
\begin{eqnarray*}
\rho_{r}^{(m+1)}&=&\frac{\langle
  \tilde{Y},\mathbf{\tilde{X}}\phi_{r}^{(m)}\rangle+\sqrt{\langle
    \tilde{Y},\mathbf{\tilde{X}}\phi_{r}^{(m)}\rangle^{2}+4\|\tilde{Y}\|^{2}n_{r}}}{2\|\tilde{Y}\|^{2}},
\end{eqnarray*}

\begin{eqnarray*}
\phi_{r,j}^{(m+1)}&=&\left\{\begin{array}{ll}
0&\textrm{if $|S_{j}| \le n\lambda\left(\pi_{r}^{(m+1)}\right)^{\gamma}$},\vspace{0.2cm}\\
\frac{(n\lambda\left(\pi_{r}^{(m+1)}\right)^{\gamma}-S_{j})}{\|\mathbf{\tilde{X}}_{j}\|^{2}}&\textrm{if
  $S_{j}>n\lambda\left(\pi_{r}^{(m+1)}\right)^{\gamma}$},\vspace{0.2cm}\\
-\frac{(n\lambda\left(\pi_{r}^{(m+1)}\right)^{\gamma}+S_{j})}{\|\mathbf{\tilde{X}}_{j}\|^{2}}&\textrm{if
  $S_{j}<-n\lambda\left(\pi_{r}^{(m+1)}\right)^{\gamma}$},\end{array}\right.
\end{eqnarray*}
where $S_{j}$ is defined as
\[S_{j}=-\rho_{r}^{(m+1)}\langle \mathbf{\tilde{X}}_{j},\tilde{Y}\rangle+
\sum_{s<j}\phi_{r,s}^{(m+1)}\langle
\mathbf{\tilde{X}}_{j},\mathbf{\tilde{X}}_{s}\rangle+
\sum_{s>j}\phi_{r,s}^{(m)}\langle
\mathbf{\tilde{X}}_{j},\mathbf{\tilde{X}}_{s}\rangle\]
and $j=1,\ldots,p$.
\end{itemize}
Because we only improve $\mathop{Q_{\mathrm{pen}}}(\theta|\theta^{(m)})$ instead of
a full minimization, see
M-Step a) and b), this is a generalized EM (GEM) algorithm. We call it the block
coordinate descent 
generalized EM algorithm (BCD-GEM); the word block refers to the fact that
we are updating all components of $\pi$ at once. Its numerical properties
are discussed 
in Section \ref{subsec.numconv}. 
\begin{remark}\label{remark:remark}
For the convex penalty function with $\gamma=0$, a minimization with
respect to $\pi$
in M-Step a) is achieved with
$\pi^{(m+1)}=\frac{\sum_{i=1}^{n}\hat{\gamma}_{i}}{n}$, i.e., using
$t^{(m)}=1$. Then, our M-Step corresponds to exact coordinate-wise minimization
of $\mathop{Q_{\mathrm{pen}}}(\theta|\theta^{(m)})$. 
\end{remark}
\subsection{Active set algorithm}\label{subsec.activeset}
There is a simple way to speed-up the algorithm described above.
When updating the coordinates $\phi_{r,j}$ in the M-Step b), we
restrict ourselves during every 10 EM-iterations to the current active set
(the non-zero coordinates) and 
visit the remaining coordinates every 11th EM-iteration to update the
active set. In very high-dimensional and sparse settings, this leads to a
remarkable decrease in computational times. A similar active set strategy is
also used in \cite{friedman07fastlasso} and \cite{meier06group}. We
illustrate in Section~\ref{subsec.comptiming} the gain of speed when
staying during every 10 EM-iterations within the active set. 

\subsection{Initialization}\label{subsec.initialisation}
The algorithm of Section~\ref{subsec:emalgmix} requires the specification of
starting values $\theta^{(0)}$. 
We found empirically that the following initialization works well. For each
observation $i$, $i=1,\ldots,n$, draw randomly a class $\kappa\in 
\{1,\ldots,k\}$. Assign for observation $i$ and the corresponding component $\kappa$ the weight
$\tilde{\gamma}_{i,\kappa}=0.9$ and weights $\tilde{\gamma}_{i,r}=0.1$ for
all other 
components. Finally, normalize $\tilde{\gamma}_{i,r}$, $r=1,\ldots,k$, to
achieve that summing over the indices $k$ yields the value one, to get the
normalized values $\hat{\gamma}_{i,r}$. Note that this can be viewed as an
initialization of the E-Step. In 
the M-Step which follows afterwards, we update all coordinates from the
initial values 
$\phi^{(0)}_{r,j}=0$, $\rho^{(0)}_{r}=2$, $\pi^{(0)}_{r}=1/k$,
$r=1,\ldots,k$, $j=1,\ldots,p$.

\subsection{Numerical convergence of the BCD-GEM
  algorithm}\label{subsec.numconv} 
We address here the convergence properties of the BCD-GEM
algorithm described in Section \ref{subsec:emalgmix}. 
A detailed account of the convergence properties of the EM algorithm in a
general setting has been given by \cite{wu}. Under regularity
conditions 
including differentiability and continuity, convergence to stationary points
is proved for the EM algorithm. 
For the GEM algorithm, similar statements are true under conditions which are
often hard to verify.

As a GEM algorithm, our BCD-GEM algorithm has the descent property which
means that the criterion function is reduced in each iteration:
\begin{equation}
- n^{-1} \ell^{(\gamma)}_{\mathrm{pen},\lambda}(\theta^{(m+1)})\leq -n^{-1}
\ell_{\mathrm{pen},\lambda}^{(\gamma)}(\theta^{(m)}). 
\end{equation}
Since $-n^{-1} \ell_{\mathrm{pen},\lambda}^{(0)}(\theta)$ is bounded from below
(Proposition \ref{prop:bounded}), the following result holds.\\
\begin{proposition}
For the BCD-GEM algorithm, $-n^{-1} \ell^{(0)}_{\mathrm{pen},\lambda}(\theta^{(m)})$ decreases
monotonically to some value $\bar{\ell}>-\infty$.
\end{proposition}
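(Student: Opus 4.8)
The plan is to combine the descent property of the generalized M-Step with the lower bound of Proposition~\ref{prop:bounded}, and then invoke the monotone convergence theorem for real sequences. Write $a_m = -n^{-1}\ell^{(0)}_{\mathrm{pen},\lambda}(\theta^{(m)})$ for $m=0,1,2,\dots$.

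\emph{Step 1: monotonicity.} First I would record that $(a_m)_{m\ge 0}$ is non-increasing, which is precisely the displayed descent inequality. To keep the argument self-contained I would recall its proof: with $\Delta=(\Delta_{i,r})$ the latent labels, the standard EM identity gives, for any $\theta,\theta'\in\Theta$,
\[
-n^{-1}\ell^{(0)}_{\mathrm{pen},\lambda}(\theta)=Q_{\mathrm{pen}}(\theta\mid\theta')-H(\theta\mid\theta'),\qquad H(\theta\mid\theta')=-n^{-1}\EE_{\theta'}\!\bigl[\log p_\theta(\Delta\mid Y)\mid Y\bigr],
\]
and Jensen's (Gibbs') inequality yields $H(\theta\mid\theta')\ge H(\theta'\mid\theta')$ for all $\theta$. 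Since the generalized M-Step of Section~\ref{subsec:emalgmix} guarantees $Q_{\mathrm{pen}}(\theta^{(m+1)}\mid\theta^{(m)})\le Q_{\mathrm{pen}}(\theta^{(m)}\mid\theta^{(m)})$ --- the line search in M-Step~a) is chosen never to increase the objective, and the coordinate updates in M-Step~b) are exact coordinate minimizers of the convex function~(\ref{eq:mstep}) --- taking $\theta'=\theta^{(m)}$ gives $a_{m+1}=Q_{\mathrm{pen}}(\theta^{(m+1)}\mid\theta^{(m)})-H(\theta^{(m+1)}\mid\theta^{(m)})\le Q_{\mathrm{pen}}(\theta^{(m)}\mid\theta^{(m)})-H(\theta^{(m)}\mid\theta^{(m)})=a_m$.

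\emph{Step 2: lower bound.} Every iterate $\theta^{(m)}$ stays in $\Theta$: the $\pi$-update in M-Step~a) is a convex combination of points of the probability simplex, and the $\rho_r$-update in M-Step~b) is strictly positive. Hence Proposition~\ref{prop:bounded} applies (its hypothesis $Y_i\ne 0$ holds almost surely under model~(\ref{mod-mix2})) and furnishes a finite constant $B$ with $a_m\ge B>-\infty$ for every $m$.

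\emph{Step 3: conclusion.} A non-increasing sequence of reals that is bounded from below converges to its infimum, so $a_m\downarrow\bar\ell:=\inf_{m}a_m$ with $\bar\ell\ge B>-\infty$, which is the assertion. I do not anticipate a genuine obstacle: the only points requiring care are verifying that each sub-step of the generalized M-Step does not increase $Q_{\mathrm{pen}}(\cdot\mid\theta^{(m)})$ and that all iterates remain in $\Theta$, both of which are immediate from the construction of the BCD-GEM algorithm.
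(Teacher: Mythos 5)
Your proposal is correct and follows exactly the paper's route: the GEM descent property (via the decomposition $-n^{-1}\ell^{(0)}_{\mathrm{pen},\lambda}(\theta)=Q_{\mathrm{pen}}(\theta\mid\theta')-H(\theta\mid\theta')$ and Jensen's inequality) combined with the lower bound from Proposition~\ref{prop:bounded} and monotone convergence. You merely spell out details the paper defers (the Jensen argument appears in its proof of Theorem~\ref{th.BCD-GEM}, and the a.s.\ validity of the hypothesis $Y_i\neq 0$ is left implicit), so there is nothing to add.
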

 In Remark~\ref{remark:remark}, we noted that, for the convex penalty function with $\gamma=0$, the M-Step of the
algorithm corresponds to exact coordinate-wise minimization of
$\mathop{Q_{\mathrm{pen}}}(\theta|\theta^{(m)})$. In this case, convergence
to a stationary point can be shown. \\
 \begin{theorem}\label{th.BCD-GEM}
Consider the BCD-GEM algorithm for the criterion function in
(\ref{eq:plik1}) with $\gamma = 0$. 
 Then, every cluster point $\bar{\theta}\in\Theta$ of the sequence $\{\theta^{(m)};m =
 0,1,2,\ldots\}$, generated by the BCD-GEM algorithm, is a stationary point
 of the criterion function in (\ref{eq:plik1}).
 \end{theorem}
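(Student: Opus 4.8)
The plan is to invoke the general convergence theory for block coordinate descent on a function that is continuously differentiable off a nonsmooth but separable penalty, in the spirit of Tseng's results (\cite{tseng}), combined with the standard EM descent bookkeeping. First I would set up the right picture: by Remark~\ref{remark:remark}, when $\gamma=0$ the generalized M-Step is an \emph{exact} coordinate-wise minimization of $Q_{\mathrm{pen}}(\theta\,|\,\theta^{(m)})$ over the blocks $\pi$, and then over each coordinate $\rho_r$ and $\phi_{r,j}$ in turn (the $\pi$-update uses $t^{(m)}=1$, and the $(\rho,\phi)$-problem \eqref{eq:mstep} is convex and separable in its nonsmooth part $\lambda\sum_r\|\phi_r\|_1$). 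Hence one full BCD-GEM iteration is a pass of cyclic coordinate minimization applied to the surrogate $Q_{\mathrm{pen}}(\cdot\,|\,\theta^{(m)})$, whose smooth part is $Q(\cdot\,|\,\theta^{(m)})$ and whose nonsmooth part is the fixed separable convex term $\lambda\sum_r\|\phi_r\|_1$. The coordinate-wise minimizers are unique (the quadratic-plus-$|\cdot|$ objective in each $\phi_{r,j}$ is strictly convex, and $-\log\rho_r+\frac{1}{2n_r}\|\rho_r\tilde Y-\tilde X\phi_r\|^2$ is strictly convex in $\rho_r$ with a unique stationary point given by the closed form in Proposition~\ref{prop:kkt}), which is the regularity needed for Tseng's theorem to apply to each M-Step.

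Next I would assemble the convergence argument in three steps. (i) \emph{Descent and boundedness.} The GEM property gives $-n^{-1}\ell^{(0)}_{\mathrm{pen},\lambda}(\theta^{(m+1)})\le -n^{-1}\ell^{(0)}_{\mathrm{pen},\lambda}(\theta^{(m)})$, and by Proposition~\ref{prop:bounded} this sequence is bounded below, so it converges to some $\bar\ell>-\infty$; moreover the sublevel set $\{\theta:\ -n^{-1}\ell^{(0)}_{\mathrm{pen},\lambda}(\theta)\le -n^{-1}\ell^{(0)}_{\mathrm{pen},\lambda}(\theta^{(0)})\}$ can be shown to be bounded (small $\rho_r$ and large $\|\phi_r\|_1$ are penalized, $\pi$ lives in the simplex, and large $\rho_r$ blows up $\frac12(\rho_rY_i-X_i^T\phi_r)^2$ faster than $-\log\rho_r$ decreases it), so $\{\theta^{(m)}\}$ lies in a compact set and cluster points $\bar\theta$ exist in $\Theta$. (ii) \emph{Cluster points are fixed points of one EM map.} Along a subsequence $\theta^{(m_t)}\to\bar\theta$; by continuity of the E-step weights $\hat\gamma_{i,r}$ in $\theta^{(m)}$ and of all the closed-form M-step updates, the map $\theta^{(m)}\mapsto\theta^{(m+1)}$ is continuous, so $\theta^{(m_t+1)}\to M(\bar\theta)$ for the iteration map $M$; since the objective values converge to the same limit $\bar\ell$ and each M-step (being an exact coordinatewise minimization of the surrogate, which majorizes the objective up to an additive constant) strictly decreases the objective unless $\theta^{(m)}$ already minimizes $Q_{\mathrm{pen}}(\cdot\,|\,\theta^{(m)})$ coordinatewise, one concludes $M(\bar\theta)=\bar\theta$ and $\bar\theta$ is a coordinatewise minimum of $Q_{\mathrm{pen}}(\cdot\,|\,\bar\theta)$. (iii) \emph{From surrogate stationarity to objective stationarity.} Because $Q(\cdot\,|\,\bar\theta)$ has the same gradient at $\bar\theta$ as $-n^{-1}\ell^{(0)}(\cdot)$ (the standard EM first-order tangency, coming from $\frac{\partial}{\partial\theta}\EE_{\theta'}[\ell_c(\theta;Y,\Delta)|Y]\big|_{\theta=\theta'}=\frac{\partial}{\partial\theta}\ell(\theta;Y)$), and the nonsmooth part $\lambda\sum_r\|\phi_r\|_1$ is the same in both, the coordinatewise stationarity (KKT) conditions for $Q_{\mathrm{pen}}(\cdot\,|\,\bar\theta)$ at $\bar\theta$ coincide with those for $-n^{-1}\ell^{(0)}_{\mathrm{pen},\lambda}(\cdot)$ at $\bar\theta$; since the penalty is separable, coordinatewise stationarity is equivalent to full stationarity, giving the claim.

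I would finally verify the hypotheses of Tseng's coordinatewise-minimization theorem at the place it is actually used — namely that on the surrogate $Q_{\mathrm{pen}}(\cdot\,|\,\theta^{(m)})$, restricted to the level set, the smooth part is continuously differentiable, the nonsmooth part is separable and lower semicontinuous, and each coordinatewise minimum is unique — so that a coordinatewise minimum of the surrogate is a true stationary point of the surrogate; this is what lets step (ii) conclude that no single coordinate direction can further decrease the objective at $\bar\theta$.

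The main obstacle I anticipate is step (iii) together with the boundary behavior of the parameter space: one must be careful that $\bar\theta$ lies in the \emph{interior} of $\Theta$ in the relevant coordinates (i.e.\ $\rho_r>0$ and $\pi_r>0$) so that the first-order conditions are the usual stationarity conditions rather than one-sided inequalities — this is exactly where compactness of the sublevel set (the penalty keeping $\rho_r$ away from $0$, and a separate argument that the EM weights keep $\pi_r$ away from $0$) does the work — and that the cyclic (Gauss–Seidel) rather than all-at-once nature of the $(\rho,\phi)$ updates still falls under Tseng's framework, which it does precisely because the nonsmooth term is block-separable across the $\phi_{r,j}$.
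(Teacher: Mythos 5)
Your proposal is correct and follows essentially the same route as the paper: reduce to stationarity of the surrogate $Q_{\mathrm{pen}}(\cdot\,|\,\bar\theta)$ via the EM first-order tangency ($\nabla H_{\bar\theta}(\bar\theta)=0$), show the cluster point is a coordinatewise minimum of its own surrogate using exact coordinatewise minimization, strict coordinatewise convexity and the vanishing decrements $Q_{\mathrm{pen},\theta^{(m)}}(\theta^{(m)})-Q_{\mathrm{pen},\theta^{(m)}}(\theta^{(m+1)})\to 0$, and then use separability of the $\ell_1$-part (Tseng) to pass from coordinatewise minimality to stationarity. The only differences are cosmetic: you argue via continuity of the full iteration map and a fixed-point identity where the paper runs a Bertsekas-style subsequence/normalized-direction contradiction, and your side claim that the sublevel set is bounded (to guarantee existence of cluster points) is both unnecessary --- the theorem is conditional on $\bar\theta\in\Theta$ --- and incorrect as stated, since a component may degenerate (e.g.\ $\rho_r\to 0$) while the mixture likelihood and hence the penalized objective remain bounded.
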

A precise definition of a stationary point in a non-differentiable setup
and a proof of the Theorem are given in Appendix \ref{app2}. The proof uses the crucial facts that
$\mathop{Q_{\mathrm{pen}}}(\theta|\theta')$ is a convex function in $\theta$ and
that it is strictly convex in each coordinate of $\theta$.  


\section[Simulation and real data example]{Simulations, real data example and computational
  timings}\label{sec.numeric}
\subsection{Simulations}\label{subsec.simul}
We consider four different simulation setups. Simulation scenario 1 compares
the performance of the unpenalized MLE with our estimators from Section
\ref{subsec.MLEmixpen} (FMRLasso) and Section \ref{subsec.adaptive}
(FMRAdapt) in a  
situation where the total number of noise covariates grows
successively. For computing the unpenalized MLE, we used the
\texttt{R}-package FlexMix \citep{flexmix1,flexmix2,flexmix3}; 
Simulation 2 explores sparsity; Simulation 3 compares
cross-validation and BIC; and Simulation 4 compares the different penalty
functions with the parameters $\gamma=0, 1/2, 1$. For every setting, the
results are based on 100 independent simulation runs. 

All simulations are based on Gaussian FMR models as in (\ref{mod.mix});
the coefficients  
$\pi_{r}, \beta_{r}, \sigma_{r}$ and the sample size $n$ are specified below.
The covariate $X$ is generated from a multivariate
normal distribution with mean 0 and covariance structure as specified below.

Unless otherwise specified, the penalty with $\gamma=1$ is used
in all 
simulations. As explored empirically in
Simulation 4, in case of balanced problems
(approximately equal $\pi_r$), the FMRLasso performs similarly for all
three penalties.  In unbalanced
situations, the best results are typically achieved with $\gamma=1$.
In addition, unless otherwise specified, the true number of components $k$ is
assumed to 
be known. 

For all models, training-, validation- and test data are
generated of equal size $n$. The estimators are computed on the training
data, with the tuning parameter (e.g., $\lambda$) selected by minimizing twice the negative log-likelihood (log-likelihood loss) on the validation data. 
As performance measure, the predictive log-likelihood loss (twice the negative
log-likelihood) of the selected model is computed on the test data. 

Regarding variable selection, we count a covariable $X^{(j)}$ as selected
if $\hat{\beta}_{r,j}\neq0$ for at least one $r\in\{1,\ldots,k\}$. To
assess the performance of FMRLasso on 
recovering the sparsity structure, we report the number of truly  selected
covariates (True Positives) and falsely selected covariates (False
Positives). 

Obviously, the performances depend on the signal-to-noise ratio (SNR) which
we define for an FMR model as
\begin{eqnarray*}
\mathrm{SNR}=\frac{\Var(Y)}{\Var(Y|\beta_r=0; r=1,\ldots,k)}
=
\frac{\sum_{r=1}^{k}\pi_{r}(\beta_{r}^{T}\Cov(X)\beta_{r}+\sigma_{r}^{2})}{\sum_{r=1}^{k}\pi_{r}\sigma_{r}^{2}},
\end{eqnarray*}
where the last equality follows since $\EE[X] = 0$. 

\subsubsection{Simulation 1}\label{sec:sim1}
We consider five different FMR models: M1, M2, M3, M4 and M5. The parameters
$(\pi_{r},\beta_{r},\sigma_{r})$, the sample size $n$ of the training-,
validation- and test-data, the correlation structure of covariates
$\mathrm{corr}(X^{(l)},X^{(m)})$ and the signal-to-noise ratio (SNR) are
specified in Table~\ref{tab:sim1}. Models M1, M2, M3 and M5 have two components and five active covariates, whereas model M4 has three
components and six active covariates. M1, M2 and M3 differ only in
their variances $\sigma_{1}^2$, $\sigma_{2}^2$ and hence have different
signal-to-noise ratios. Model M5 has a non-diagonal
covariance structure. Furthermore, in model M5, the variances $\sigma_{1}^2$,
$\sigma_{2}^2$ are tuned to achieve the same signal-to-noise ratio as in model M1.

We compare the performances of the maximum likelihood estimator (MLE), the
FMRLasso and the FMRAdapt in a situation where the number of noise
covariates grows successively. For the models M1, M2, M3, M5 with two
components, we start with $p_{\mathrm{tot}}=5$ (no noise covariates) and go up to
$p_{\mathrm{tot}}=125$ (120 noise covariates). For the three component model M4, we
start with $p_{\mathrm{tot}}=6$ (no noise covariates) and go up to $p_{\mathrm{tot}}=155$
(149 noise covariates). 

The boxplots in Figures~\ref{fig:m1} - \ref{fig:m5} of the predictive
log-likelihood loss, denoted by \emph{Error}, the True Positives
(\emph{TP}) and the False Positives (\emph{FP}) over 
100 simulation runs summarize the results for the different models. We read
off from these boxplots that the MLE performs very badly when we add noise
covariates. On the other hand, our penalized estimators remain stable. For
example, for M1 the MLE with $p_{\mathrm{tot}}=20$ performs worse than the FMRLasso
with $p_{\mathrm{tot}}=125$, or for M4 the MLE with $p_{\mathrm{tot}}=10$ performs worse than
the FMRLasso with $p_{\mathrm{tot}}=75$. Impressive is also the huge gain of the
FMRAdapt method over FMRLasso in terms of log-likelihood loss and false
positives. 

\begin{table}[!h]
  \centering
\scriptsize
  \begin{tabular}{c|ccccc}
    &M1&M2&M3&M4&M5\\\hline
    n&100&100&100&150&100\\
    $\beta_1$&(3,3,3,3,3)&(3,3,3,3,3)&(3,3,3,3,3)&(3,3,0,0,0,0)&(3,3,3,3,3)\\
    $\beta_2$&(-1,-1,-1,-1,-1)&(-1,-1,-1,-1,-1)&(-1,-1,-1,-1,-1)&(0,0,-2,-2,0,0)&(-1,-1,-1,-1,-1)\\
    $\beta_3$&-&-&-&(0,0,0,0,-3,2)&-\\
    $\sigma$&0.5, 0.5&1, 1&1.5, 1.5&0.5, 0.5, 0.5&0.95, 0.95\\
    $\pi$&0.5, 0.5&0.5, 0.5&0.5, 0.5&1/3, 1/3, 1/3&0.5, 0.5\\
    $\mathrm{corr}(X^{(l)},X^{(m)})$&$\delta_{l,m}$&$\delta_{l,m}$&$\delta_{l,m}$&$\delta_{l,m}$&$0.8^{|l-m|}$\\
    SNR&101&26&12.1&53&101
  \end{tabular}
  \caption{Models for simulation 1. $\delta_{l,m}$ denotes Kronecker's delta.}
  \label{tab:sim1}
\end{table}

\begin{figure}[!h]
\begin{centering}
\includegraphics[angle=0,scale=0.41]{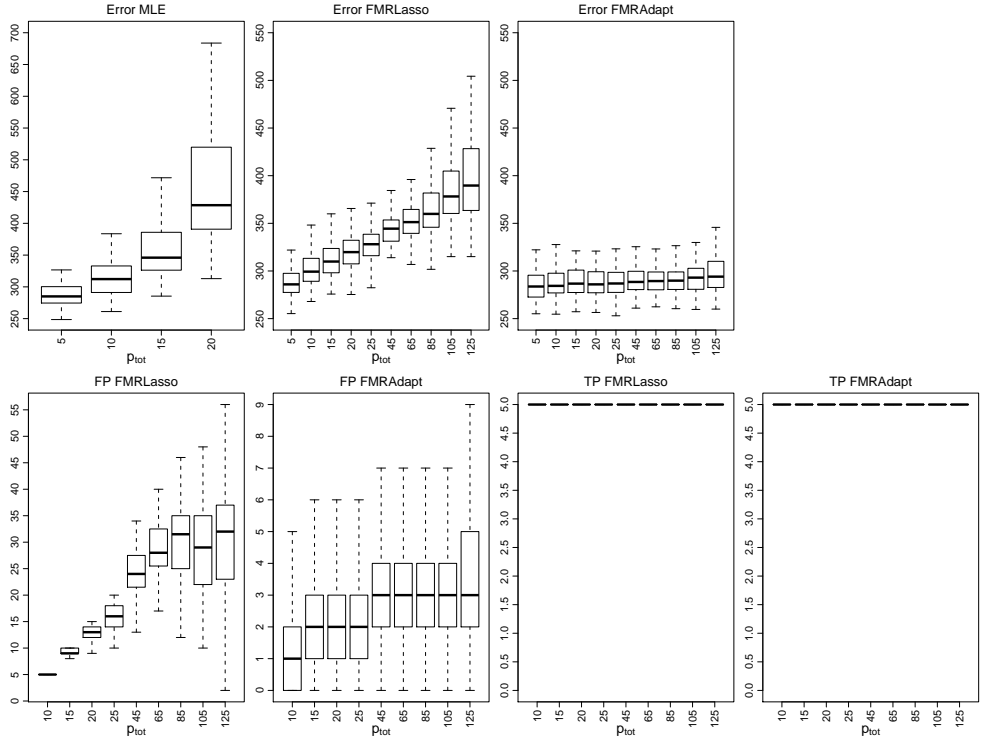}
\caption[Simulation 1, Model 1]{Simulation 1, Model M1. Top:
  Predictive log-likelihood loss (\emph{Error}) for MLE, FMRLasso,
  FMRAdapt. Bottom: False Positives (\emph{FP}) and True Positives
  (\emph{TP}) for FMRLasso and FMRAdapt.}  
\label{fig:m1}
\end{centering}
\end{figure}

\begin{figure}[!p]
\begin{centering}
\includegraphics[angle=0,scale=0.41]{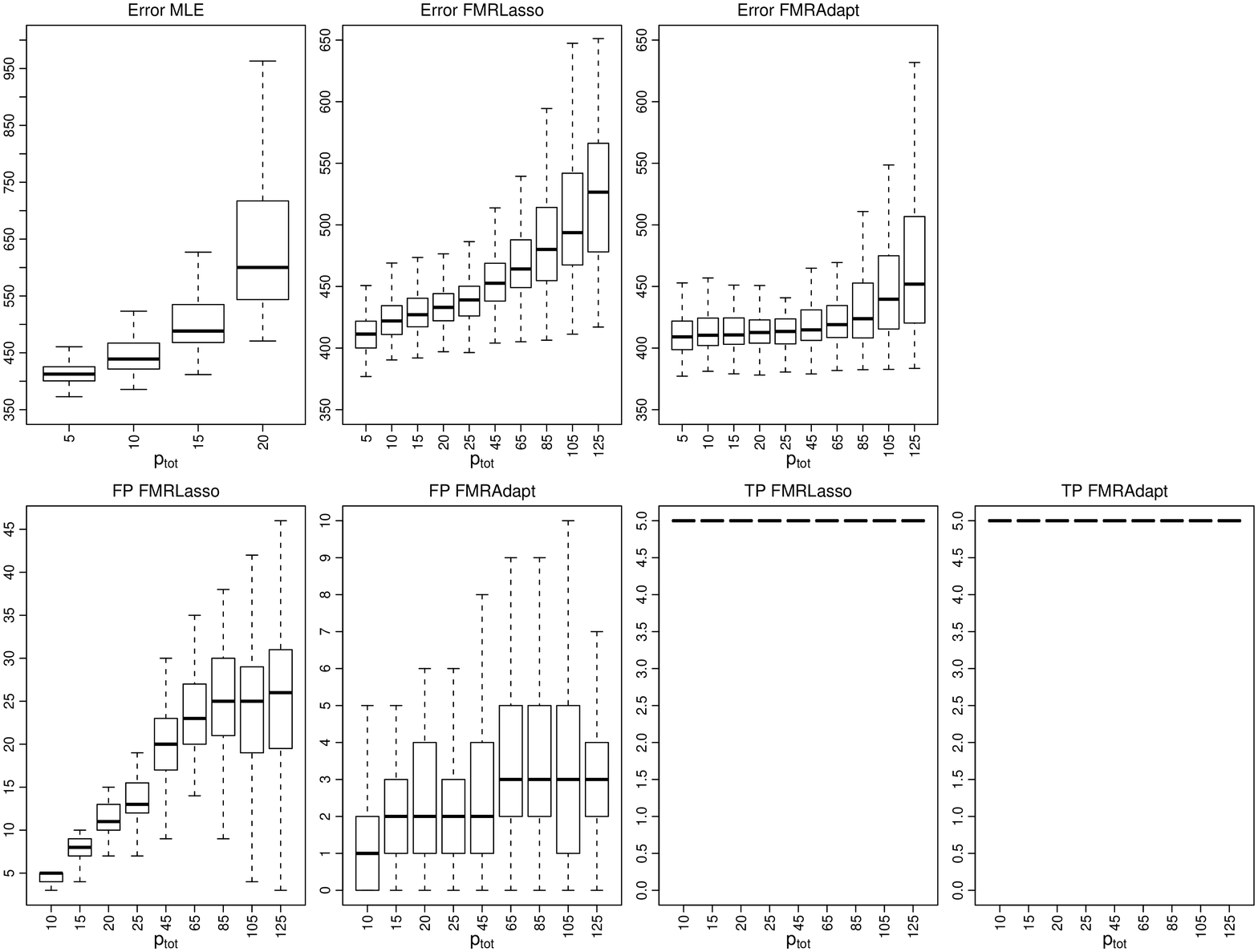}
  \caption[Simulation 1, Model 2]
  {Simulation 1, Model M2. Same notation as in Figure~\ref{fig:m1}.}
\label{fig:m2}
\end{centering}
\end{figure}

\begin{figure}[!p]
\begin{centering}
\includegraphics[angle=0,scale=0.41]{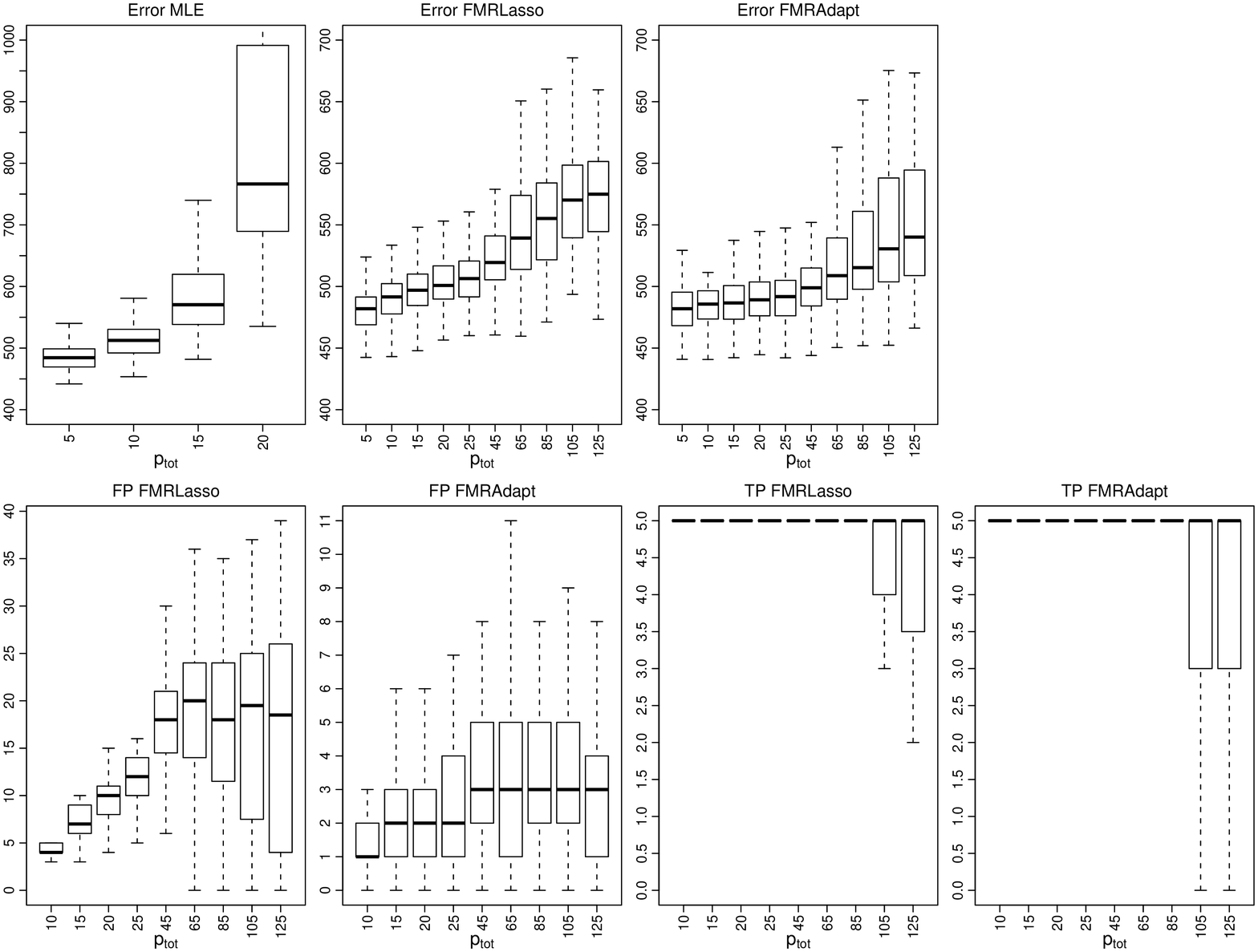}
  \caption[Simulation 1, Model 3]{Simulation 1, Model M3. Same notation as in Figure~\ref{fig:m1}.}
\label{fig:m3}
\end{centering}
\end{figure}

\begin{figure}[!p]
\begin{centering}
\includegraphics[angle=0,scale=0.41]{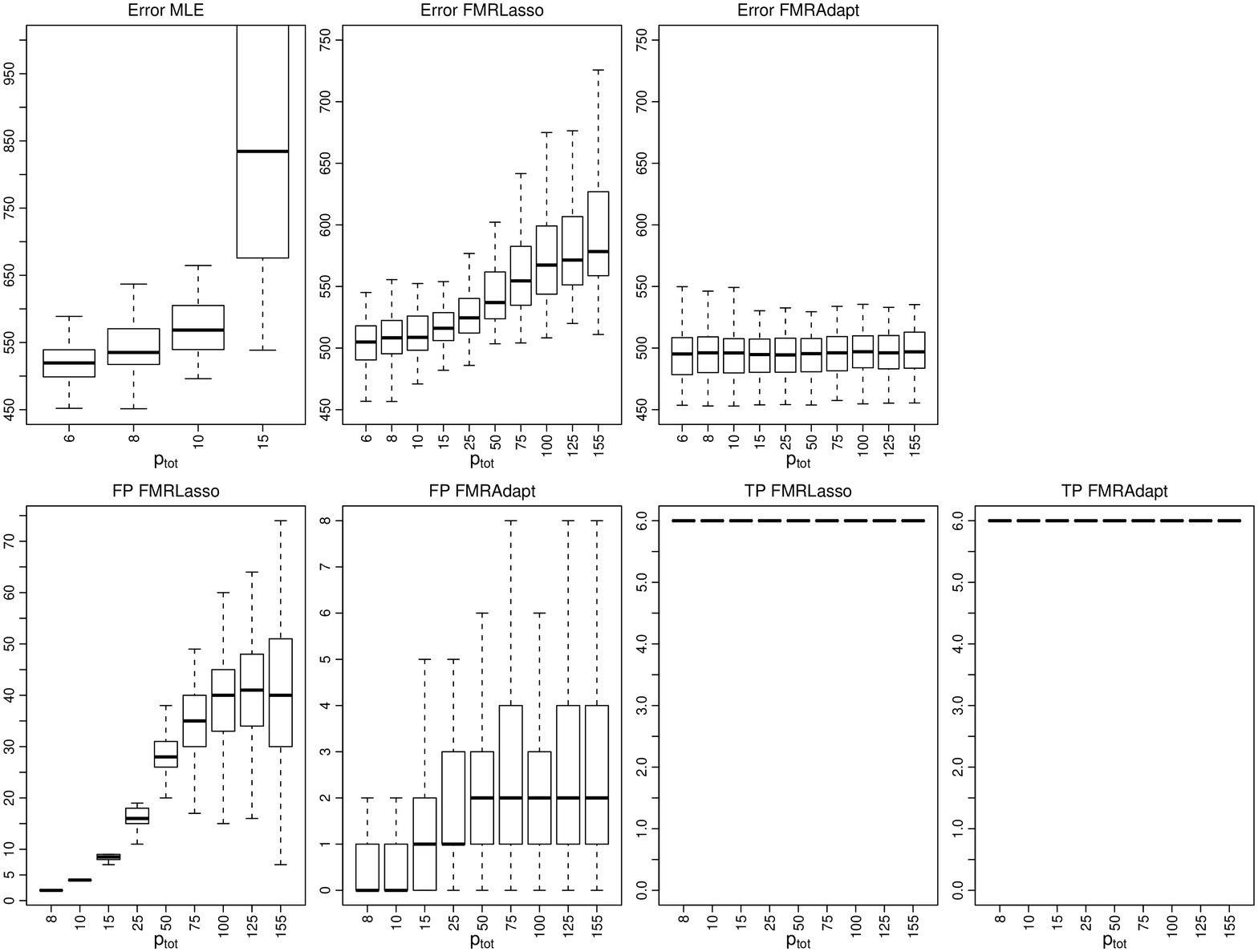}
  \caption[Simulation 1, Model 4]{Simulation 1, Model M4. Same notation as in Figure~\ref{fig:m1}.}
\label{fig:m4}
\end{centering}
\end{figure}

\begin{figure}[!p]\label{fig:m5}
\begin{centering}
\includegraphics[angle=0,scale=0.41]{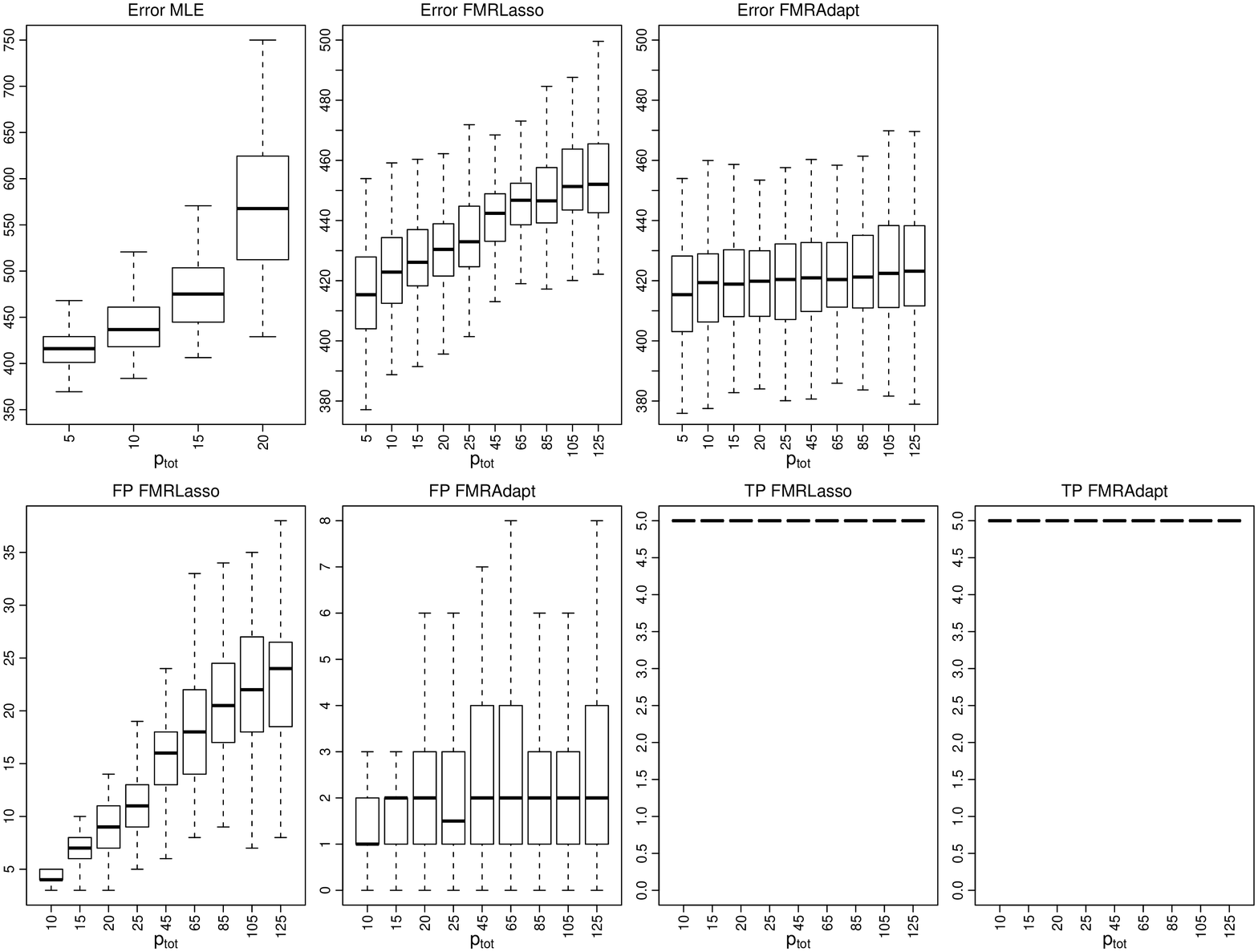}
  \caption[Simulation 1, Model 5]{Simulation 1, Model M5. Same notation as in Figure~\ref{fig:m1}.}
\label{fig:m5}
\end{centering}
\end{figure}

\subsubsection{Simulation 2}
In this section, we explore the sparsity properties of the FMRLasso. The model specifications are given in
Table~\ref{tab:sim2}. Consider the ratios $p_{\mathrm{act}}:n:p_{\mathrm{tot}}$. The total
number of covariates $p_{\mathrm{tot}}$ grows faster than the number of observations
$n$ and the number of active covariates $p_{\mathrm{act}}$: when $p_{\mathrm{tot}}$ is
doubled, $p_{\mathrm{act}}$ is 
raised by one and $n$ is raised by $50$ from model to model.
In particular, we obtain a series of models which gets ``sparser'' as $n$
grows (larger ratio $n/p_{\mathrm{act}}$). In order to compare the performance of
the FMRLasso, we report the True 
Positive Rate (\emph{TPR}) and the False Positive Rate (\emph{FPR}) defined as:
\begin{eqnarray*}
\mathrm{TPR}&=&\frac{\# \textrm{truly selected covariates}}{\#\textrm{active covariates}},\\
\mathrm{FPR}&=&\frac{\# \textrm{falsely selected covariates}}{\#\textrm{inactive covariates}}.
\end{eqnarray*}

These numbers are reported in Figure~\ref{fig:sim2}. We see that the False
Positive Rate approaches zero for sparser models, indicating that the
FMRLasso recovers the true model better in sparser settings regardless of
the large number of noise covariates.

\begin{table}[!h]
\small
  \centering
  \begin{tabular}{|c|ccccccc|}\hline
    $p_{act}$&3&4&5&6&7&8&9\\
    n&50&100&150&200&250&300&350\\
    $p_{tot}$&10&20&40&80&160&320&640\\
    $\beta_1$&&&&(3, 3, 3, 0, 0, \ldots)&&&\\
    $\beta_2$&&&&(-1, -1, -1, 0, 0, \ldots)&&&\\
    $\sigma$&&&&0.5, 0.5&&&\\
    $\pi$&&&&0.5, 0.5&&&\\\hline
  \end{tabular}
  \caption{Series of models for simulation 2 which gets ``sparser'' as $n$
    grows: when $p_{\mathrm{tot}}$ is doubled, $p_{\mathrm{act}}$ is
raised by one and $n$ is raised by $50$ from model to model.}
  \label{tab:sim2}
\end{table}

\begin{figure}[!h]
\begin{centering}
\includegraphics[scale=0.42]{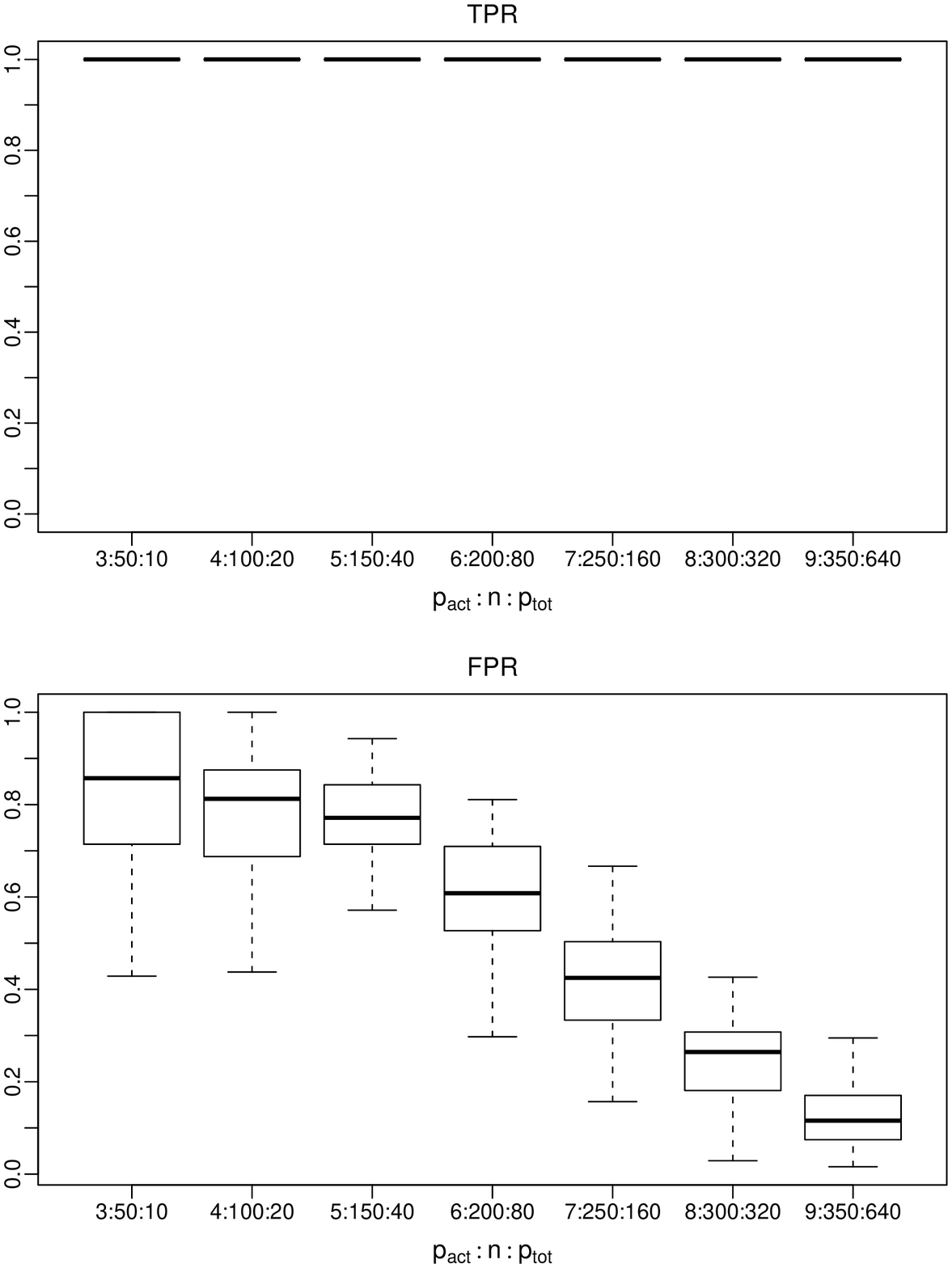} 
  \caption[Simulation 2]
  {Simulation 2 compares the performance of the FMRLasso for a series of models
    which gets ``sparser'' as the sample size grows. Top: True Positive Rate (\emph{TPR}). Bottom: False Positive Rate (\emph{FPR}) over 100 simulation runs.}
\label{fig:sim2}
\end{centering}
\end{figure}

\subsubsection{Simulation 3}
So far, we regarded the number $k$ of components as given, while we have
chosen an optimal $\lambda_{\mathrm{opt}}$ by minimizing the negative log-likelihood
loss on 
validation data. In this section, we compare the performance of
10-fold cross-validation and the BIC criterion presented in
Section~\ref{sec:select.tuning} for selecting the tuning
parameters $k$ and $\lambda$. We use model M1 of Section~\ref{sec:sim1} with $p_{\mathrm{tot}}=25, 50,
75$. For each of these models, we tune the FMRLasso estimator according to
the following strategies:
\begin{itemize}
\item[(1)] Assume the number of components is given ($k=2$). Choose the
  optimal tuning parameter $\lambda_{\mathrm{opt}}$ using 10-fold
  cross-validation. 
\item[(2)] Assume the number of components is given ($k=2$). Choose $\lambda_{\mathrm{opt}}$ by minimizing the BIC criterion (\ref{eq:bic}).
\item[(3)] Choose the number of components $k\in\{1,2,3\}$ and
  $\lambda_{\mathrm{opt}}$ by minimizing the BIC criterion (\ref{eq:bic}).
\end{itemize}
The results of this simulation are presented in Figure~\ref{fig:sim3}, where 
boxplots of the log-likelihood loss (\emph{Error}) are
shown. All three strategies perform equally well. With $p_{\mathrm{tot}}=25$ the BIC
criterion in strategy (3) always chooses $k=2$. For the model with $p_{\mathrm{tot}}=50$, strategy
(3) chooses $k=2$ in $98$ simulation runs and $k=3$ in two runs. Finally,
with $p_{\mathrm{tot}}=75$, the third strategy chooses $k=2$ in $92$ runs and $k=3$ eight times. 

\subsubsection{Simulation 4}
In the preceding simulations, we always used the value $\gamma=1$ in the
penalty term of the FMRLasso estimator (\ref{eq:plik1}). In this section, we
compare the FMRLasso for different values $\gamma=0, 1/2, 1$. First, we
compute the FMRLasso for $\gamma=0, 1/2, 1$ on model
M1 of Section~\ref{sec:sim1} with $p_{\mathrm{tot}}=50$. Then we do the same calculations for an ``unbalanced''
version of this model with $\pi_{1}=0.3$ and $\pi_{2}=0.7$.

In Figure~\ref{fig:sim4}, the boxplots of the log-likelihood loss (\emph{Error}), the
False Positives (\emph{FP}) and the True Positives (\emph{TP}) over 100 simulation runs
are shown. We see that the FMRLasso performs similarly for $\gamma=0, 1/2,
1$. Nevertheless, the value $\gamma=1$ is slightly preferable in the
``unbalanced'' setup.

\begin{figure}[!h]
\begin{centering}
\includegraphics[scale=0.5]{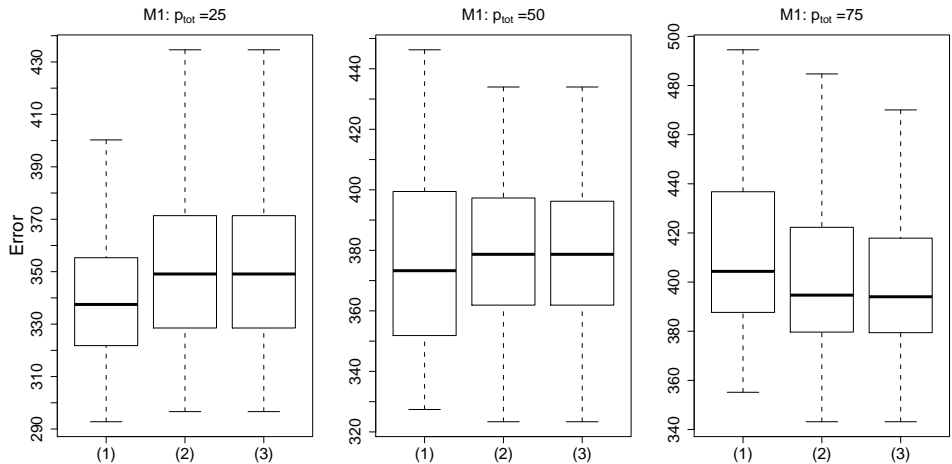} 
  \caption[Simulation 3]
  {Simulation 3 compares different strategies for choosing the tuning
    parameters $k$ and $\lambda$. The boxplots show the predictive
    log-likelihood loss (\emph{Error}) of the FMRLasso, tuned by strategies
    (1), (2) and (3),
    for model M1 with $p_{\mathrm{tot}}=25, 50, 75$.}
\label{fig:sim3}
\end{centering}
\end{figure}

\begin{figure}[htbp]
\begin{centering}
\includegraphics[scale=0.5]{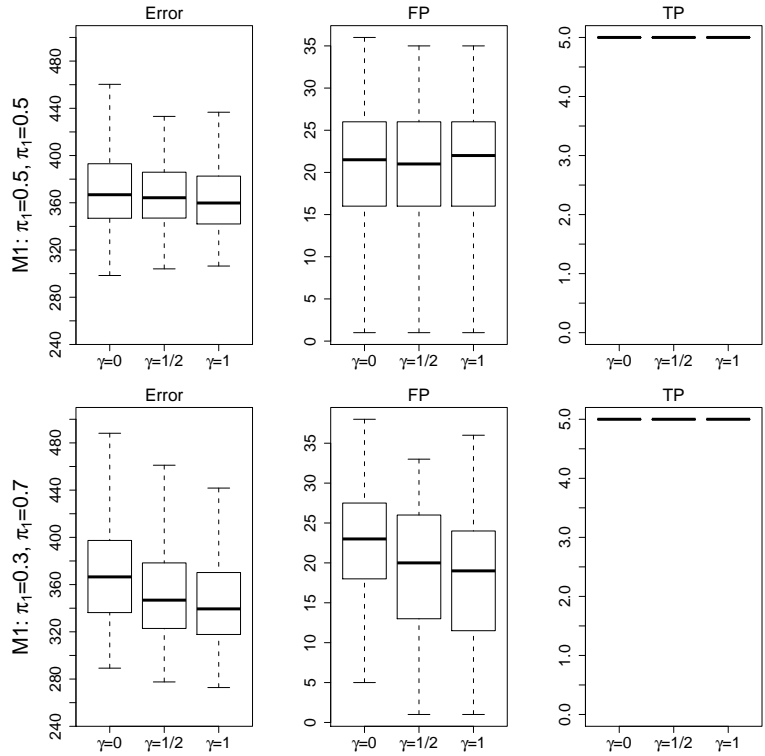} 
  \caption[Simulation 4]
  {Simulation 4 compares the FMRLasso for different values $\gamma=0,
    1/2, 1$. The upper row of panels shows the boxplots of the log-likelihood loss (\emph{Error}),
    the False Positives (\emph{FP}) and the True Positives (\emph{TP}) for model M1 with
    $p_{\mathrm{tot}}=50$ and $\pi_{1}=\pi_{2}=0.5$. The lower row of panels shows
    the same boxplots for an ``unbalanced'' version of model M1 with
    $\pi_{1}=0.3$ and $\pi_{2}=0.7$.}
\label{fig:sim4}
\end{centering}
\end{figure}
\clearpage
\subsection{Real data example}\label{subsec.riboflavin}
We now apply the FMRLasso to a dataset of riboflavin (vitamin $B_{2}$)
production by \emph{Bacillus Subtilis}. The real-valued response variable
is the logarithm of the riboflavin production rate. The data has been
kindly provided by DSM (Switzerland). There are $p=4088$
covariates (genes) measuring the logarithm of the expression level of 4088
genes and measurements of $n=146$ genetically engineered mutants of
\emph{Bacillus Subtilis}. The population seems to be rather heterogeneous as
there are different
strains of \emph{Bacillus Subtilis} which are cultured
under different fermentation conditions. We do not know the different
homogeneity subgroups. For this reason, a FMR model with more
than one component might be more appropriate than a simple linear
regression model.

We compute the FMRLasso estimator for $k=1,\ldots,5$ components. To keep the
computational effort reasonable, we use only the 100
covariates (genes) exhibiting the highest empirical variances. We choose the
optimal tuning 
parameter $\lambda_{\mathrm{opt}}$ by 10-fold cross-validation (using the
log-likelihood loss). As a result, we get five different estimators which we compare
according to their cross-validated log-likelihood loss (\emph{CV Error}). These
numbers are plotted in Figure~\ref{fig:realdata1}. The estimator with three
components performs clearly best, resulting in a 17\% improvement in
prediction over a
(non-mixture) linear model, and it selects 51 genes. In Figure~\ref{fig:realdata2}, the coefficients
of the $20$ most important genes, ordered according to
$\sum_{r=1}^{3}|\hat{\beta}_{r,j}|$, are shown. From the important
variables, only gene 83 shows the opposite sign of the estimated regression coefficients among the
three different mixture components. However, it happens that some
covariates (genes) exhibit a strong effect in one or two mixture components
but none in the remaining other components. 
Finally, for comparison, the one-component (non-mixture) model selects 26
genes, of which 24 are also selected in the
three-component model. 
 
\begin{figure}[htbp]
\begin{centering}
\includegraphics[scale=0.43]{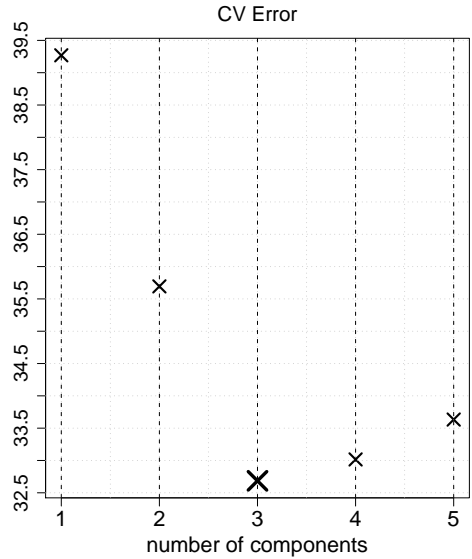}
  \caption[Real data example]
  {Riboflavin production data. Cross-validated negative log-likelihood loss
    (\emph{CV Error}) for the 
    FMRLasso estimator when varying  over different numbers of
    components.}
\label{fig:realdata1}
\end{centering}
\end{figure}

\begin{figure}[htbp]
\begin{centering}
\includegraphics[scale=0.45]{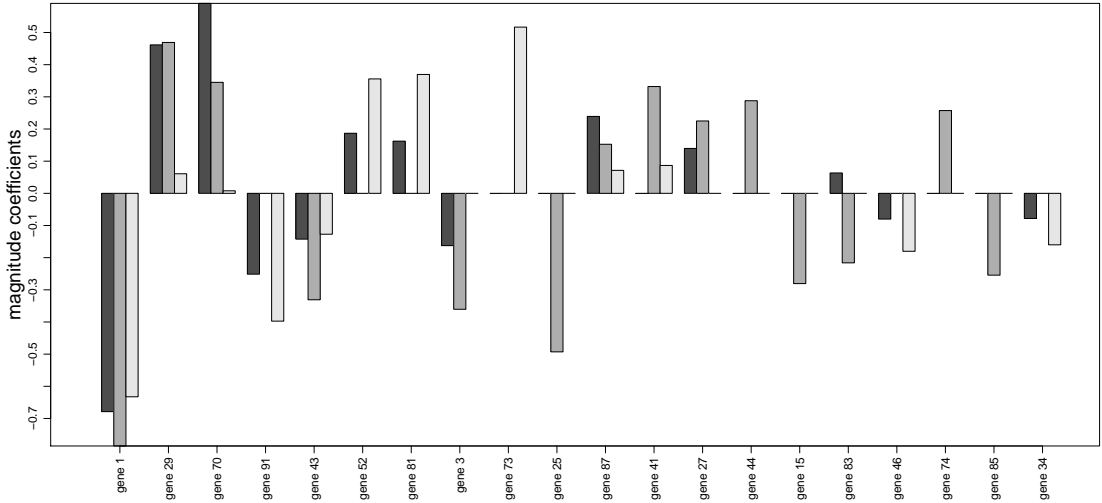}
  \caption[Real data example]
  {Riboflavin production data. Coefficients of the $20$ most important
    genes, ordered according to 
    $\sum_{r=1}^{3}|\hat{\beta}_{r,j}|$, for the prediction optimal model with
    three components.}
\label{fig:realdata2}
\end{centering}
\end{figure}

\subsection{Computational timings}\label{subsec.comptiming}
In this section, we report on the run times of the BCD-GEM algorithm on two
high-dimensional examples. In particular, we focus on the substantial gain of
speed achieved by using the active set version of the algorithm described
in Section~\ref{subsec.activeset}. All computations were carried out with
the statistical computing language and environment \texttt{R}. Timings
depend on the stopping criterion used in 
the algorithm. We stop the algorithm if the relative function
improvement and the relative change of the parameter vector are small
enough, i.e., 
\begin{align*}
\frac{|\ell_{\mathrm{pen},\lambda}^{(\gamma)}(\theta^{(m+1)})-\ell_{\mathrm{pen},\lambda}^{(\gamma)}(\theta^{(m)})|}{1+|\ell_{\mathrm{pen},\lambda}^{(\gamma)}(\theta^{(m+1)})|} 
\leq \tau,
\quad&\max_{j}\left\{\frac{|\theta_j^{(m+1)}-\theta_j^{(m)}|}{1+|\theta_j^{(m+1)}|}\right\}\leq\sqrt{\tau}, 
\end{align*}
with $\tau=10^{-6}.$

We consider a high-dimensional version
of the two component model M1 from Section~\ref{sec:sim1} with $n=200$,
$p_{\mathrm{tot}}=1000$ and the riboflavin dataset from
Section~\ref{subsec.riboflavin} with three components, $n=146$ and
$p_{\mathrm{tot}}=100$. We use the BCD-GEM algorithm with and without
active set strategy to fit the FMRLasso on a small grid of eight values for
$\lambda$. The corresponding BIC, CPU times (in
seconds) and number of EM-iterations are reported in Tables~\ref{tab:cpu.model1} and
\ref{tab:cpu.realdata}. The values for the BCD-GEM without active set
strategy are written in brackets. For model M1 and an appropriate $\lambda$
with minimal BIC score, the 
active set algorithm converges in 5.96 seconds whereas the standard
BCD-GEM needs 53.15 seconds. There is also a considerable gain of speed for
the real data: 
0.89 seconds versus 3.57 seconds for $\lambda$ with optimal BIC. Note that
in Table \ref{tab:cpu.model1}, the BIC scores sometimes differ
substantially for inappropriate values of $\lambda$. For such
regularization parameters, the solutions are unstable and different
local optima are attained depending on the algorithm used. However, if the
regularization parameter is in a reasonable range with low BIC score, the
results stabilize. 
\begin{table}[!hp]
{\centering
  \tabcolsep 1.8pt
    \scriptsize
\begin{tabular}{|@{}c|cccccccc@{}|}\hline
    $\lambda$&10.0&15.6&21.1&26.7&32.2&37.8&43.3&48.9\\\hline
    BIC&2033 (2022)&1606 (1748)&951 (959)&941 (940)&989 (983)&1236 (1073)&1214 (1216)&
    1206 (1203)\\
    CPU [s]&26.78 (269.91)&17.05 (165.78)&8.63 (82.78)&5.96
    (53.15)&5.08 (44.23)&4.23 (37.27)&3.35 (18.99)&3.30 (15.62)\\
    $\#$ EM-iter.&277.0 (341.5)&196.0 (205.0)&96.0 (100.5)&63.5 (64.5)&56.0 (53.5)&41.5
    (46.0)&31.5 (23.0)&25.0 (19.0)\\\hline
  \end{tabular}
\caption{Model M1 with $n=200$ and $p_{\mathrm{tot}}=1000$. Median over 10
    simulation runs of BIC, CPU times and number of EM-iterations for the BCD-GEM with and without active set
    strategy (the latter in brackets).}\label{tab:cpu.model1}
}
\vspace{2.0cm}
{\centering
  \tabcolsep 4.6 pt
  \scriptsize
  \begin{tabular}{|@{}c|cccccccc@{}|}\hline
    $\lambda$&3.0&13.8&24.6&35.4&46.2&57.0&67.8&78.6\\\hline
    BIC&560 (628)&536 (530)&516 (522)& 532 (525)& 541 (540)&561 (580)&592 (591)&
    611 (613)\\
    CPU [s]&22.40 (29.98)&1.35 (3.28)&0.89 (3.57)&0.86
    (3.34)&0.78 (3.87)&0.69 (2.42)&0.37 (2.56)&0.85 (4.05)\\
    $\#$ EM-iter.&3389 (2078)&345 (239)&287 (266)&298 (247)&296 (290)&248
    (184)&129 (192)&313 (302)\\\hline
  \end{tabular}
\caption{Riboflavin data with $k=3$, $n=146$ and $p_{\mathrm{tot}}=100$. BIC, CPU
    times and number of EM-iterations for the BCD-GEM with and without active set
    strategy (the latter in brackets).}\label{tab:cpu.realdata}
}
\end{table}

\section{Discussion}
We have presented an $\ell_1$-penalized estimator for a finite mixture of
high-dimensional Gaussian regressions where the number of covariates may
greatly exceed sample size. Such a model and the corresponding Lasso-type
estimator are useful to blindly account for often encountered inhomogeneity
of high-dimensional data. On a high-dimensional real data example, we
demonstrate a 17\% gain in prediction accuracy over a (non-mixture) linear
model.  

The computation and mathematical analysis in such a high-dim\-en\-sion\-al
mixture model is challenging due to the non-convex behavior of the negative
log-likelihood. Moreover, with high-dimensional estimation defined via
optimization of a non-convex objective function, there is a major gap
between the actual computation and the procedure analyzed in theory. We do
not provide an answer to this issue in this thesis. Regarding the
computation in FMR models, a simple 
reparameterization is 
very beneficial and the $\ell_1$-penalty term makes 
the optimization problem numerically much better behaved. We develop an
efficient generalized EM algorithm and we prove its numerical convergence to a
stationary point. Regarding the statistical properties, besides standard
low-dimensional asymptotics, we present a non-asymptotic oracle inequality
for the 
Lasso-type estimator in a high-dimensional setting with general, non-convex
but smooth loss 
functions. The mathematical arguments are different than what is typically
used for convex losses.

\begin{appendices} 
\section{Proofs for Section \ref{sec.asympt}}\label{app0} 
\subsection{Proof of Theorem \ref{theorem:rootn}}
We assume the regularity assumptions (A)-(C) of \citet{fanli}. The theorem follows from Theorem 1 of \cite{fanli}.
\hfill $\sqcup \mkern -12mu \sqcap$

\subsection{Proof of Theorem \ref{theorem:oracle}}
In order to keep the notation simple, we give the proof for a two class
mixture with $k=2$. All arguments in the proof can also be used for a general
mixture with more than two components. Remember that
$-n^{-1}\ell_{\mathrm{adapt}}(\theta)$ is given by
\begin{eqnarray*}
-n^{-1}\ell_{\mathrm{adapt}}(\theta)&=&-n^{-1}\ell(\theta)+\lambda\Big(\pi_{1}^{\gamma}\sum_{j=1}^{p}w_{1,j}|\phi_{1,j}|+(1-\pi_{1})^{\gamma}\sum_{j=1}^{p}w_{2,j}|\phi_{2,j}|\Big),
\end{eqnarray*}
where
$\ell(\theta)$ is the log-likelihood function. The weights
$w_{r,j}$ are given by $w_{r,j}=\frac{1}{|\phi_{r,j}^{\mathrm{ini}}|}$, $r=1,2$, and
$j=1,\ldots,p$.\\ \\
Assertion 1.\\
Let $\hat{\theta}$ be a root-$n$ consistent local minimizer of
$-n^{-1}\ell_{\mathrm{adapt}}(\theta)$ (construction as in \cite{fanli}). 

For all $(r,j) \in S$, we easily see from consistency of $\hat{\theta}$ that $\Prob[(r,j) \in \hat{S}]\rightarrow 1.$ It then remains to
show that for all $(r,j)\in S^{c}$, $\Prob[(r,j)\in\hat{S}^{c}]\rightarrow1.$
Assume the contrary, i.e., w.l.o.g there is an $s\in\{1,\ldots,p\}$ with $\phi_{1,s}=0$ such that $\hat{\phi}_{1,s}\neq0$ with non-vanishing probability.

By Taylor's theorem, applied to the function $n^{-1}\frac{\partial\ell(\theta)}{\partial
  \phi_{1,s}}$, there exists a (random) vector $\tilde{\theta}$
on the line segment between $\theta_{0}$ and $\hat{\theta}$ such that
\begin{align*}
\frac{1}{n}\frac{\partial\ell_{\mathrm{adapt}}}{\partial
  \phi_{1,s}}\Big|_{\hat{\theta}}=&\underbrace{\frac{1}{n}\frac{\partial\ell}{\partial
  \phi_{1,s}}\Big|_{\theta_{0}}}_{(1)}+\underbrace{\frac{1}{n}\frac{\partial\ell'}{\partial
  \phi_{1,s}}\Big|_{\theta_{0}}}_{(2)}\left(\hat{\theta}-\theta_{0}\right)+\\
&\frac{1}{2}\left(\hat{\theta}-\theta_{0}\right)^{T}\underbrace{\frac{1}{n}\frac{\partial\ell''}{\partial
  \phi_{1,s}}\Big|_{\tilde{\theta}}}_{(3)}\left(\hat{\theta}-\theta_{0}\right)-
\lambda \hat{\pi}_1^{\gamma}
w_{1,s}\mathrm{sgn}(\hat{\phi}_{1,s}).
\end{align*}
Now, using the regularity assumptions and the central limit theorem, term (1) is of order $O_{P}(\frac{1}{\sqrt{n}})$. Similarly, term (2) is of
order $O_{P}(1)$ by the law of large numbers. Term (3) is of order
$O_{P}(1)$ by the law of large numbers and the regularity
condition on the third derivatives (condition (C) of \citet{fanli}). 
Therefore, we have
\begin{eqnarray*}
&&\frac{1}{n}\frac{\partial\ell_{\mathrm{adapt}}}{\partial\phi_{1,s}}\Big|_{\hat{\theta}}=O_{P}(\frac{1}{\sqrt{n}})+(O_{P}(1)+
  (\hat{\theta}-\theta_{0})^{T}O_{P}(1))(\hat{\theta}-\theta_{0}) 
-\lambda \hat{\pi}_1^{\gamma}
w_{1,s}\mathrm{sgn}(\hat{\phi}_{1,s}).
\end{eqnarray*}
As $\hat{\theta}$ is root-$n$ consistent we get 
\begin{eqnarray}\label{eq:proofthm2}
\frac{1}{n}\frac{\partial\ell_{\mathrm{adapt}}}{\partial\phi_{1,s}}\Big|_{\hat{\theta}}&=&O_{P}(\frac{1}{\sqrt{n}})+\left(O_{P}(1)+o_{P}(1)O_{P}(1)\right)O_{P}(\frac{1}{\sqrt{n}})- \lambda \hat{\pi}_1^{\gamma}
w_{1,s}\mathrm{sgn}(\hat{\phi}_{1,s})\nonumber\\ 
&=&\frac{1}{\sqrt{n}} \left(O_{P}(1) - 
  \frac{n \lambda}{\sqrt{n}}\hat{\pi}_1^{\gamma} w_{1,s}\mathrm{sgn}(\hat{\phi}_{1,s})\right).
\end{eqnarray}
From the assumption on the initial estimator, we
have \[\frac{n \lambda}{\sqrt{n}}
w_{1,s}=\frac{n \lambda}{\sqrt{n}|\phi_{1,s}^{\mathrm{ini}}|}=\frac{n
  \lambda}{O_{P}(1)}\rightarrow 
\infty \qquad \textrm{as} \qquad n \lambda \rightarrow \infty.\] 
Therefore, the second term in the brackets of (\ref{eq:proofthm2}) dominates the first and the
probability of the event
\[\left\{\mathrm{sgn}\left(\frac{1}{n}\frac{\partial\ell_{\mathrm{adapt}}}{\partial\phi_{1,s}}\Big|_{\hat{\theta}}\right)=
  - \mathrm{sgn}(\hat{\phi}_{1,s})\neq0\right\}\]
tends to 1. But this contradicts the assumption that $\hat{\theta}$ is a
local minimizer
(i.e., $\frac{1}{n}\frac{\partial\ell_{\mathrm{adapt}}}{\partial\phi_{1,s}}\big|_{\hat{\theta}}=~0$).\\ \\
Assertion 2.\\
Write
$\theta=(\theta_{S},\theta_{S^{c}})$. From part 1), it
follows that with probability tending to one $\hat{\theta}_{S}$ is a root-$n$ local minimizer of $-n^{-1}\ell_{\mathrm{adapt}}\left(\theta_{S},0\right) $.
By using a Taylor expansion we find, 
\begin{align*}
0=&\frac{1}{n}\ell'_{\mathrm{adapt}}|_{\hat{\theta}_{S}}\\
=&\frac{1}{n}
\ell'|_{\theta_{0,S}}+\underbrace{\frac{1}{n}
  \ell''|_{\theta_{0,S}}}_{(1)}\big(\hat{\theta}_{S}-\theta_{0,S}\big)+\frac{1}{2}\underbrace{\big(\hat{\theta}_{S}-\theta_{0,S}\big)^{T}}_{(2)}\underbrace{\frac{1}{n}
  \ell'''|_{\tilde{\theta}_{S}}}_{(3)}\big(\hat{\theta}_{S}-\theta_{0,S}\big)\\
&- \lambda \left(\!\begin{array}{c}
\hat{\pi}_1^{\gamma}w_{1,S}\,\mathrm{sgn}(\hat{\phi}_{1,S})\\
(1-\hat{\pi}_1)^{\gamma}w_{2,S}\,\mathrm{sgn}(\hat{\phi}_{2,S})\\
0\\
0\\
\gamma\hat{\pi}_1^{\gamma-1}\!\sum\limits_{(1,j)\in S}\!
w_{1,j}|\hat{\phi}_{1,j}|-\gamma(1-\hat{\pi}_1)^{\gamma-1}\!\sum\limits_{(2,j)\in S}\!
w_{2,j}|\hat{\phi}_{2,j}|
\end{array}\!\right).
\end{align*}
Now term (1) is of order $-I_{S}(\theta_{0})+o_{P}(1)$ (law of large numbers); term~(2) is of
order $o_{P}(1)$ (consistency); and term (3), with some abuse of
notation an $(|S|+3)$-vector of $(|S|+3)\times(|S|+3)$ matrices, is of order $O_{P}(1)$ (law of
large numbers and regularity condition on the third derivatives). Therefore, we have
\[\sqrt{n}\frac{1}{n}
\ell'|_{\theta_{0,S}}+\left(-I_{S}(\theta_{0})+o_{P}(1)\right)\sqrt{n}(\hat{\theta}_{S}-\theta_{0,S})-\sqrt{n}\lambda
O_{P}(1)=0,
\]
or
\begin{eqnarray}\label{eq:oracle}
\left(-I_{S}(\theta_{0})+o_{P}(1)\right)\sqrt{n}(\hat{\theta}_{S}-\theta_{0,S})-
\sqrt{n} \lambda O_{P}(1)=-\frac{1}{\sqrt{n}}\ell'|_{\theta_{0,S}}.
\end{eqnarray}
Notice that $\frac{1}{\sqrt{n}}\ell'|_{\theta_{0,S}}\leadsto^{d}\calN(0,I_{S}(\theta_{0}))$
by the central limit theorem. Furthermore, $\sqrt{n} \lambda =o(1)$ as
 $\lambda=o(n^{-1/2})$. Therefore, 
$$\sqrt{n}(\hat{\theta}_{S}-\theta_{0,S})\leadsto^{d}
\calN(0,I_{S}(\theta_{0})^{-1})$$
 follows from Equation (\ref{eq:oracle}). \hfill $\sqcup \mkern -12mu \sqcap$

\section{Proofs for Section \ref{subsec.ashighdim}}\label{app1}
\subsection{Proof of Lemma \ref{marginlemma}}
Using a Taylor expansion,
$${\cal E} (\psi \vert \psi_0) =
(\psi- \psi_0)^T I( \psi_0)  (\psi - \psi_0) / 2 + r_{\psi} , $$
where
$$ |r_{\psi} | \le {\| \psi - \psi_0 \|_1^3 \over 6}
\int\sup_{\psi \in \Psi}  \max_{j_1 , j_2 , j_3 } 
\biggl | {\partial^3   l_{\psi} \over \partial \psi_{j_1} \partial
\psi_{j_2} \partial \psi_{j_3} }\biggr  | f_{\psi_0} d \mu $$
$$ \le { d^{3/2} C_3 \over 6} \| \psi - \psi_0 \|_2^3 . $$
Hence
$${\cal E} (\psi \vert \psi_0(x) )  \ge \| \psi -\psi_0 (x) \|_2^2
\Lambda_{\rm min}^2 / 2 - d^{3/2} C_3 \| \psi - \psi_0 (x) \|_2^3 /6 . $$
Now, apply the auxiliary lemma below, with $K_0^2 = d K^2 $, 
$\Lambda^2 = \Lambda_{\rm min}^2 / 2$, and \\
$C= d^{3/2} C_3 / 6 $.
\hfill $\sqcup \mkern -12mu \sqcap$

{\em Auxiliary Lemma.} {\it Let $h : [-K_0 , K_0 ] \rightarrow [0, \infty) $ have the following
properties:

(i) $\forall \ \eps >0 $ $\exists$ $\alpha_{\eps} > 0$ such that
$\inf_{\eps < |z| \le K_0 } h(z) \ge \alpha_{\eps} $,

(ii) $\exists$ $\Lambda >0$, $C>0$, such that $\forall \ |z| \le K_0 $,
$h(z) \ge
\Lambda^2 z^2 - C | z|^3 $.

Then $\forall \ |z| \le K_0$, 
$$h(z) \ge z^2 / C_0^2 ,$$
where 
$$C_0^2 = \max \biggl [ {1 \over \eps_0 } , { K_0^2 \over \alpha_{\eps_0}} \biggr ] , \ \eps_0 = {\Lambda^2 \over  2 C} .$$}

\emph{Proof (Auxiliary Lemma)}

If $\eps_0 > K_0$, we have 
$h(z) \ge \Lambda^2 z^2 / 2 $ for  all $| z | \le K_0 $.

If $\eps_0 \le K_0 $ and $|z| \le \eps_0$, we also have
$h(z) \ge( \Lambda^2 - \eps_0 C) z^2 \ge \Lambda^2 z^2 / 2 $.

If $\eps_0 \le K_0$ and $\eps_0 < |z| \le K_0$, we have
$h(z) \ge \alpha_{\eps_0} =K_0^2  \alpha_{\eps_0} / K_0^2 \ge
|z|^2 \alpha_{\eps_0} / K_0^2 $.\hfill $\sqcup \mkern -12mu \sqcap$

\subsection{Proof of Lemma \ref{cor.set}}
In order to prove Lemma \ref{cor.set}, we first state and proof a suitable
entropy bound:

We introduce the norm
$$\| h ( \cdot , \cdot) \|_{P_n} =
\sqrt { {1 \over n} \sum_{i=1}^n h^2 (x_i , Y_i)  } . $$
For a collection ${\cal H}$ of functions on ${\cal X} \times {\cal Y}$,
we let $H( \cdot , {\cal H} , \| \cdot \|_{P_n} ) $ be the entropy
of ${\cal H}$ equipped with the metric induced by the norm
$\| \cdot \|_{P_n}$ (for a definition of the entropy of a metric space see \cite{geer00}). 

Define for $ \epsilon >0 $, 
$$\tilde \Theta ( \epsilon) =
\{ \vartheta^T = ( \phi_1^T , \ldots , \phi_k^T , \eta^T )
 \in \tilde \Theta: \  \| \phi - \phi_0 \|_1 + 
 \ \| \eta - \eta_0 \|_2 \le \epsilon \} . $$

{\em Entropy Lemma} {\it For a constant $C_0$ depending on $k$ and $m$,
 we have for all $u>0 $ and $M_n>0$,
\begin{eqnarray*}
H\biggl  ( u ,\biggl  \{ ( L_{\vartheta}- L_{\vartheta^*} ) {\rm l}
 \{ G_1 \le M_n \}  :\ \vartheta \in \tilde \Theta(\epsilon) \biggr \} ,
 \| \cdot \|_{P_n} \biggr ) \le 
 C_0 { \epsilon^2 M_n^2 \over u^2 } \log \left ( {\epsilon M_n  \over u }
 \right ) .
\end{eqnarray*}
}
 
\emph{Proof (Entropy Lemma)} We have
\begin{eqnarray*}
|L_{\vartheta} (x,y) - L_{\tilde \vartheta} (x,y) |^2 &\le&   G_1^2 (y) 
   \biggl [  \sum_{r=1}^k   |(\phi_r - \tilde \phi_r)^T x |+ 
  \| \eta - \tilde \eta \|_1\biggr  ]^2\\ &\le& d G_1^2 (y)    \biggl [
  \sum_{r=1}^k   |(\phi_r - \tilde \phi_r)^T x |^2+  \| \eta - \tilde \eta
  \|_2^2\biggr  ].
\end{eqnarray*}
It follows that
\begin{eqnarray*}
\| ( L_{\vartheta}- L_{\tilde \vartheta}) {\rm l} \{ G_1 \le M_n \}
\|_{P_n}^2 \le 
  d M_n^2   \biggl [  \sum_{r=1}^k   {1 \over n} 
  \sum_{i=1}^n |(\phi_r - \tilde \phi_r)^T x_i |^2+ 
  \| \eta - \tilde \eta \|_2^2\biggr  ].
\end{eqnarray*}
Let $N( \cdot , \Gamma , \tau)$ denote the covering number of
a metric space $( \Gamma , \tau)$ with metric (induced by the norm) $\tau$, and
$H( \cdot , \Gamma , \tau)= \log N( \cdot , \Gamma , \tau)$ be its
entropy (for a definition of the covering number of a metric space see \cite{geer00}). If $\Gamma$ is a ball with radius $\epsilon$ in Euclidean space
$\R^N$, one easily verifies that
$$H(u , \Gamma, \tau) \le N \log \left({5\epsilon\over u}\right)  , \forall
\ u>0 . $$
Thus $H(u, \{\eta\in\R^m: ||\eta-\eta_{0}||_2\leq \epsilon\} , \| \cdot
\|_2 ) \le m\log \left({5\epsilon\over u}\right), \, \forall u>0.$
Moreover, applying a bound as in Lemma 2.6.11 of \cite{vaart96weak} gives
$$H\biggl( 2u  , \biggl\{ \sum_{r=1}^k(\phi_r-\phi_{0,r})^Tx_r: \| \phi-\phi_0\|_1\leq \epsilon \biggr\} , \| \cdot \|_{P_n}\biggr ) \le
\biggr({\epsilon^2 \over u^2}+1\biggl) \log (1+kp). $$
 
We can therefore conclude that
\begin{eqnarray*}
&&H\biggl ( 3\sqrt {d} M_n u , 
\biggl  \{ (L_{\vartheta}- L_{\vartheta_0}) {\rm l} \{ G_1 \le M_n \}  :\ \vartheta \in \tilde \Theta(\epsilon) \biggr \} ,
 \| \cdot \|_{P_n} \biggr) \\ &&\le\left (  {  \epsilon^2 \over u^2 } +m +1\right ) 
  \left ( \log \left({5\epsilon\over u}\right)+\log(1+kp) \right ) . 
\end{eqnarray*}

\medskip
Let's now turn to the main proof of Lemma \ref{cor.set}.

In what follows, $\{c_t\}$ are constants
depending on $\Lambda_{\max}$, $k$, $m$ and $K$. The truncated version of
the empirical process is 
$$
V_n^{\rm trunc} (\vartheta) = {1 \over n}
\sum_{i=1}^n \biggl ( L_{\vartheta} (x_i , Y_i)  {\rm l} \{ G_1(Y_i) \le M_n \} -
\EE \Bigl [ L_{\vartheta} (x_i , Y)  
 {\rm l} \{ G_1(Y) \le M_n \}  \Bigl \vert X = x_i \Bigr ] \biggr ). $$
Let $\epsilon>0$ be arbitrary.
We invoke Lemma 3.2  in \cite{geer00}, combined with 
a symmetrization lemma (e.g., a conditional version of Lemma 3.3 in \cite{geer00}). We apply these lemmas to the class
$$\biggl \{ (L_{\vartheta} - L_{\vartheta_0} ) {\rm l} \{ G_1 \le M_n \} : \ \vartheta \in
\tilde \Theta (\epsilon) \biggr \} . $$
In the notation used in Lemma 3.2 of \cite{geer00}, we take
\linebreak$\delta=c_4  T \epsilon M_n  \log n\sqrt{\log (p\vee n) / n }$, and $R = c_5 (\epsilon \wedge 1) M_n$.
This then gives
\begin{eqnarray*}
&&\PP_{\bf x} \biggl ( \sup_{\vartheta \in \tilde \Theta(\epsilon)} 
|  V_n^{\rm trunc} (\vartheta) -  V_n^{\rm trunc} ( \vartheta_0) | \ge c_6
\epsilon T M_n  \log n\sqrt { \log(p\vee n) \over  n }  \biggr ) \\
&&\le c_7 \exp\biggl [ - {T^2 \log^2 n \log(p\vee n) ( \epsilon^2 \vee 1) \over c_8^2 }\biggr  ] .
\end{eqnarray*}
Here, we use the bound (for $0<a\le 1$),
$$\int_{a}^1  { 1 \over u} \sqrt{ \log \biggl ( {1 \over u} } \biggr ) du \le
\log^{3/2} \biggl ( {1 \over a} \biggr )  . $$

We then invoke the peeling device: split the set $\tilde \Theta$ into sets
$$\{ \vartheta \in \tilde \Theta: \ 2^{-{(j+1)} }\le \| \phi - \phi_0 \|_1 + \| \eta - \eta_0 \|_2 \le 2^{-j} 
\} ,$$
where $j \in \Z$, and $2^{-{j+1}} \ge \lambda_0 $.
There are no more than $c_9 \log n $ indices $j\le 0$
with $2^{-{j+1}} \ge \lambda_0$. Hence, we get
$$\sup_{\vartheta^T = ( \phi^T , \eta^T)  \in \tilde \Theta } { 
 \biggl |  V_n^{\rm trunc} ( \vartheta ) -  V_n^{\rm trunc} ( \vartheta_0) 
\biggr | \over (
\| \phi - \phi^* \|_1 + \| \eta - \eta^* \|_2 ) \vee \lambda_0 } \le 2 c_6 TM_n  
\log n\sqrt { \log(p\vee n) \over n }   ,$$
with $\PP_{\bf x} $ probability at least
\begin{eqnarray*}
&&1- c_7 [c_9 \log n ] \exp\biggl  [ - {  T^2 \log^2 n \log(p\vee n)   \over c_8^2 } \biggr ] -
\sum_{j=1}^{\infty} c_7\exp\biggl  [ - {  T^2 2^{2j}  \log^2 n \log(p\vee n)   \over c_8^2 } \biggr ] \\
&&\ge 1 - c_2  \exp\biggl  [ - {  T^2 \log^2 n \log(p\vee n)   \over c_{10}^2 } \biggr ].
\end{eqnarray*}

Finally, to remove the truncation, we use
$$ | (L_{\vartheta} (x,y) - L_{\vartheta_0} (x,y) ){\rm l} \{ G_1(y) > M_n \} | \le
dK G_1(y) {\rm l} \{ G_1 (y) > M_n \} . $$
Hence
$${ \biggl |  (V_n^{\rm trunc} ( \vartheta ) -  V_n^{\rm trunc} ( \vartheta_0) )
- ( V_n (\vartheta) - V_n (\vartheta_0)) 
\biggr | \over (
\| \phi - \phi^* \|_1 + \| \eta - \eta^* \|_2 ) \vee \lambda_0 } $$
$$ \le {dK   \over n\lambda_0 } \sum_{i=1}^n \biggl ( G_1(Y_i) {\rm l} \{ G_1(Y_i) > M_n \} +
\EE \Bigl [ G_1(Y) {\rm l} \{ G_1(Y) > M_n \} \Bigl \vert X=x_i \Bigr ]\biggr )  .$$\hfill $\sqcup \mkern -12mu \sqcap$

\subsection{Proof of Theorem \ref{th.oracle}}

On ${\cal T}$
$$\bar {\cal E} (\hat \psi \vert \psi_0) +
\lambda   \| \hat \phi \|_1 \le
T \lambda_0 \biggl [  ( \| \hat \phi - \phi_0 \|_1 + \|\hat  \eta - \eta_0 \|_2 
 ) \vee \lambda_0 
\biggr ] + \lambda \| \phi_0 \|_1 + \bar {\cal E} (\psi_0\vert \psi_0). $$
By Lemma \ref{marginlemma},
$$\bar {\cal E} (\hat \psi \vert \psi_0 ) \ge \| \hat \psi - \psi_0 \|_{Q_n}^2 / c_0^2 ,
 $$
 and $\bar {\cal E} (\psi_0 \vert \psi_0 ) = 0$. 

{\bf Case 1} Suppose that
$$\| \hat \phi - \phi_0 \|_1 + \| \hat \eta - \eta_0 \|_2 \le \lambda_0 . $$
Then we find
$$ \bar {\cal E} (\hat \psi \vert \psi_0 ) \le
T \lambda_0^2  + \lambda \| \hat \phi - \phi_0 \|_1 +
\bar {\cal E} (\psi_0 \vert \psi_0 ) $$
$$ \le (\lambda + T \lambda_0) \lambda_0.$$

{\bf Case 2} Suppose that
$$\| \hat \phi - \phi_0 \|_1 + \| \hat \eta - \eta_0 \|_2 \ge \lambda_0 , $$
and that
$$T \lambda_0 \| \hat \eta - \eta_0 \|_2 
\ge ( \lambda + T \lambda_0 ) \| \hat \phi_S - (\phi_0)_S \|_1 . $$ 
Then we get
\begin{eqnarray*}
& &\bar {\cal E} (\hat \psi \vert \psi_0 ) + ( \lambda - T \lambda_0)
\| \hat \phi_{S^c} \|_1 \le 2 T \lambda_0 \| \hat \eta - \eta_0 \|_2 \\
&&\le 4T^2 \lambda_0^2  c_0^2+ \| \hat \eta - \eta_0 \|_2^2 / (2c_0^2) \\
&&\le 4T^2 \lambda_0^2 c_0^2 +  \bar {\cal E} (\hat \psi \vert \psi_0
)/2.
\end{eqnarray*}
So then
$$ \bar {\cal E} (\hat \psi \vert \psi_0 ) + 2 ( \lambda - T \lambda_0)
\| \hat \phi_{S^c} \|_1 \le 8T^2 \lambda_0^2 c_0^2.$$
  
  {\bf Case 3} Suppose that
$$\| \hat \phi - \phi_0 \|_1 + \| \hat \eta - \eta_0 \|_2 \ge \lambda_0 , $$
and that
$$T \lambda_0 \| \hat \eta - \eta_0 \|_2 
\le ( \lambda + T \lambda_0 ) \| \hat \phi_S - (\phi_0)_S \|_1 . $$
Then we have
$$ \bar {\cal E} (\hat \psi \vert \psi_0 ) +  ( \lambda - T \lambda_0)
\| \hat \phi_{S^c} \|_1 \le 2 (\lambda + T \lambda_0 ) \| \hat \phi_S -
\phi_0 \|_1 . $$
So then
$$\| \hat \phi_{S^c} \|_1 \le 6 \| \hat \phi_S - (\phi_0)_S \|_1 . $$
We can then apply the restricted
  eigenvalue condition to
$\hat \phi - \phi_0 $. This gives
\begin{eqnarray*}
\bar {\cal E} (\hat \psi \vert \psi_0 ) +  ( \lambda - T \lambda_0)
\| \hat \phi_{S^c} \|_1 &\le& 2 (\lambda + T \lambda_0 )
\sqrt s  \| \hat \phi_S -
\phi_0 \|_2 . \\
&\le& 2 (\lambda + T \lambda_0 )
\sqrt s  \kappa \| \hat g -
g_0 \|_{Q_n} \\
&\le& 4 (\lambda + T \lambda_0 )^2 c_0^2 \kappa^2 s + 
 \bar {\cal E} (\hat \psi \vert \psi_0 )/2.
\end{eqnarray*}
 So we arrive at
 $$ \bar {\cal E} (\hat \psi \vert \psi_0 ) +  2 ( \lambda - T \lambda_0)
\| \hat \phi_{S^c} \|_1\le 8 (\lambda + T \lambda_0 )^2 c_0^2 \kappa^2 s.$$\hfill $\sqcup \mkern -12mu \sqcap$

\subsection{Proof of Lemma \ref{lemm.setT}}
Let $Z$ be a standard normal random variable.
Then by straightforward computations, for all $M>0$,
$$\E[|Z| {\rm l} \{ |Z| > M \}] \le 2 \exp[-M^2 / 2] , 
$$ and
$$\E[|Z|^2 {\rm l} \{ |Z| > M \}] \le (M+2) \exp [-M^2 / 2 ] . $$
Thus, for $n$ independent copies $Z_1 , \ldots , Z_n$ of $Z$,
and $M = 2 \sqrt{\log n}$,
\begin{eqnarray*} 
&&\PP \left ( {1 \over n } \sum_{i=1}^n | Z_i | {\rm l} \{ | Z_i  | > M \} >
{ 4 \log n \over n }   \right )\\
&&\le \PP \left ( {1 \over n } \sum_{i=1}^n | Z_i | {\rm l} \{ | Z_i  | > M \}-
\E[|Z| {\rm l} \{ |Z| > M \}] > { 2{ \log n}  \over n } \right ) \\
&&\le { n \E[|Z|^2 {\rm l } \{ |Z| > M \}] \over4 (\log n)^2 } \le { 2 \over
  n }.
\end{eqnarray*} 
The result follows from this, as
$$G_1 (Y) = {\rm e}^K |Y| + K , $$
and $Y$ has a normal mixture distribution.\hfill $\sqcup \mkern -12mu \sqcap$

\subsection{Proof of Theorem \ref{theorem:consist-higd}}
On ${\cal T}$, defined in (\ref{setT}) with $\lambda_0 = c_4
\sqrt{\log^3n \log(p\vee n)/n}$ ($c_4$ as in Lemma \ref{lemm.setT}; i.e., $M_n = c_4
\sqrt{\log(n)}$ in (\ref{add1})), we have the basic inequality
$$\bar {\cal E} (\hat \psi \vert \psi_0) +
\lambda   \| \hat \phi \|_1 \le
T \lambda_0 \biggl [  ( \| \hat \phi - \phi_0 \|_1 + \|\hat  \eta - \eta_0 \|_2 
 ) \vee \lambda_0 
\biggr ] + \lambda \| \phi_0 \|_1 + \bar {\cal E} (\psi_0\vert \psi_0). $$
Note that $\|\hat{\eta} - \eta_0\|_2 \le 2K$ and $\bar {\cal E}
(\psi_0\vert \psi_0) = 0$. Hence, for $n$ sufficiently
large,
\begin{eqnarray*}
\bar {\cal E} (\hat \psi \vert \psi_0) +
\lambda   \| \hat \phi \|_1 &\le &
T \lambda_0 ( \| \hat \phi - \phi_0 \|_1 + 2K) + \lambda \| \phi_0 \|_1 +
\bar {\cal E} (\psi_0\vert \psi_0)\\ 
&\le & T \lambda_0 (\|\hat{\phi}\|_1 + \|\phi_0\|_1 + 2K) + \lambda \|
\phi_0 \|_1 + \bar {\cal E} (\psi_0\vert \psi_0),
\end{eqnarray*}
and therefore also 
\begin{eqnarray*}
\bar {\cal E} (\hat \psi \vert \psi_0) +
(\lambda - T \lambda_0)  \| \hat \phi \|_1 \le
T \lambda_0 2K + (\lambda + T \lambda_0)\| \phi_0 \|_1 + \bar {\cal E}
(\psi_0\vert \psi_0).
\end{eqnarray*}
It holds that $\lambda \ge 2 T \lambda_0$ (since $\lambda = C
\sqrt{\log^3n \log(p\vee n)/n}$ for some $C>0$ sufficiently large), $\lambda_0 =
O(\sqrt{\log^3n \log(p\vee n)/n})$ and 
$\lambda = O(\sqrt{\log^3n \log(p\vee n)/n})$, and due to the assumption about
$\|\phi_0\|_1$ we obtain on the set ${\cal T}$ that $\bar {\cal E} (\hat
\psi \vert \psi_0) \to \bar {\cal E} (\psi_0 \vert \psi_0) = 0\ (n
\to \infty)$. Finally, the set ${\cal T}$ has large probability, as shown
by 
Lemma \ref{cor.set} and using Proposition \ref{prop.FMR} and Lemma
\ref{lemm.setT} for FMR models. \hfill $\sqcup \mkern -12mu \sqcap$

\section{Proofs for Sections \ref{sec:penregr} and \ref{sec.optim}}\label{app2} 
\subsection{Proof of Proposition \ref{prop:bounded}}
We restrict ourselves to a two class mixture with k = 2. Consider the function
$u(\xi)$ defined as
\begin{align}\label{eq:proof.bounded}
u(\xi)=&\exp(\ell^{(0)}_{\mathrm{pen}}(\xi))\nonumber\\
=&\prod_{i=1}^{n}\Bigg\{\Bigg(\frac{\pi_1}{\sqrt{2\pi}\sigma_{1}}\exp\Big(\frac{-(Y_{i}-X^T_{i}\beta_{1})^{2}}{2\sigma_{1}^{2}}\Big)+\frac{(1-\pi_1)}{\sqrt{2\pi}\sigma_{2}}\\
&\times\exp\Big(\frac{-(Y_{i}-X^T_{i}\beta_{2})^{2}}{2\sigma_{2}^{2}}\Big)\Bigg)\exp\Big(\frac{-\lambda\|\beta_{1}\|_{1}}{\sigma_{1}}\Big)\exp\Big(\frac{-\lambda\|\beta_{2}\|_{1}}{\sigma_{2}}\Big)\Bigg\}.\nonumber
\end{align}
We will show that $u(\xi)$ is bounded from above for
$\xi=(\beta_1,\beta_2,\sigma_{1},\sigma_{2},\pi_1)\in\Xi=\R^{2p}\times\R_{>0}^{2}\times[0,1]$. Then,
clearly, $-n^{-1}\ell^{(0)}_{\mathrm{pen}}(\theta)$ is bounded from below for
$\theta=(\phi_1,\phi_2,\rho_{1},\rho_{2},\pi_1)\in\Theta=\R^{2p}\times\R_{>0}^{2}\times(0,1)$.

The critical point for unboundedness is if we choose for an arbitrary
sample point $i \in \{1,\ldots,n\}$ a $\beta^{*}_{1}$ such that
$Y_{i}-X_{i}^T\beta^{*}_{1}=0$ and let $\sigma_{1}\rightarrow
0$. Without the penalty term
$\exp(-\lambda\frac{\|\beta^{*}_{1}\|_{1}}{\sigma_{1}})$ in
(\ref{eq:proof.bounded}) the function would tend to infinity as
$\sigma_{1}\rightarrow0$. But as $Y_{i}\neq0$ for all $i \in \{1,\ldots,n\}$,
$\beta^{*}_{1}$ cannot be zero, and therefore
$\exp(-\lambda\frac{\|\beta^{*}_{1}\|_{1}}{\sigma_{1}})$ forces
$u(\xi)$ to tend to 0 as $\sigma_{1}\rightarrow 0$.

Let us give a more formal proof for the boundedness of $u(\xi)$. Choose a small $0<\eps_{1}<\min\limits_i\{Y^{2}_{i}\}$ and $\eps_{2}>0$. As $Y_{i}\neq0$, $i=1,\ldots,
n$, there exists a small constant $m>0$ such that
\begin{equation}\label{eq:proof.bounded1} 0<\min\limits_i\{Y^{2}_{i}\}-\eps_{1}\leq(Y_{i}-X_{i}^T\beta_{1})^{2}\end{equation}
holds for all $i=1,\ldots, n$ as long as $\|\beta_{1}\|_{1}<m$, and 
\begin{equation}\label{eq:proof.bounded2}0<\min\limits_i\{Y^{2}_{i}\}-\eps_{1}\leq(Y_{i}-X_{i}^T\beta_{2})^{2}\end{equation} holds for all $i=1,\ldots, n$ as long as $\|\beta_{2}\|_{1}<m$.

Furthermore, there exists a small constant $\delta>0$ such that
\begin{equation}\label{eq:proof.bounded3}\frac{1}{\sigma_{1}}\exp\Big(-(\min\limits_i\{Y^{2}_{i}\}-\eps_{1})/2\sigma_{1}^{2}\Big)<\eps_{2}\;\textrm{and}\;\frac{1}{\sigma_{1}}\exp\bigg(\frac{-\lambda
    m}{\sigma_{1}}\bigg)<\eps_{2}\end{equation}
hold for all $0<\sigma_{1}<\delta$, and 
\begin{equation}\label{eq:proof.bounded4}\frac{1}{\sigma_{2}}\exp\Big(-(\min\limits_i\{Y^{2}_{i}\}-\eps_{1})/2\sigma_{2}^{2}\Big)<\eps_{2}\;\textrm{and}\;\frac{1}{\sigma_{2}}\exp\bigg(\frac{-\lambda
    m}{\sigma_{2}}\bigg)<\eps_{2}\end{equation}
hold for all $0<\sigma_{2}<\delta$.

Define the set
$K=\{(\beta_1,\beta_2,\sigma_{1},\sigma_{2},\pi_1)\in\Xi;\ \delta\leq\sigma_{1},\sigma_{2}\}$.
Now $u(\xi)$ is trivially bounded on $K$. From the
construction of $K$ and Equations (\ref{eq:proof.bounded1})-(\ref{eq:proof.bounded4}), we easily see that $u(\xi)$ is also bounded on $K^{c}$,
and therefore bounded on $\Xi$.\hfill $\sqcup \mkern -12mu \sqcap$

\subsection{Proof of Theorem \ref{th.BCD-GEM}}
The density of the complete data is given by
\[f_{\mathrm{c}}(Y,\Delta|\theta)=\prod_{i=1}^{n}\prod_{r=1}^{k}\pi_{r}^{\Delta_{i,r}}\left(\frac{\rho_{r}}{\sqrt{2\pi}}e^{-\frac{1}{2}(\rho_{r}Y_{i}-X_{i}^T\phi_{r})^{2}}\right)^{\Delta_{i,r}}
,\]
whereas the density of the observed data is given by
\[f_{\mathrm{obs}}(Y|\theta)=\prod_{i=1}^{n}\sum_{r=1}^{k}\pi_{r}\frac{\rho_{r}}{\sqrt{2\pi}}e^{-\frac{1}{2}(\rho_{r}Y_{i}-X_{i}^T\phi_{r})^{2}},
\]
\begin{eqnarray*}
&&\theta=(\rho_{1},\ldots,\rho_{k},\phi_{1,1},\phi_{1,2},\ldots,\phi_{k,p},\pi)\in\Theta,\quad \Theta=\R_{>0}^{k}\times\R^{kp}\times\Pi\subset\R^{k(p+2)-1}, 
\end{eqnarray*}
with
\begin{align*}
&\Pi= \{\pi=(\pi_{1},\ldots,\pi_{k-1}); \pi_r > 0\ \mbox{for}\ r=1,\ldots
,k-1\ \mbox{and} \ \sum_{r=1}^{k-1} \pi_r < 1\},\quad\pi_{k}=1-\sum_{r=1}^{k-1}\pi_{r}.
\end{align*}
Furthermore, the conditional density  of the complete data given the
observed data is given
by $k(Y,\Delta|Y,\theta)=f_{\mathrm{c}}(Y,\Delta|\theta)/f_{\mathrm{obs}}(Y|\theta)$. Then, the
penalized negative log-likelihood fulfills the equation
\begin{eqnarray}\label{eq:property1}
\nu_{\mathrm{pen}}(\theta)&=& -n^{-1} \ell^{(0)}_{\mathrm{pen},\lambda}(\theta)\nonumber\\
&=&-n^{-1} \log f_{\mathrm{obs}}(Y|\theta)+\lambda\sum_{r=1}^{k}\|\phi_{r}\|_{1}\\
&=&\mathop{Q_{\mathrm{pen}}}(\theta|\theta')-\mathop{H}(\theta|\theta') \nonumber
\end{eqnarray}
where $\mathop{Q_{\mathrm{pen}}}(\theta|\theta')=-n^{-1} \EE_{\theta'}[\log
f_{\mathrm{c}}(Y,\Delta|\theta)|Y]+\lambda\sum_{r=1}^{k}\|\phi_{r}\|_{1}$
(compare Section \ref{subsec:emalgmix}) and
$\mathop{H}(\theta|\theta')=-n^{-1} \EE_{\theta'}[\log k(Y,\Delta|Y,\theta)|Y]$.

By Jensen's inequality, we get the following important relationship
\begin{equation}\label{eq:property2}
\mathop{H}(\theta|\theta')\geq \mathop{H}(\theta'|\theta') \qquad\forall\quad
  \theta\in \Theta,
\end{equation}
see also \cite{wu}.
$\mathop{Q_{\mathrm{pen}}}(\theta|\theta')$ and
$\mathop{H}(\theta|\theta')$ are continuous functions in $\theta$ and
$\theta'$. If we think of them as functions of $\theta$ with fixed $\theta'$,
we write also $\mathop{Q_{\mathrm{pen},\theta'}}(\theta)$ and
$\mathop{H}_{\theta'}(\theta)$. Furthermore,
$\mathop{Q_{\mathrm{pen},\theta'}}(\theta)$ is a convex function of $\theta$ and
strictly convex in each coordinate of $\theta$. As a last preparation, we give a definition of a stationary point for
non-differentiable functions (see also \cite{tseng}):
\begin{definition}
Let $u$ be a function defined on an open set $U \subset
\R^{k(p+2)-1}$. A point $x \in U$ is called stationary if
$$u'(x;d)=\lim_{\alpha \downarrow 0} \frac{u(x+\alpha d)-u(x)}{\alpha}\geq0
\quad \forall d\in\R^{k(p+2)-1}.$$
\end{definition}

We are now ready to start with the proof which is inspired by
\cite{Bertsekas_nonlin}. We write $$\theta=(\theta_1,\ldots,\theta_D)=(\rho_1,\ldots,\rho_k,\phi_{1,1},\phi_{1,2},\ldots,\phi_{k,p},\pi),$$ where
$D=k+kp+1$ denotes the number of coordinates. Remark that the first $D-1$
coordinates are univariate, whereas $\theta_{D}=\pi$ is a ``block
coordinate'' of dimension $k-1$.

\begin{proof} 
Let $\theta^{m}=\theta^{(m)}$ be the sequence generated by the BCD-GEM
algorithm. We need to prove that for a converging subsequence
$\theta^{m_{j}}\rightarrow \bar{\theta}\in\Theta$, $\bar{\theta}$ is a
stationary point of $\nu_{\mathrm{pen}}(\theta)$. Taking directional derivatives of Equation
(\ref{eq:property1}) yields
\[\nu_{\mathrm{pen}}'(\bar{\theta};d)=\mathop{Q'_{\mathrm{pen},\bar{\theta}}}(\bar{\theta};d)-\langle\nabla\mathop{H_{\bar{\theta}}}(\bar{\theta}),d\rangle.\]
Note that $\nabla\mathop{H_{\bar{\theta}}}(\bar{\theta})=0$ as
$\mathop{H}_{\bar{\theta}}(x)$ is minimized for $x = \bar{\theta}$
(Equation (\ref{eq:property2})). Therefore, it remains to show that
$\mathop{Q'_{\mathrm{pen},\bar{\theta}}}(\bar{\theta};d)\geq 0$ for all directions $d$.
Let
\[z_{i}^{m}=(\theta_{1}^{m+1},\ldots,\theta_{i}^{m+1},\theta_{i+1}^{m},\ldots,\theta_{D}^{m}).\]
Using the definition of the algorithm, we have
\begin{equation}\label{eq:alg}
\mathop{Q_{\mathrm{pen},\theta^{m}}}(\theta^{m})\geq \mathop{Q_{\mathrm{pen},\theta^{m}}}(z_{1}^{m})\geq \cdots \geq
\mathop{Q_{\mathrm{pen},\theta^{m}}}(z_{D-1}^{m})\geq \mathop{Q_{\mathrm{pen},\theta^{m}}}(\theta^{m+1}).
\end{equation}
Additionally, from the properties of GEM (Equation (\ref{eq:property1}) and (\ref{eq:property2})), we have
\begin{equation}\label{eq:monoton}
\nu_{\mathrm{pen}}(\theta^{0})\geq \nu_{\mathrm{pen}}(\theta^{1})\geq\ldots\geq \nu_{\mathrm{pen}}(\theta^{m})\geq \nu_{\mathrm{pen}}(\theta^{m+1}).
\end{equation}
Equation (\ref{eq:monoton}) and the converging subsequence imply that the sequence $\{\nu_{\mathrm{pen}}(\theta^{m});m=0,1,2,\ldots\}$
converges to $\nu_{\mathrm{pen}}(\bar{\theta})$. Further, we have
\begin{eqnarray}\label{eq:key}
0&\leq&\mathop{Q_{\mathrm{pen},\theta^{m}}}(\theta^{m})-\mathop{Q_{\mathrm{pen},\theta^{m}}}(\theta^{m+1})\nonumber\\
&&\nonumber\\
&=&\nu_{\mathrm{pen}}(\theta^{m})-\nu_{\mathrm{pen}}(\theta^{m+1})+\underbrace{\mathop{H_{\theta^{m}}}(\theta^{m})-\mathop{H_{\theta^{m}}}(\theta^{m+1})}_{\leq
  0}\nonumber\\
&\leq&\underbrace{\nu_{\mathrm{pen}}(\theta^{m})-\nu_{\mathrm{pen}}(\theta^{m+1})}_{\rightarrow
  \nu_{\mathrm{pen}}(\bar{\theta})-\nu_{\mathrm{pen}}(\bar{\theta})=0}.
\end{eqnarray}
We conclude that the sequence
$\{\mathop{Q_{\mathrm{pen},\theta^{m}}}(\theta^{m})-\mathop{Q_{\mathrm{pen},\theta^{m}}}(\theta^{m+1});m=0,1,2,\ldots\}$
converges to zero. 

We now show that $\{\theta_{1}^{m_{j}+1}-\theta_{1}^{m_{j}}\}$ converges to
zero ($j\to\infty$). Assume the contrary, in particular that $\{z_{1}^{m_{j}}-\theta^{m_{j}}\}$
does not converge to~0. Let
$\gamma^{m_{j}}=\|z_{1}^{m_{j}}-\theta^{m_{j}}\|$. Without loss of
generality (by restricting to a subsequence), we may assume that there exists
some $\bar{\gamma}>0$ such that $\gamma^{m_{j}}>\bar{\gamma}$ for all $j$.
Let
$s_{1}^{m_{j}}=\frac{z_{1}^{m_{j}}-\theta^{m_{j}}}{\gamma^{m_{j}}}$. This $s_{1}^{m_{j}}$ differs from zero only along the first component. As $s_{1}^{m_{j}}$ belongs to a compact set ($\|s_{1}^{m_{j}}\|=1$) we may assume that $s_{1}^{m_{j}}$ converges to $\bar{s}_{1}$.
Let us fix some $\eps \in [0,1]$. Notice that $0\leq \eps
\bar{\gamma}\leq \gamma^{m_{j}}$. Therefore, $\theta^{m_{j}}+
\eps\bar{\gamma} s_{1}^{m_{j}}$ lies on the segment joining $\theta^{m_{j}}$
and $z_{1}^{m_{j}}$, and belongs to $\Theta$ because $\Theta$ is convex. As
$\mathop{Q}_{\mathrm{pen},\theta^{m_{j}}}(\cdot)$ is convex and $z_{1}^{m_{j}}$ minimizes this function over all values that differ from $\theta^{m_{j}}$ along the first coordinate, we obtain
\begin{eqnarray}\label{eq:alg2} 
\mathop{Q_{\mathrm{pen},\theta^{m_{j}}}}(z_{1}^{m_{j}})&=&\mathop{Q_{\mathrm{pen},\theta^{m_{j}}}}(\theta^{m_{j}}+
\gamma^{m_{j}} s_{1}^{m_{j}})\nonumber\\
&\leq&\mathop{Q_{\mathrm{pen},\theta^{m_{j}}}}(\theta^{m_{j}}+\eps \bar{\gamma}
s_{1}^{m_{j}})\\
&\leq& \mathop{Q_{\mathrm{pen},\theta^{m_{j}}}}(\theta^{m_{j}}).\nonumber
\end{eqnarray}
From Equation (\ref{eq:alg}) and (\ref{eq:alg2}), we conclude
\begin{eqnarray*}
0&\leq& \mathop{Q_{\mathrm{pen},\theta^{m_{j}}}}(\theta^{m_{j}})-\mathop{Q_{\mathrm{pen},\theta^{m_{j}}}}(\theta^{m_{j}}+\eps
\bar{\gamma}s_{1}^{m_{j}})\\
&\stackrel{(\ref{eq:alg2})}{\leq}&\mathop{Q_{\mathrm{pen},\theta^{m_{j}}}}(\theta^{m_{j}})-\mathop{Q_{\mathrm{pen},\theta^{m_{j}}}}(z_{1}^{m_{j}})\\
&\stackrel{(\ref{eq:alg})}{\leq}& \mathop{Q_{\mathrm{pen},\theta^{m_{j}}}}(\theta^{m_{j}})-\mathop{Q_{\mathrm{pen},\theta^{m_{j}}}}(\theta^{m_{j}+1}).
\end{eqnarray*}
Using (\ref{eq:key}) and continuity of $\mathop{Q_{\mathrm{pen},x}}(y)$ in both
arguments $x$ and $y$, we conclude by taking the limit $j\rightarrow \infty$:
\begin{eqnarray*}
\mathop{Q_{\mathrm{pen},\bar{\theta}}}(\bar{\theta}+\eps \bar{\gamma}
\bar{s}_{1})&=&\mathop{Q_{\mathrm{pen},\bar{\theta}}}(\bar{\theta})\quad \forall
\eps \in [0,1].\\
\end{eqnarray*}
Since $\bar{\gamma}\bar{s}_{1}\neq 0$ this contradicts the strict convexity
of
$\mathop{Q_{\mathrm{pen},\bar{\theta}}}(x_{1},\bar{\theta}_{2},\ldots,\bar{\theta}_{D})$
as a function of the first block-coordinate. This contradiction establishes
that $z_{1}^{m_{j}}$ converges to $\bar{\theta}$.

From the definition of the algorithm, we have
\[
\mathop{Q_{\mathrm{pen}}}(z_{1}^{m_{j}}|\theta^{m_{j}})\leq
\mathop{Q_{\mathrm{pen}}}(x_{1},\theta_{2}^{m_{j}},\ldots,\theta_{D}^{m_{j}}|\theta^{m_{j}})\qquad
\forall x_{1}.
\]
By continuity and taking the limit $j\rightarrow \infty$, we obtain
\[
\mathop{Q_{\mathrm{pen},\bar{\theta}}}(\bar{\theta})\leq
\mathop{Q_{\mathrm{pen},\bar{\theta}}}(x_{1},\bar{\theta}_{2},\ldots,\bar{\theta}_{D})\qquad
\forall x_{1}.
\]
Repeating the argument we conclude that $\bar{\theta}$ is a coordinate-wise
minimum. Therefore, following \cite{tseng}, $\bar{\theta}$ is easily seen to be a stationary point of
$\mathop{Q_{\mathrm{pen},\bar{\theta}}}(\cdot)$, in particular
$\mathop{Q'_{\mathrm{pen},\bar{\theta}}}(\bar{\theta};d)\geq 0$ for all directions~$d$.
\end{proof}

\end{appendices}

\vspace{0.5cm}
{\bf Acknowledgements} {N.S. acknowledges financial support from Novartis International AG, Basel, Switzerland.}

\end{document}